\newcommand{\be}{\begin}
\newcommand{\e}{\end}
\newcommand{\beq}{\begin{equation}}
\newcommand{\eeq}{\end{equation}}
\newcommand{\beqs}{\begin{equation*}}
\newcommand{\eeqs}{\end{equation*}}
\newcommand{\bal}{\begin{align}}
\newcommand{\eal}{\end{align}}
\newcommand{\bals}{\begin{align*}}
\newcommand{\eals}{\end{align*}}
\newcommand{\nn}{\nonumber}
\newcommand{\ol}{\overline}
\newcommand{\la}{\left\langle}
\newcommand{\ra}{\right\rangle}
\renewcommand{\l}{\left}
\renewcommand{\r}{\right}
\newcommand{\T}{\mathbb{T}}
\newcommand{\N}{\mathbb{N}}
\newcommand{\Q}{\mathbb{Q}}
\renewcommand{\d}{\mathrm{d}} %straight d plus space for integrals
\newcommand{\mf}{\mathfrak}
\newcommand{\tor}{\T}
\newcommand{\set}[1]{\mathbb{#1}}
\newcommand{\curly}[1]{\mathcal{#1}}
\newcommand{\R}{\set{R}}
\newcommand{\C}{\set{C}}
\newcommand{\Z}{\set{Z}}
\newcommand{\eps}{\epsilon}
\newcommand{\al}{\alpha}
\newcommand{\de}{\delta}
\newcommand{\vp}{\varphi}
\newcommand{\ind}{\mathbbm{1}}		%indicator function
\newcommand{\pr}{\mathbb{P}}		%double-lined probablity
\newcommand{\supp}{\,\mathrm{supp}\,}
\newcommand{\del}{\partial}
\newcommand{\spec}{\mathrm{spec}\,}
\def\one{\mathds{1}}
\renewcommand\ln{\log}  % I hate \ln  -  third author
\newcommand{\Tr}{\mathrm{Tr}}	%Trace
\newtheorem{thm}{Theorem}[section]
\newtheorem{cor}[thm]{Corollary}
\theoremstyle{definition}
\newtheorem{defn}[thm]{Definition}
\numberwithin{equation}{section}
\theoremstyle{remark}
\newtheorem{rmk}[thm]{Remark}
\def\dotuline{\bgroup
  \ifdim\ULdepth=\maxdimen  % Set depth based on font, if not set already
   \settodepth\ULdepth{(j}\advance\ULdepth.4pt\fi
  \markoverwith{\begingroup
  \advance\ULdepth0.08ex
  \lower\ULdepth\hbox{\kern.15em .\kern.1em}%
  \endgroup}\ULon}
\def\dashuline{\bgroup
  \ifdim\ULdepth=\maxdimen  % Set depth based on font, if not set already
   \settodepth\ULdepth{(j}\advance\ULdepth.4pt\fi
  \markoverwith{\kern.15em
  \vtop{\kern\ULdepth \hrule width .3em}%
  \kern.15em}\ULon}
\def\nn{\nonumber}
\def\oddsum{\quad o\!\!\!\!\!\sum}
\def\evensum{\quad e\!\!\!\!\!\sum}
\def\oddsumm{\quad o\!\!\!\!\!\!\!\!\!\!\!\!\sum}
\def\evensumm{\quad e\!\!\!\!\!\!\!\!\!\!\!\sum}
\begin{document}

\title[Effective approach for the skew shift]{Effective multi-scale approach to the Schr\"{o}dinger cocycle over a skew shift base} 

\author{R.\ Han, M.\ Lemm, W.\ Schlag}

\begin{abstract}
We prove a conditional theorem on the positivity of the Lyapunov exponent for a Schr\"{o}dinger cocycle over a skew shift base with a cosine potential and the golden ratio as frequency. For coupling below $1$, which is the threshold for Herman's subharmonicity trick, we formulate three conditions on the Lyapunov exponent in a finite but large volume and on the associated large deviation estimates at that scale. 
Our main results demonstrate that these finite-size conditions imply the positivity of the infinite volume Lyapunov exponent. This paper shows that it is possible to make the techniques developed for the study of Schr\"{o}dinger operators with deterministic potentials,  based on large deviation estimates and the avalanche principle, effective. 
\end{abstract}

\address{Institute for Advanced Study, School of Math, 1 Einstein Drive, Princeton, NJ 08540, U.S.A.}

\thanks{The authors thank the Institute for Advanced Study, Princeton, for its hospitality during the 2017-18 academic year.  The third  author was partially supported by the NSF, DMS-1500696. The authors thank James Maynard and Silvius Klein for helpful conversations, and Jean Bourgain and Thomas Spencer for their interest in this work. }

\maketitle

\section{Introduction}
The study of Lyapunov exponents occupies a central role in ergodic theory and dynamical systems. They arise in a multitude of distinct settings, such as diffemorphisms on a manifold, 
chaotic dynamics in nonlinear systems as exhibit by  the standard map, cocycles defined over some base, and the theory of localization. Perhaps the most fundamental question about Lyapunov exponents relates to their simplicity. 
Or more quantitatively, to the gaps between them. In the case of $SL_2(\R)$ cocycles this amounts to the question of positivity of the top Lyapunov exponent. Another much studied property of these exponents concerns 
their continuity relative to external parameters. For a beautiful introduction to this field see the textbook~\cite{Viana}. 

This paper studies Schr\"{o}dinger cocycles 
\begin{align*}
(x,v) \in X\times\R^2 &\mapsto (Tx,A_\lambda(x,E)v)\\
A_\lambda(x,E) &= \left[\begin{matrix}
                   \lambda f(x) - E & -1 \\ 1 & 0
                  \end{matrix} \right] \in SL_2(\R)
\end{align*}
where $(X,\mu,T)$ is some ergodic system, $\lambda, E\in \R$ and $f:X\to\R$ is measurable. These cocycles arise in the spectral analysis of the operators
\[
 \big(H_{\lambda,x}\psi\big)_n = \psi_{n+1}+\psi_{n-1} + \lambda f(T^n x) \psi_n, \quad n\in \Z
\]
Indeed, solutions of $H_{\lambda,x}\psi = E\psi$ are given by 
\begin{align}
 \binom{\psi_{n+1}}{\psi_n} &= M_n(x;\lambda, E) \binom{\psi_{1}}{\psi_{0}}, \label{AM}\\ 
M_n(x;\lambda, E) &=  \prod_{j=n}^1 A_\lambda(T^j x,E), \quad n\ge 1   \nn
 \end{align}
The growth of solutions to~\eqref{AM} $\mu$-a.e.\ in~$x$ is governed by the Lyapunov exponent 
\[
 L(\lambda,E) = \lim_{n\to\infty} n^{-1} \int_X \log \|M_n(x;\lambda,E)\|\, \mu(dx)
\]
which always exists by subadditivity. By unimodularity of the matrices, $L(\lambda,E)\ge0$. The main issue is then to determine strict positivity. We remark that by classical
ergodic theory (F\"{u}rstenberg-Kesten theorem, Kingman's subadditive ergodic theorem~\cite{Viana}), 
\[
  n^{-1}   \log \|M_n(x;\lambda,E)\| \to L(\lambda,E) \qquad \mu\text{\ \ a.s.}
\]
as $n\to\infty$. F\"{u}rstenberg's theorem~\cite{Fur}, shows that $L>0$ for all $\lambda, E$ for $T$ the Bernoulli shift and $\mu$ a nontrivial probability distribution. Herman's subharmonicity argument~\cite{Her},
which is recalled in Section~\ref{sec:Herman}, shows that $L(\lambda,E)\ge \log\lambda>0$ if $\lambda>1$, $X=\T$, $f(x)=2\cos(2\pi x)$, and $Tx=x+\omega$ a rotation (for general analytic $f$ and large $\lambda$ see~\cite{SoSp}) . On the other hand, 
one has $L(\lambda,E)=0$ for all $0<\lambda<1$ and $E\in \spec(H_{\lambda,x})$. The latter is the spectrum of the Harper or almost Mathieu operator 
\[
 \big(H_{\lambda,x}\psi\big)_n = \psi_{n+1}+\psi_{n-1} +2\lambda \cos(2\pi (x+n\omega)) \psi_n, \quad n\in \Z
\]
which does not depend on $x$ (assuming $\omega$ irrational). In particular, $L(\lambda,0)=0$, cf.~\cite{BelSim,Dam}. 

In contrast to the Harper operator, its analog over the skew-shift base is conjectured to exhibit positive Lyapunov exponents for all $\lambda>0$ and~$E$. To be specific, let $X=\tor^2$,
$T(x,y)=(x+y,y+\omega)$, where $\omega$ is irrational (or Diophantine). Iterating $T$ yields
\beq\label{Mnskew}
\begin{aligned}
M_n(x,y;\lambda, E) &=  \prod_{j=n}^1 \left[\begin{matrix}
                   2\lambda f(x+jy+j(j-1)\omega/2, y+j\omega) - E & -1 \\ 1 & 0
                  \end{matrix} \right] 
\end{aligned}
\eeq
The presence of $j^2\omega/2$ in these matrices appears to be the origin of the conjectured exponential growth of the norm of these matrices for all $E$ (assuming $\partial_x f(x,y)\not\equiv 0$, with $f$ analytic). 
In fact, the distribution of the fractional parts of $\{j^2\omega\}_{j=1}^N$ is known to be ``random'' in some sense as $N\to\infty$ for generic $\omega$, see the Poissonian 
conjecture in~\cite{RSZ}, as well as~\cite{MS,DRH}. Note that this is in stark contrast to the distribution of $\{j\omega\}_{j=1}^N$. 

However, not only is this randomness property in and of itself delicate (see some negative results to this effect in~\cite{RSZ}), but how to use it in the context of~\eqref{Mnskew} is 
entirely unclear. As far as rigorous results are concerned, Bourgain~\cite{B2} proved that for all $\lambda>0$ there exists a set of $\omega\in\tor$ with positive measure 
(which decreases to~$0$ as $\lambda\to0$), 
so that the operator
\[
 (H \psi)_n = \psi_{n+1} + \psi_{n-1} + \lambda\cos\big( n(n-1)\omega/2\big) \psi_n
\]
exhibits point spectrum whose closure has positive measure. This was the first result of its kind which showed that for small $\lambda$ the skew shift leads to completely different behavior than the shift, i.e.,
potentials $\cos( n \omega)$.

In this paper we present an effective multi-scale machinery aiming at positivity of the Lyapunov exponent for the matrices 
\beq\label{Mnskew2}
M_n(x,y;\lambda, E) =  \prod_{j=n}^1 \left[\begin{matrix}
                   2\lambda \cos\big(2\pi(x+jy+j(j-1)\omega/2)\big) - E & -1 \\ 1 & 0 
                  \end{matrix} \right] 
\eeq
uniformly in $E$, and in the range $0<\lambda\le 1$. We fix $\omega$ to be the golden ratio. By the aforementioned estimate by Herman, one has $L(\lambda,E)\ge\log\lambda>0$ for $\lambda>1$. So only $\lambda\le 1$ is of interest here. 
The basis of our analysis is the inductive argument from~\cite{BGS}, which established Anderson localization for large $\lambda$ for the skew shift model, at the expense of removing a small
set (in measure) of frequencies $\omega$ and phases $(x,y)$ (the largeness of $\lambda$ depended on the smallness of the measure of excluded parameters). The proof in~\cite{BGS} is not effective,
and it was not possibly to explicitly determine the size of admissible $\lambda$ in relation to the other parameters. 

To formulate our main results, recall the finite-volume Lyapunov exponents
\beq\nn
L_N(\lambda, E):=\int_{\T^2} \frac{1}{N} \log{\|M_N(x,y;\lambda, E)\|}\, dx dy,
\eeq
and their limits $L=\lim_{N\to\infty}L_{N}$. 
We quantify the failure of the F\"{u}rstenberg-Kesten theorem via the following level sets $\mathcal{B}_N$, 
\beq\nn
\mathcal{B}_N:=\left\lbrace (x,y)\in \T^2:\ \l| \frac{1}{N}\log{\|M_N(x,y;\lambda, E)\|}-L_N(\lambda, E) \r|>\frac{1}{10} L_N(\lambda, E) \right\rbrace. 
\eeq

The machinery developed in this paper establishes a method for checking the positivity of the Lyapunov exponent $L(\lambda,E)$ by verifying information on a finite, initial scale. We could have formulated a very general ``finite-size criterion'' which establishes $L(\lambda,E)>0$ under appropriate assumptions on the initial scale and for appropriate values of various other parameters. Instead, we have opted to present three representative theorems that can be obtained from the machinery developed in this paper by making specific choices.

These representative theorems differ by the precise assumptions (i)-(iii) made at the initial scale. We comment on this further after the first theorem. Moreover, the various other parameters appearing in our proof are identical in all three cases. These parameters are only chosen in the final part of the proof, Section \ref{sec:numbers}, so they can easily be modified.

For a Borel set,  $|\cdot|$ denotes the Lebesgue measure. 

\begin{thm}\label{thm:main}
Consider the skew-shift cocycle given by \eqref{Mnskew2}.
Let $\omega$ be the golden ratio and let $\lambda \in [1/2, 1]$.
Let $N_0:=2\times {10^{37}}$.
Assume that for some energy $E\in [-2-2\lambda, 2+2\lambda]$ the following hold:
\begin{enumerate}[label=(\roman*).]
\item $L_{N_0}(\lambda, E)\geq 2\times 10^{-4}$,
\item $L_{N_0}(\lambda, E)-L_{2N_0}(\lambda, E)\leq L_{N_0}(\lambda, E)/8$,
\item $\max{(|\mathcal{B}_{N_0}|, |\mathcal{B}_{2N_0}|)}\leq N_0^{-21}$.
\end{enumerate}
Then we have 
\[L(\lambda, E)\geq \frac{1}{2} L_{N_0}(\lambda, E)>0.\]
\end{thm}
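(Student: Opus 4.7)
My plan is to prove the theorem by a quantitative multi-scale induction in the spirit of Bourgain--Goldstein--Schlag, made fully effective. The three hypotheses (i)--(iii) at the initial scale $N_0$ are designed as the base case of an induction that propagates analogous bounds to a geometric sequence of scales $N_0 < N_1 < N_2 < \ldots$ with $N_{k+1} = n_k N_k$ for explicit multipliers $n_k$. At each scale $N_k$ I would maintain three properties: $(A_k)$ $L_{N_k} \geq \tfrac{1}{2} L_{N_0}$, $(B_k)$ $L_{N_k} - L_{2N_k} \leq \delta_k L_{N_0}$ with the $\delta_k$ forming a small summable series, and $(C_k)$ $|\mathcal{B}_{N_k}| \leq N_k^{-\tau}$ for a fixed exponent $\tau > 0$. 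Hypotheses (i)--(iii) constitute the base case at $k = 0$, and the conclusion $L(\lambda,E) \geq \tfrac{1}{2} L_{N_0}$ follows by passing to the limit in $(A_k)$.

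The inductive step from $N = N_k$ to $N' = n_k N$ is driven by the Goldstein--Schlag \emph{avalanche principle}, applied to the natural factorization $M_{n_k N}(z) = B_{n_k}(z) \cdots B_1(z)$ with $B_j(z) := M_N(T^{(j-1)N} z)$. For $z$ outside a set of measure $O(n_k N_k^{-\tau})$ (the union of orbit translates of $\mathcal{B}_N$ and $\mathcal{B}_{2N}$), condition $(C_k)$ gives $\|B_j\| \geq \exp(\tfrac{9}{10} N L_N)$, and the pair products $B_{j+1} B_j = M_{2N}(T^{(j-1)N} z)$ satisfy $\|B_{j+1} B_j\| \geq \exp(\tfrac{9}{5} N L_{2N})$. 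By $(B_k)$ the ratio $\|B_{j+1} B_j\|/(\|B_{j+1}\|\,\|B_j\|)$ is controllably close to $1$ on the log scale, which is precisely the hypothesis of the avalanche principle. It then outputs the asymptotic identity
\begin{equation*}
\log \|M_{n_k N}(z)\| = \sum_{j=1}^{n_k - 1} \log \|B_{j+1} B_j\| - \sum_{j=2}^{n_k - 1} \log \|B_j\| + O\bigl(n_k e^{-c N L_N}\bigr).
\end{equation*}
Integrating over $\T^2$, dividing by $n_k N$, and invoking $(B_k)$ yields $L_{n_k N} \geq L_N - 2(L_N - L_{2N}) - \mathrm{err}$, which preserves $(A_k)$ up to a summable loss.

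The more delicate task is to establish the new large-deviation bound $(C_{k+1})$. Here I would exploit that $u_N(z) := \tfrac{1}{N}\log\|M_N(x,y;\lambda,E)\|$ is plurisubharmonic in $(x,y)$ after complexification in $x$, so that the Riesz representation yields quantitative Fourier decay of $u_N - L_N$. Substituting the avalanche-principle identity into $u_{n_k N} - L_{n_k N}$ expresses this deviation as a skew-shift ergodic average of the previous-scale deviations $u_N - L_N$ and $u_{2N} - L_{2N}$. The measure of the new bad set $\mathcal{B}_{n_k N}$ is then bounded by combining $(C_k)$ at the previous scale with a quantitative BMO-type inequality for subharmonic functions along the skew-shift orbit. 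The golden-ratio assumption enters precisely here: its explicit continued-fraction expansion delivers the Diophantine constants needed to obtain a polynomial bound $|\mathcal{B}_{N_{k+1}}| \leq N_{k+1}^{-\tau}$ with the \emph{same} exponent $\tau$ as at the previous scale.

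The main obstacle is the demand for explicit numerical constants throughout, rather than asymptotic symbols. The multiplier $n_k$, the decay rate of $\delta_k$, the exponent $\tau$, and the astronomical initial scale $N_0 = 2 \times 10^{37}$ are strongly coupled and must be jointly tuned so that the induction step strictly preserves $(A_k)$--$(C_k)$. In particular, the effective equidistribution rate of the skew-shift orbit over the first $\sim N_0$ iterates has to be quantitatively balanced against the exponentially small tails produced by the avalanche principle, and against the requirement that $\sum_k \delta_k < \tfrac{1}{4}$ so that $(A_k)$ does not degrade below $\tfrac{1}{2} L_{N_0}$. This numerical bookkeeping, deferred to the section where the specific parameter choices are made, is what distinguishes the present effective theorem from the qualitative multi-scale arguments previously used for the skew-shift.
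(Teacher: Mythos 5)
Your proposal is correct and follows essentially the same multi-scale route as the paper: base-case hypotheses (i)--(iii) at $N_0$, an inductive step driven by the effective avalanche principle to propagate the lower bound on $L_{N_k}$ and the difference $L_{N_k}-L_{2N_k}$, and a propagation of the large-deviation estimate via the Riesz representation for pluri-subharmonic functions combined with a quantitative John--Nirenberg inequality for ergodic averages along the golden-ratio skew-shift orbit. The paper packages this exactly as you describe, with the scales chosen as $N_{j+1}=N_j^9$ and the bookkeeping of explicit constants carried out in Section~\ref{sec:numbers}.
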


Before we give the two alternative theorems, we comment on conditions (i)-(iii).

\be{rmk}\label{rmk:main}
\begin{enumerate}[label=(\roman*).]
\item
First, one might expect that $L(\lambda,E)>c\lambda^{2}$ holds for small $\lambda$, by analogy with the Figotin-Pastur  asymptotics in the random case.  Numerical experimentation suggests that  is indeed the case for our model with $c>10^{-2}$ (with a generous margin of error). Therefore, we would expect to have $L_{N_0}(\lambda, E)\geq 2\times 10^{-3}$. Condition~(i) was chosen to  allow for an even wider  margin. We remark that we can lower the number $2\times 10^{-4}$ to basically any positive constant, at the expense of increasing $N_{0}$. 
\item
Condition~(ii) is known to hold if the Lyapunov exponent is positive and $N_{0}$ is large enough. Indeed, it follows from the methods in~\cite{GS} that $$L_{N_0}(\lambda, E)-L_{2N_0}(\lambda, E)\leq cL_{N_0}(\lambda, E)/N_{0}$$ with some absolute constant $c\sim1$, see Section~\ref{sec:MS} below for the details. Given the size of $N_{0}$, condition~(ii) is indeed asking for very little. 
\item
Finally, condition (iii) is some weak form of a {\em large deviation estimate} as in~\cite{BG, GS, BGS}. In fact, analogy with these references suggests that a bound of the form  $|\mathcal{B}_{N}| < \exp(-N^{\frac{1}{10}}) $ should hold for large $N$ (and perhaps a much stronger bound, say with $N^{\frac12}$ or larger). For (iii) to hold in this case would then require $N>8\cdot 10^{31}$, which is within our range.  It is important to note that condition (iii) differs strongly from (i) and~(ii). Indeed, while the latter conditions are intimately related to the $L>0$, (iii) is not. For the shift dynamics it is known that the large deviation estimates hold apriori, i.e., without any reference to the positivity of the Lyapunov exponent, see~\cite{BG, GS, B1}. For the skew shift, however, such apriori derivations are currently not known. Rather, we rely an inductive procedure that uses lower bounds on $L_n$, the Lyapunov exponents in finite volume. 
\end{enumerate}
\e{rmk}

We now state two further representative theorems. These alternative finite-size criteria both involve much smaller initial scales $N_0$, at the price of having a more restrictive assumption (iii) on the measure of the set $\curly{B}_{N_0}$. 

\begin{thm}\label{thm:main2}
Consider the skew-shift cocycle given by \eqref{Mnskew2}.
Let $\omega$ be the golden ratio and let $\lambda \in [1/2, 1]$.
Let $N_0:=3\times 10^5$. Assume that for some energy $E\in [-2-2\lambda, 2+2\lambda]$ the following hold:
\begin{enumerate}[label=(\roman*).]
\item $L_{N_0}(\lambda, E)\geq 2\times 10^{-4}$,
\item $L_{N_0}(\lambda, E)-L_{2N_0}(\lambda, E)\leq L_{N_0}(\lambda, E)/8$,
\item $\max{(|\mathcal{B}_{N_0}|, |\mathcal{B}_{2N_0}|)}\leq N_0^{-141}$.
\end{enumerate}
Then we have 
\[L(\lambda, E)\geq \frac{1}{2} L_{N_0}(\lambda, E)>0.\]
\end{thm}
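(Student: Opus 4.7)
The plan is to apply the same multi-scale induction scheme used to prove Theorem \ref{thm:main}, merely re-tuning the numerical parameters to accommodate the smaller initial scale $N_0 = 3\times 10^5$ together with the strengthened measure exponent in condition~(iii). As noted just before Theorem \ref{thm:main2}, the underlying machinery of the paper is set up to depend parametrically on an initial scale and on an exponent governing the size of the excluded sets, and specific numerical substitutions are deferred to Section~\ref{sec:numbers}. One therefore need not reprove anything from scratch, only verify that the explicit inequalities that drive the induction continue to close with the new pair $(N_0,141)$ in place of $(N_0,21)$.

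Concretely, I would set up a sequence of scales $N_0 < N_1 < N_2 < \cdots$, with $N_{k+1}$ of the order $N_k^{c}$ for some $c>1$ chosen in Section~\ref{sec:numbers}, and propagate inductively three pieces of data at each stage: a quantitative lower bound $L_{N_k}(\lambda,E) \geq \tfrac{1}{2}L_{N_0}(\lambda,E)$, a convexity estimate $L_{N_k}(\lambda,E) - L_{2N_k}(\lambda,E) \leq \delta_k$ with $\delta_k$ summable in $k$, and a large deviation bound $|\mathcal{B}_{N_k}| \leq N_k^{-\alpha_k}$ with exponents $\alpha_k$ that may degrade slightly from step to step but remain bounded below. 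The central tool for passing from scale $N_k$ to $N_{k+1}$ is the avalanche principle applied to the decomposition of $M_{N_{k+1}}(x,y;\lambda,E)$ as a long product of transfer matrices at scale $N_k$. Outside a controlled bad set in $(x,y)$ (obtained by unioning translates of $\mathcal{B}_{N_k}$ under the skew shift), each factor in the product has norm comparable to $\exp(N_k L_{N_k})$ and consecutive products remain non-degenerate, so the avalanche-principle telescoping formula allows one to compute $\log\|M_{N_{k+1}}\|$ up to a small error. Integrating in $(x,y)$ yields the lower bound on $L_{N_{k+1}}$, while a Fubini argument in the $x$-variable, combined with the skew-shift structure, converts the measure bound at scale $N_k$ into the corresponding bound at scale $N_{k+1}$.

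The main obstacle is verifying that the very first induction step actually closes with $N_0 \sim 3\times 10^5$ rather than $2\times 10^{37}$. The avalanche principle requires that the measure of the bad set, multiplied by the combinatorial factor coming from the number of sub-blocks of length $N_0$ needed to reach scale $N_1$, be small compared with the Lyapunov gap $L_{N_0}(\lambda,E) \geq 2\times 10^{-4}$ coming from condition~(i); with only $N_0 \sim 10^5$ this combinatorial loss is non-negligible, and the exponent $141$ in condition~(iii) has precisely the role of absorbing it together with the dynamical loss stemming from the non-uniform distribution of the skew-shift orbit at moderate scales. Once this first step is checked, by substituting $(N_0,141)$ into the inequalities of Section~\ref{sec:numbers} and confirming that they still hold with room to spare, the remainder of the argument is a verbatim copy of that for Theorem~\ref{thm:main}: subsequent iterations involve rapidly growing scales where the inequalities become progressively easier, the losses $\delta_k$ in the convexity estimate sum geometrically to something bounded by $\tfrac{1}{2}L_{N_0}(\lambda,E)$, and passage to the limit $k\to\infty$ delivers $L(\lambda,E) \geq \tfrac{1}{2}L_{N_0}(\lambda,E) > 0$.
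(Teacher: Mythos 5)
Your high-level plan---re-run the multi-scale induction of Theorem~\ref{thm:main} with re-tuned parameters, treating the first step as the bottleneck---is the right one, and this is indeed how the paper proceeds. However, there is a genuine gap in how you propose to close that first step. You take the scales to grow like $N_{k+1}\sim N_k^{c}$ for some fixed $c>1$. Starting from $N_0=3\times 10^{5}$ this cannot work: the inductive inequality \eqref{eq:NjNj+1''}, which requires the deviation threshold $5.5\times 10^{4}\,\tilde N^{-3/40}(\log\tilde N)^{53/10}$ to be dominated by $\tfrac{1}{20}L_{N_0}$, already forces every scale $N_j$ with $j\ge 1$ to be at least of order $10^{334}$, and \eqref{eq:NjNj+1} independently forces $N_1\gtrsim 10^{186}$. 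So the first jump must be enormous, roughly $N_1\sim N_0^{61}$, while the subsequent ratios are the modest $N_{j+1}=N_j^{9}$. The exponent $141$ in condition~(iii) is then dictated precisely by the requirement $N_0^{-141}\le N_1^{-2.3}$ for this particular huge $N_1$ (note $2.3\times 61\approx 140$), rather than being a generic margin for ``combinatorial loss''.

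The more substantive omission is the mechanism that makes such a large first jump admissible in Proposition~\ref{prop:main9}. That proposition carries the upper-bound constraint $N/(\log N)^{92/3}<\tfrac12\big(\tfrac{203}{22}e^{a/2}n\big)^{40/3}$ in \eqref{eq:nN}, which with $n=N_0\sim 10^{5}$ and the default value $a=7$ caps $N_1$ far below $10^{334}$. The paper's resolution is to take $a=60$ in the very first application of Proposition~\ref{prop:main9}, which is legitimate precisely because assumption~(i) together with $N_0=3\times 10^{5}$ yields $N_0L_{N_0}\ge 60$; the factor $e^{a/2}=e^{30}$ in \eqref{eq:nN} then opens the window for $N_1=3\times 10^{334}$. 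After this single anomalous first step the argument reverts to $a=7$ with the same scales as in Theorem~\ref{thm:main}, and your description of the remaining iterations is accurate. Without the $a=60$ observation---which uses condition~(i) quantitatively, not merely as a positivity input---the scheme would fail to close at the first step.
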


The upper bound in assumption (iii) is more restrictive than in Theorem \ref{thm:main2}. Importantly, it is still polynomial in nature. Hence, in view of Remark \ref{rmk:main} (iii), it may hold depending on the precise kind of exponential decay that is presumably exhibited by the true $|\mathcal{B}_{N}|$.

In the next representative result, Theorem \ref{thm:main3}, we strengthen assumption (i) somewhat (in a way that is compatible with the numerics described in Remark \ref{rmk:main} (i) above). This allows to reduce the initial scale even further,  to the value $N_0=3\times 10^4$, which may be amenable to numerical investigation. 

\begin{thm}\label{thm:main3}
Consider the skew-shift cocycle given by \eqref{Mnskew2}.
Let $\omega$ be the golden ratio and let $\lambda \in [1/2, 1]$.
Let $N_0:=3\times 10^4$. Assume that for some energy $E\in [-2-2\lambda, 2+2\lambda]$ the following hold:
\begin{enumerate}[label=(\roman*).]
\item $L_{N_0}(\lambda, E)\geq 2\times 10^{-3}$,
\item $L_{N_0}(\lambda, E)-L_{2N_0}(\lambda, E)\leq L_{N_0}(\lambda, E)/8$,
\item $\max{(|\mathcal{B}_{N_0}|, |\mathcal{B}_{2N_0}|)}\leq N_0^{-165}$.
\end{enumerate}
Then we have 
\[L(\lambda, E)\geq \frac{1}{2} L_{N_0}(\lambda, E)>0.\]
\end{thm}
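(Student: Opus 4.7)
The proof of Theorem \ref{thm:main3} will proceed by the same multi-scale induction as Theorems \ref{thm:main} and \ref{thm:main2}; the three statements differ only in the numerical parameters chosen in Section \ref{sec:numbers}, while the underlying machinery is identical. The plan is to run an induction on a sequence of geometrically increasing scales $N_{0} < N_{1} < N_{2} < \cdots$, at each step propagating a triple of quantitative properties that mirror (i)--(iii): a Lyapunov lower bound $L_{N_{k}}(\lambda, E) \ge \gamma_{k}$ with $\gamma_{k}$ decreasing to a limit $\gamma_{\infty} \ge \tfrac{1}{2} L_{N_{0}}(\lambda, E)$, an ``almost-superadditivity'' estimate $L_{N_{k}} - L_{2 N_{k}} \le \tfrac{1}{8} L_{N_{k}}$, and a polynomial large-deviation bound $|\curly{B}_{N_{k}}| \le N_{k}^{-\sigma_{k}}$ with $\sigma_{k}$ bounded away from $0$. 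Passing $k \to \infty$ and using $L_{N_{k}} \to L$ will yield the claimed inequality $L(\lambda, E) \ge \tfrac{1}{2} L_{N_{0}}(\lambda, E)$.

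The engine driving each inductive step is the avalanche principle applied to the block factorization
\[
M_{nN_{k}}(x,y) \;=\; M_{N_{k}}\bigl(T^{(n-1)N_{k}}(x,y)\bigr) \cdots M_{N_{k}}\bigl(T^{N_{k}}(x,y)\bigr)\, M_{N_{k}}(x,y).
\]
I will use assumption (ii) together with the LDT bound (iii) at the pair of scales $(N_{k}, 2 N_{k})$ to argue that, for $(x,y)$ outside a small exceptional set, the individual block norms are at least $\exp(9 L_{N_{k}} N_{k}/10)$ and the adjacent pair products $M_{2 N_{k}}$ do not collapse in norm, so the hypotheses of the avalanche principle are satisfied. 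The principle then delivers the telescoping identity
\[
\log \|M_{nN_{k}}(x,y)\| = \sum_{j=0}^{n-2} \log\|M_{2 N_{k}}(T^{jN_{k}}(x,y))\| - \sum_{j=1}^{n-2} \log\|M_{N_{k}}(T^{jN_{k}}(x,y))\| + \mathrm{error},
\]
where the error term is of order $n \exp(-c L_{N_{k}} N_{k})$. Integrating in $(x,y)$ and letting $n$ be large will convert this into $L_{nN_{k}} \ge 2 L_{2 N_{k}} - L_{N_{k}} - o(1)$, and assumption (ii) then yields $L_{nN_{k}} \ge \tfrac{3}{4} L_{N_{k}}$ up to negligible corrections, preserving property~(i) with only mild degradation across scales.

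The hardest part will be the propagation of the large-deviation estimate (iii), i.e.\ establishing a polynomial bound $|\curly{B}_{N_{k+1}}| \le N_{k+1}^{-\sigma_{k+1}}$ at the new scale $N_{k+1} = n N_{k}$. The identity above shows that $(x,y) \in \curly{B}_{N_{k+1}}$ can only occur if the avalanche principle hypotheses fail along the orbit $\{T^{j N_{k}}(x,y)\}_{j=0}^{n-1}$, or the arithmetic mean of $\log \|M_{2 N_{k}}\|$ along this orbit deviates appreciably from its expected value. For the standard shift base this would be handled by Denjoy--Koksma; for the skew shift $T(x,y) = (x+y, y+\omega)$ the orbit $\{T^{jN_{k}}(x,y)\}$ is governed by quadratic Weyl-type phases in $j$, so the needed equidistribution is subtler and must be extracted quantitatively from the Diophantine properties of the golden ratio $\omega$. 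These estimates will fix the admissible step size $n = N_{k+1}/N_{k}$ and the loss $\sigma_{k} \to \sigma_{k+1}$ in the polynomial exponent.

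Finally, I will verify that the parameters $\gamma_{k}$ and $\sigma_{k}$ can be chosen so that the induction may be iterated indefinitely while keeping $\gamma_{\infty} \ge \tfrac{1}{2} L_{N_{0}}$ and $\sigma_{\infty} > 0$. The distinction between Theorem \ref{thm:main3} and Theorems \ref{thm:main}, \ref{thm:main2} lies entirely in this bookkeeping: the stronger initial lower bound $L_{N_{0}} \ge 2 \cdot 10^{-3}$ permits a more favourable accounting of the per-step losses, so the entire scheme bootstraps from the much smaller initial scale $N_{0} = 3 \cdot 10^{4}$, provided the LDT hypothesis is correspondingly strengthened to $|\curly{B}_{N_{0}}|, |\curly{B}_{2 N_{0}}| \le N_{0}^{-165}$. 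This quantitative optimization is the content of Section \ref{sec:numbers}.
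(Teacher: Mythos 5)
Your proposal correctly identifies the paper's strategy for Theorem \ref{thm:main3}: a multi-scale induction in which Lemma \ref{lm:induction'} and Lemma \ref{lm:sequencescales} (via the avalanche principle) propagate the Lyapunov lower bound and the almost-superadditivity condition, while Proposition \ref{prop:main9} (built on the skew-shift Weyl-sum estimates of Proposition \ref{lm:longsumwithtrivialbdd}) propagates the large-deviation bound at each new scale. The one point worth sharpening is the mechanism by which the stronger hypothesis (i) buys the smaller $N_0$: it is not so much that the per-step losses at scales $N_j$, $j\ge 1$, improve (those are governed by $a=7$ uniformly across all three theorems), but rather that $N_0 L_{N_0}\ge 60$ lets one take $a=60$ in the very first application of Proposition \ref{prop:main9}, which loosens the upper constraint on $N_1/N_0$ in \eqref{eq:nN} via the factor $e^{a/2}$, and this in turn is what allows the first scale jump $N_0=3\times 10^4 \to N_1=3\times 10^{320}$; the correspondingly large ratio $N_1/N_0$ then forces the more restrictive exponent $N_0^{-165}$ in assumption (iii) so that $|\mathcal{B}_{N_0}|\le N_1^{-23/10}$ is met.
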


Regarding assumption (iii), the comment made after Theorem \ref{thm:main2} still applies. In particular, the relatively small value of $N_0$ in this result elevates the problem of finding an {\bf analytical proof} of~(iii). Indeed, (i) and~(ii) are accessible numerically by a Figotin-Pastur expansion, for example, but it seems completely unreasonable to ask for a computer assisted proof of~(iii). 

\smallskip

The restriction $\lambda \in [1/2, 1]$ was chosen for convenience. In fact, our methods apply to any given interval of the form $[\lambda_{0},1]$, $\lambda_{0}>0$, albeit with increasing~$N_{0}$  as $\lambda_{0}\to0$.  Similarly, the golden ratio was chosen for simplicity. One can replace it by a class of Diophantine frequencies obeying an explicit Diophantine condition. 

\smallskip

  It remain to be seen what the true range of applicability of our methods are, and to which extent they can also be refined. It may be possible to verify assumptions (i) and (ii) of Theorem \ref{thm:main3} numerically. However, the measure estimates (iii) would seem the most delicate to  check reliably. 

\smallskip

The methods in this paper are an adaptation of those in \cite{GS,BGS,B1}. One of our motivations was to obtain an effective rendition of the techniques based on harmonic analysis (subharmonic functions, Riesz representation theorem, John-Nirenberg type estimates for $BMO$ functions) in combination with linear algebra and the geometry of matrix products (``avalanche principle'',~\cite{GS}). This had never been attempted before, but we show that it is possible to do so.

\section{Effective Riesz Representation}

It is of fundamental importance to the entire method to make the underlying potential theory effective. To this end it is most convenient to remain on the disk since other geometries will lead to complicated Green functions. The disk will suffice for our purposes, thanks to a variant of Herman's regularization \cite{Her}, which we present in Section \ref{sec:Herman}.

\begin{defn}\label{def:Poisson}
Given $R>0$, we write $D_R$ for the open disk of radius $R$ around the origin in $\C$. Let $z=re(\phi)$, we write $P_z(\theta)$ for the Poisson kernel
$$
P_z(\theta):=\frac{1-r^2}{1-2r\cos{(2\pi (\phi-\theta))}+r^2}.
$$
The following constants will be used throughout, with $1<R_{2}<R_{1}<R$: 
\beq\label{def:constantsB0B1B2}
\begin{aligned}
B_0(R,R_1,R_2):=&
\frac{1}{2\log(R/R_1)}\left(\frac{R_1+R_2}{R_1-R_2}\right)
\times
\be{cases}
\log R \qquad &\textnormal{if } R^2-R_1^2>R,\\
\log\l(\frac{R^2}{R^2-R_1^2}\r), \qquad &\textnormal{if } R^2-R_1^2<R.
\e{cases}\\
B_1(R,R_1,R_2):=&B_0(R,R_1,R_2) \frac{8R_2}{R_2^2-1},\\\
B_2(R,R_1,R_2):=&B_0(R,R_1,R_2)\\
&\frac{16\pi  (R_2^2-1)\sqrt{16 R_2^2-\l(\sqrt{R_2^4+34 R_2^2+1}-1-R_2^2\r)^2}}{\l(3R_2^2+3-\sqrt{R_2^4+34 R_2^2+1}\r)^2}
\end{aligned}
\eeq
as well as
\beq\label{def:constantsB3B4}
\begin{aligned}
B_3(R,R_1,R_2) := \sqrt{5B_2(R,R_1,R_2)+\frac{10\pi}{\log{R/{R_1}}}}.
\end{aligned}
\eeq
\end{defn}

The main result of this section is the following Riesz representation theorem for subharmonic functions. The essential feature here are the explicitly computable constants.  Recall that a subharmonic function in some domain $\Omega\subset\C$ is an upper semicontinuous function $u:\Omega\to\R\cup\{-\infty\}$ which satisfies the sub-mean value property in $\Omega$. 

\be{thm}\label{Riesz}
Let $1<R_2<R_1<R$ and let $v:\ol{D_R}\to \R\cup\{-\infty\}$ be a subharmonic function satisfying 
\beq
v(z)\leq B,\qquad v(0)=m.
\eeq
Then, for all $w\in D_{R_1}$, we have the Riesz representation
\beq
\label{eq:thmriesz}
\begin{aligned}
v(w)=&\int_{D_{R_1}} \log\l|z-w\r|\mu(\d z)+h(w),
\end{aligned}
\eeq
where
\be{enumerate}[label=(\roman*)]
\item $\mu$ is a positive measure satisfying the bound
\beq\label{est:muthm11}
\mu(D_{R_1})\leq \frac{B-m}{\log(R/R_1)},
\eeq
\item $h$ is harmonic on $D_{R_1}$ and satisfies the following bounds
\beq\label{est:hthm11}
\begin{aligned}
\min_{c\in \R}\max_{|w|\leq R_2}\ |h(w)-c|&\leq B_0(R,R_1,R_2)(B-m),\\
\l|\frac{d^k}{d\varphi^k}h(e(\varphi))\r| &\leq B_k(R,R_1,R_2)(B-m), \qquad (k=1,2),
\end{aligned}
\eeq
\e{enumerate}
\e{thm}

The proof of this theorem will occupy the rest of this section. 

\subsection*{Proof of Theorem \ref{Riesz}}
The basic idea of the proof is that the equation $\mu=\frac{1}{2\pi}\Delta v$ holds in the distributional sense, with $\mu$ a  positive measure.
Without loss of generality, we may assume that $v$ is smooth. If this is not the case, we convolve $v$ with a radial nonnegative mollifier. The submean property then guarantees monotone convergence. We skip these technical details.

\subsection{Riesz representation}

Rescaling the unit disk yields the Green function on any disk. 

\be{lm}[Green's function for the disk]\label{lm:green}
The function $G:D_R\times D_R\to \R$ given by
$$
G(z,w):=\frac{1}{2\pi }\log\l|\frac{R(z-w)}{R^2-z\ol{w}}\r|
$$
satisfies $\Delta_z G(z,w)=\delta_w$ and $G(z,w)=0$ when $|z|=R$.
\e{lm}
\be{proof}
To see this, notice that $G(z,w)=G_1(z/R,w/R)$ where $$G_1(z,w):=\frac{1}{2\pi }\log\l|\frac{z-w}{1-z\ol{w}}\r|$$ is the Green function of the unit disk.
\e{proof}

Let $w\in D_{R_1}$. By Green's second identity for the domain $D_R$, we have
$$
v(w)-\int_{D_R}G(z,w) \Delta v(z)\, \mathrm{Vol}(\d z)
= \int_{\del D_R} v(z) \frac{\del G}{\del n_z}(z,w) \, \sigma(\d z), 
$$
where $\mathrm{Vol}$ is the standard volume measure and $\sigma$ is the (unnormalized) arclength measure on the circle $\del D_R$. Since $v$ is smooth and subharmonic, $\Delta v$ is a non-negative, continuous function, call it $2\pi \mu$. Therefore

\begin{align}\label{eq:green}
v(w)=\int_{D_R}2\pi G(z,w)\, \mu(\d z)+h_0(w),
\end{align}
where
\begin{align}
h_0(w):=\int_{\del D_R} v(z) \frac{\del G}{\del n_z}(z,w) \, \sigma(\d z).
\end{align}

By Lemma \ref{lm:green}, we then have Riesz representation with the functions
\beq\label{eq:riesz}
v(w)=\int_{D_{R_1}}\log{|z-w|}\, \mu(\d z)+h(w),
\eeq
where
\beq\label{eq:hdefn}
h(w):=\int_{D_R\setminus D_{R_1}} \log{\l|\frac{R(z-w)}{R^2-z\overline{w}}\r|}\, \mu(\d z)+\int_{D_{R_1}} \log\l|\frac{R}{R^2-z\ol{w}}\r|\, \mu(\d z)+h_0(w)
\eeq

\be{lm}
$h(w)$ is harmonic in $D_{R_1}$.
\e{lm}

\be{proof} 
Write $w=re^{2\pi i \varphi}$. The first and second term in \eqref{eq:hdefn} are harmonic because they are real parts of analytic functions on $D_{R_1}$. For the third term, recall that $\frac{\del G}{\del n_z}(z,w)$ is  the Poisson kernel whence 
\beq\label{eq:h0=poisson}
h_0(w)=\int_{0}^1 v(Re(\theta)) P_{r/R}(\varphi-\theta)\, \d\theta.
\eeq
The Poisson kernel is harmonic in all of $D_R$ and this proves the lemma.
\e{proof}

\subsection{Control of the Riesz mass}
\be{lm}\label{lm:muest}
We have the following bound on the Riesz mass: 
\beq
\mu(D_{R_1})\leq \frac{B-m}{\log(R/R_1)}.
\eeq
\e{lm}

\be{proof}
Taking $w=0$ in \eqref{eq:green}, we see
$$
(\log{R/R_1}) \mu(D_{R_1})\leq \int_{D_R} \log{\frac{R}{|z|}}\, \mu(\d z)=h_0(0)-v(0)\leq B-m,
$$
in which we used 
\beq\label{eq:h0<B}
h_0(0)\leq B,
\eeq 
which comes from the maximum principle and the fact that $h_0(w)$ is the harmonic function on $D_R$ with boundary values $v({\del D_R})$ by \eqref{eq:h0=poisson}.
\e{proof}

\subsection{Control of the harmonic part}

We have the following estimate for the harmonic part.

\be{lm}\label{lm:hest}
Let $1<R_2<R_1<R$.
$$\min_{c\in \R}\max_{|w|\leq R_2}\ |h(w)-c|\leq B_0(R,R_1,R_2) (B-m),$$ with constant $B_0(R,R_1,R_2)$ given by~\eqref{def:constantsB0B1B2}.
\e{lm}
\be{proof}
We will first prove an upper bound and then use Harnack's inequality to conclude a lower bound.
From \eqref{eq:hdefn}, and  $G(z,w)\leq 0$ on $D_R\times D_R$,  
$$
h(w)\leq \int_{D_{R_1}}\log{\l| \frac{R}{R^2-z\overline{w}}\r|} \, \mu(\d z)+h_0(w).
$$
From \eqref{eq:h0<B}, we infer that for all $w\in D_{R_1}$,
$$
h(w)\leq \log{\l| \frac{R}{R^2-R_1^2}\r|} \mu(D_{R_1})+B.
$$
Now we distinguish cases. On the one hand, if $R<R^2-R_1^2$, then the logarithm is negative and \eqref{eq:hubb} implies
$$
h(w)\leq B.
$$
On the other hand, if $R>R^2-R_1^2$, then we use Lemma \ref{lm:muest} to obtain
$$
\begin{aligned}
h(w) \leq&\frac{1}{\log{R/{R_1}}}\left(B\log{\frac{R^2}{R_1(R^2-R_1^2)}}-m\log{\frac{R}{R^2-R_1^2}}\right)\\
=&\frac{m\log{R}-B\log{R_1}}{\log{R/{R_1}}}+\frac{\log{R^2/(R^2-R_1^2)}}{\log{R/{R_1}}}(B-m).
\end{aligned}
$$
Combining the two cases, we arrive at the upper bound 
\beq\label{eq:hubb}
\begin{aligned}
h(w) \leq\al,\qquad 
\al:=
\be{cases}
B,\qquad &\textnormal{if } R^2-R_1^2>R,\\
\frac{m\log{R}-B\log{R_1}}{\log{R/{R_1}}}+\frac{\log{R^2/(R^2-R_1^2)}}{\log{R/{R_1}}}(B-m),\qquad &\textnormal{if } R^2-R_1^2<R.
\e{cases}
\end{aligned}
\eeq
Consider the non-negative harmonic function $\alpha-h(w)$ on $D_{R_1}$. 
By Harnack's inequality,  
$$
\alpha -h(w)\leq \frac{R_1+|w|}{R_1-|w|}(\alpha -h(0)), 
$$
which implies the following lower bound
\beq\label{eq:hlbb1}
h(w)\geq \frac{R_1+|w|}{R_1-|w|} h(0)-\frac{2|w|}{R_1-|w|} \alpha.
\eeq
By \eqref{eq:riesz} with $w=0$ and \eqref{lm:muest}, we have
$$
h(0)=v(0)-\int_{D_{R_1}}\log{|z|}\, \mu (\d z)\geq m-\frac{\log{R_1}}{\log{R/{R_1}}}(B-m).
$$
Together with \eqref{eq:hlbb1}, this yields
\beq\label{eq:hlbb2}
h(w)\geq \frac{R_1+|w|}{R_1-|w|}\ \frac{m\log{R}-B\log{R_1}}{\log{R/{R_1}}}-\frac{2|w|}{R_1-|w|} \alpha.
\eeq
Based on \eqref{eq:hubb} and \eqref{eq:hlbb2}, we obtain
\begin{align*}
&\min_{c\in \R}\max_{|w|\leq R_2}\ |h(w)-c|\\
&\leq \frac{1}{2} \left( \alpha - \min_{|w|\leq R_2} \left(\frac{R_1+|w|}{R_1-|w|}\ \frac{m\log{R}-B\log{R_1}}{\log{R/{R_1}}}-\frac{2|w|}{R_1-|w|} \alpha \right) \right)\\
&=\frac{1}{2}\max_{|w|\leq R_2}\left(\frac{R_1+|w|}{R_1-|w|}\right)\left(\alpha-\frac{m\log{R}-B\log{R_1}}{\log{R/{R_1}}}\right)\\
&=\frac{1}{2}\left(\frac{R_1+R_2}{R_1-R_2}\right)\left(\alpha-\frac{m\log{R}-B\log{R_1}}{\log{R/{R_1}}}\right)\\
&= B_0(R,R_1,R_2)(B-m).
\end{align*}
This proves Lemma \ref{lm:hest}.
\e{proof}

\be{lm} 
For $k=1,2$, we have
$$
\l|\frac{d^k}{d\vp^k}h(e^{2\pi i \vp})\r|\leq B_k(R,R_1,R_2)(B-m)
$$
with constants $B_1, B_2$ given by \eqref{def:constantsB0B1B2}.
\e{lm}

\be{proof}
Since $h$ is harmonic in $D_{R_1}$, we have that for any constant $c$,
$$
h(e^{2\pi i \vp})-c=\int_0^1 (h(R_2 e^{2\pi i \theta})-c) P_{1/R_2}(\vp-\theta)\d\theta.
$$
We take a derivative in $\vp$ and estimate $h$ using Lemma \ref{lm:hest}. This gives
$$
\begin{aligned}
\l|\frac{d}{d\vp}h(e^{2\pi i \vp})\r|
\leq& \l(\min_{c\in \R} \max_{|w|\leq R_2} |h(w)-c|\r) \int_0^1\l|\frac{\partial}{\partial\vp}P_{1/R_2}(\vp-\theta)\r|\d\theta\\
\leq& B_0(R,R_1,R_2) (B-m)\int_0^1\l|\frac{\partial}{\partial\theta}P_{1/R_2}(\theta)\r|\d\theta.
\end{aligned}
$$
We recall that 
$$
P_{1/R_2}(\theta)=\frac{R_2^2-1}{R_2^2-2R_2 \cos(2\pi\theta)+1}
$$
and therefore
$$
\frac{\partial}{\partial\theta}P_{1/{R_2}}(\theta)
=-\frac{4\pi R_2(R_2^2-1) \sin{(2\pi \theta)}}{(R_2^2-2 R_2\cos{(2\pi \theta)}+1)^2}.
$$
Since $\sin(2\pi\theta)$ changes sign at $\theta=1/2$, we conclude that 
$$
\begin{aligned}
\int_0^1\l|\frac{\partial}{\partial\theta}P_{1/R_2}(\theta-\vp)\r|\d\theta
=& -\int_0^{1/2}\frac{\partial}{\partial\theta}P_{1/R_2}(\theta)\d\theta+\int_{1/2}^1\frac{\partial}{\partial\theta}P_{1/R_2}(\theta)\d\theta\\
=& 2P_{1/R_2}(0)-2P_{1/R_2}(1/2)\\
=& \frac{8R_2}{R_2^2-1}.
\end{aligned}
$$
This proves the claim for $k=1$.

For the second derivative, we argue similarly. We have
$$
\begin{aligned}
\l|\frac{d^2}{d\vp^2}h(e^{2\pi i \vp})\r|
\leq B_0(R,R_1,R_2) (B-m)\int_0^1\l|\frac{\partial^2}{\partial\theta^2}P_{1/R_2}(\theta)\r|\d\theta,
\end{aligned}
$$
where
$$
\begin{aligned}
\frac{\partial^2}{\partial\theta^2}P_{1/{R_2}}(\theta)
=\frac{-8\pi^2 R_2(R_2^2-1)(2R_2\cos^2{(2\pi\theta)}+(R_2^2+1)\cos{(2\pi\theta)}-4R_2)}{(R_2^2-2R_2\cos{(2\pi\theta)}+1)^3}.
\end{aligned}
$$
By symmetry, we may restrict our attention to $\theta\in [0,1/2]$ from now on. On that interval, the function $\frac{\partial^2}{\partial\theta^2}P_{1/{R_2}}$ has exactly one zero. Its location, call it $\theta_0\in [0,1/2]$, is given by
\beq\label{eq:theta0defn}
\theta_0=\frac{1}{2\pi}\arccos\l(\frac{\sqrt{R_2^4+34 R_2^2+1}-(R_2^{2}+1)}{4R_2}\r).
\eeq
It is easy to see that $\frac{\partial^2}{\partial\theta^2}P_{1/{R_2}}$ is negative on $[0,\theta_0)$, and hence positive on $(\theta_0,1/2]$. Therefore
$$
\begin{aligned}
\int_0^1\l|\frac{\partial^2}{\partial\theta^2}P_{1/R_2}(\theta)\r|\d\theta
=&-2\int_0^{\theta_0}\frac{\partial^2}{\partial\theta^2}P_{1/R_2}(\theta)\d\theta
+2\int_{\theta_0}^{1/2}\frac{\partial^2}{\partial\theta^2}P_{1/R_2}(\theta)\d\theta\\
=&2  \frac{\partial}{\partial\theta}P_{1/R_2}(0)-4\frac{\partial}{\partial\theta}P_{1/R_2}(\theta_0)+2  \frac{\partial}{\partial\theta}P_{1/R_2}(1/2)\\
=&\frac{16\pi R_2(R_2^2-1) \sin{(2\pi \theta_0)}}{(R_2^2-2 R_2\cos{(2\pi \theta_0)}+1)^2}.
\end{aligned}
$$
since $\frac{\partial}{\partial\theta}P_{1/R_2}(0)= \frac{\partial}{\partial\theta}P_{1/R_2}(1/2)=0$. 
When we evaluate the last expression using the definition \eqref{eq:theta0defn} of $\theta_0$, we obtain the quantity
$$
\frac{16\pi  (R_2^2-1)\sqrt{16 R_2^2-\l(1+R_2^2-\sqrt{R_2^4+34 R_2^2+1}\r)^2}}{\l(\sqrt{R_2^4+34 R_2^2+1}-3R_2^2-3\r)^2}.
$$
This proves the claim for $k=2$.
\e{proof}

\section{$\T^1$ Splitting Lemma}

For any $f\in L^1(\T)$, let $\la f\ra=\int_{\T}f(x)\, \d x$. For a function $f$ on $\C$, let us denote $f(e(x))$ by $f(x)$ for simplicity.
For a Borel set $U$, let $|U|$ be its Lebesgue measure.

\be{lm}\label{lm:T1Splitting}
Let $v$ be as in Theorem \ref{Riesz}. Assume that for some constant $c$
\beq
v(x)=v_1(x)+v_0(x) +c
\eeq
with  $\|v_1\|_{L^1(\T)}< \varepsilon_1$ and $\|v_0\|_{L^\infty(\T)}<\varepsilon_0$. Then we have
\beq\label{eq:expintegrable}
\begin{aligned}
\int_{\T} &\exp\big( \frac{\pi}{4}\delta_{0}^{-1}|v(x)-c| \big) \, \d x  \leq C_{0} \\
\delta_{0} &:= \frac{9}{2}\,\varepsilon_0+2B_3(R,R_1,R_2) \sqrt{\varepsilon_1 (B-m)} \\
C_{0} &:= 2\sqrt{2}\, \exp\Big(\pi\big [\frac{17}{144}+\frac{B_1}{16B_3^2}  \big]  \Big)
\end{aligned}
\eeq
with constants given by Definition~\ref{def:Poisson}. 
\e{lm}

As a corollary of the exponential integrability, we have the following estimate on the level sets from  Markov's inequality.

\be{cor}\label{cor:T1Splitting}
For any $\varepsilon_2>0$, we have
\beq\nn
|\{x\in \T: \ |v(x)|>\varepsilon_2\}|\leq 2\sqrt{2} \, \exp \Big(\frac{\pi}{4} \big [\frac{17}{144}+\frac{B_1}{16B_3^2}-\varepsilon_{2}\delta_{0}^{-1}   \big]  \Big)
\eeq
with $\delta_0$ as in \eqref{eq:expintegrable}. 
\e{cor}

Note that this level set estimate is only useful if $\varepsilon_{2}\gg \varepsilon_{0}$ and $\varepsilon_{2}^{2}\gg \varepsilon_1 (B-m)$.

\subsection*{Proof of Lemma \ref{lm:T1Splitting}}
For simplicity, we will denote $B_3(R,R_1,R_2)$ by $B_3$ throughout the proof.
We will first show the following special form of the Riesz representation,  valid only on the unit circle. The idea is simply to reflect the part of the disk outside the circle back inside of it. 

\be{lm}\label{lm:T1Riesz}
Let $1<R_2<R_1<R$ and $v$ be defined as in Theorem \ref{Riesz}. Then
there exists a positive measure $\tilde{\mu}$ and a harmonic function $\tilde{h}$ on $D_R$ such that
\beq\label{eq:T1Riesz}
v(e(\varphi))=\int_{\overline{D_1}}\log{|z-e(\varphi)|}\, \tilde{\mu}(\d z)+\tilde{h}(e(\varphi)),
\eeq
with the following estimates
\beq\label{est:tildemulemma}
\tilde{\mu}(\overline{D_1})\leq \frac{B-m}{\log{R/{R_1}}}.
\eeq
and $\tilde{h}$ satisfies the bound Lemma~\ref{lm:hest} on the circle as well as 
\beq\label{est:tildehlemma}
\l|\frac{d^k}{d \varphi^k}\tilde{h}(e(\varphi))\r|\leq B_k(R,R_1,R_2)(B-m),\ \ k=1,2,
\eeq
where $B_1,B_2$ are the same constants as those in Theorem \ref{Riesz}.
\e{lm}
\be{proof}
By Theorem \ref{Riesz}, we have
\beq\label{eq:reflec1}
v(w)=\int_{D_{R_1}}\log{|z-w|}\, \mu(\d z)+h(w),
\eeq
with $\mu(D_{R_1})\leq \frac{B-m}{\log{R/{R_1}}}$, and $\l|\frac{d^k}{d \varphi^k}h(e(\varphi))\r|\leq B_k(R,R_1,R_2)(B-m)$, $k=1,2$.

Let us define $\mu^*$ by reflection, i.e., 
\beq
\mu^*(E)=\mu (E^*),
\eeq
where
$$E^*=\{\overline{z^{-1}}:\ z\in E\}$$
for any measurable set $E\subset \C$. 
Then for any $|w|=1$,
\begin{align}\label{eq:tildemu}
  &\int_{D_{R_1}}\log{|z-w|}\, \mu(\d z)  \notag\\
=&\int_{\overline{D_1}}\log{|z-w|}\, \mu(\d z)+\int_{D_{R_1}\setminus \overline{D_1}}\log{|z-w|}\, \mu (\d z) \notag\\
=&\int_{\overline{D_1}}\log{|z-w|}\, \mu(\d z)+\int_{D_{R_1}\setminus \overline{D_1}}\log{|w-\overline{z^{-1}}|}\, \mu (\d z)+ \int_{D_{R_1}\setminus \overline{D_1}}\log{|z|}\, \mu (\d z) \notag \\
=&\int_{\overline{D_1}}\log{|z-w|}\, \mu(\d z)+\int_{D_1\setminus \overline{D_{1/R_1}}}\log{|w-z|}\, \mu^* (\d z)+ \int_{D_{R_1}\setminus \overline{D_1}}\log{|z|}\, \mu (\d z) \notag\\
=&\int_{\overline{D_1}}\log{|z-w|}\, \tilde{\mu}(\d z)+ \int_{D_{R_1}\setminus \overline{D_1}}\log{|z|}\, \mu (\d z),
\end{align}
where, for any $E\subset \overline{D_1}$,
\begin{align}\label{eq:mutildemu}
\tilde{\mu}(E)=\mu(E)+\mu^*(E\cap (D_1\setminus \overline{D_{1/R_1}}))=\mu(E)+\mu(E^*\cap (D_{R_1}\setminus \overline{D_1})).
\end{align}
By \eqref{eq:mutildemu}, it is clear that we have the following estimate for $\tilde{\mu}$
\begin{align}\label{eq:tildemuest}
\tilde{\mu}(\overline{D_1})= \mu(D_{R_1})\leq \frac{B-m}{\log(R/R_1)}.
\end{align}

By \eqref{eq:reflec1} and \eqref{eq:tildemu}, we have for $|w|=1$,
\begin{align}
v(w)=\int_{\overline{D_1}}\log{|z-w|}\, \tilde{\mu}(\d z)+h(w)+\int_{D_{R_1}\setminus \overline{D_1}}\log{|z|}\, \mu (\d z),
\end{align}
in which the third term is a constant. 
Let us take $\tilde{h}=h+\int_{D_{R_1}\setminus \overline{D_1}}\log{|z|}\, \mu (\d z)$. 
Since $\tilde{h}$ only differs from $h$ by a constant, the estimates on the derivatives still hold.
\e{proof}

The Riesz representation~\eqref{eq:T1Riesz} allows us to give upper bounds on the parameters $\varepsilon_{0}$ and $\varepsilon_{1}$ in Lemma~\ref{lm:T1Splitting} in terms of~$B-m$.  This will be relevant in the proof of that lemma. 

\begin{cor}
We may always assume in Lemma~\ref{lm:T1Splitting} that 
\beq\label{eq:eps01upper}
\varepsilon_{0} \le B_{0}(R,R_{1},R_{2})(B-m), \quad  \varepsilon_{1}\le \frac{13}{20} \frac{B-m}{\log(R/R_1)}
\eeq
Alternatively, we can assume that $\varepsilon_{0}=0$ and 
\beq\label{eq:eps1set}
\varepsilon_{1} = \Big(B_{0}(R,R_{1},R_{2})  +  \frac{13}{20\log(R/R_1)}    \Big)(B-m)
\eeq
\end{cor}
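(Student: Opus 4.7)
The plan is to take the canonical decomposition supplied by the $\T$-Riesz representation of Lemma~\ref{lm:T1Riesz}. I will write
\begin{equation*}
v(e(\varphi)) = v_1(\varphi) + v_0(\varphi) + c,\qquad
v_1(\varphi) := \int_{\overline{D_1}} \log|z-e(\varphi)|\,\tilde\mu(\d z),\qquad
v_0(\varphi) := \tilde h(e(\varphi)) - c,
\end{equation*}
with $c$ the optimal centering constant for $\tilde h$ furnished by Lemma~\ref{lm:hest} (applicable to $\tilde h$ since it differs from $h$ only by a constant, cf.\ the proof of Lemma~\ref{lm:T1Riesz}). Here $v_1$ plays the role of the ``singular'' $L^1$ piece and $v_0$ the smooth $L^\infty$ piece.

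For $v_0$, since $R_2>1$ the unit circle sits inside $\overline{D_{R_2}}$, so Lemma~\ref{lm:hest} gives immediately $\|v_0\|_{L^\infty(\T)}\le B_0(R,R_1,R_2)(B-m)$, which is the bound on $\varepsilon_0$.

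For $v_1$, Fubini together with the mass bound~\eqref{est:tildemulemma} reduce matters to showing $K := \sup_{|z|\le 1}\int_0^1 \big|\log|z-e(\varphi)|\big|\,\d\varphi < 13/20$. I will obtain this in two steps. First, the mean value property yields $\int_0^1 \log|z-e(\varphi)|\,\d\varphi=0$ for $|z|\le 1$, so the full $L^1$ integral equals twice the integral of $\log^+|z-e(\varphi)|$. The function $z\mapsto \int_0^1 \log^+|z-e(\varphi)|\,\d\varphi$ is subharmonic in $z$ (as an average of $\max\{0,\log|z-e(\varphi)|\}$, each subharmonic in $z$), so by the maximum principle and rotational invariance its supremum on $\overline{D_1}$ is attained at $z=1$. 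Second, using $|1-e(\varphi)|=2\sin(\pi\varphi)$ and the Clausen function, I compute
\begin{equation*}
\int_{1/6}^{5/6}\log(2\sin\pi\varphi)\,\d\varphi \;=\; \frac{\mathrm{Cl}_2(\pi/3)}{\pi}\;\approx\;0.3231,
\end{equation*}
so $K \le 2\mathrm{Cl}_2(\pi/3)/\pi \approx 0.6462 < 13/20$. Combined with $\tilde\mu(\overline{D_1})\le (B-m)/\log(R/R_1)$, this establishes the first claim.

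The alternative decomposition follows by absorbing $v_0$ into $v_1$: setting $v_0^{\mathrm{new}}\equiv 0$ and $v_1^{\mathrm{new}}:=v_0+v_1$, the triangle inequality combined with $\|v_0\|_{L^1(\T)}\le\|v_0\|_{L^\infty(\T)}$ (since $\T$ has unit mass) gives exactly~\eqref{eq:eps1set}. The main obstacle is pinning down the sharp constant $K<13/20$; the subharmonicity-based reduction to $|z|=1$ is essential, because the naive $L^1\le L^2$ bound obtained from the Fourier expansion $\log|z-e(\varphi)|=-\sum_{n\ge 1} r^n\cos(2\pi n\varphi)/n$ only yields $K\le\pi/\sqrt{12}\approx 0.907$, which overshoots $13/20$.
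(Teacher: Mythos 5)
Your proof is correct and follows the same overall architecture as the paper's: take the canonical decomposition from Lemma~\ref{lm:T1Riesz} with $v_0$ the harmonic piece (bounded by $B_0(B-m)$ via Lemma~\ref{lm:hest}) and $v_1$ the potential of $\tilde\mu$, then reduce the $L^1$-bound on $v_1$ by Fubini and the mass bound~\eqref{est:tildemulemma} to the scalar quantity $\sup_{|z|\le 1}\|\log|z-e(\cdot)|\|_{L^1(\T)}$, and show this supremum is attained at $|z|=1$. Both proofs also exploit the same mean-value identity $\int_0^1\log|z-e(\varphi)|\,d\varphi=0$ to halve the computation.

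Where you genuinely diverge from the paper is in how the reduction to $|z|=1$ is justified. The paper sets $h(r):=\|\log|r-e(\cdot)|\|_{L^1(\T)}$ and proves monotonicity directly by computing $h'(r)$ and exhibiting a pointwise nonnegative lower bound for the integrand on the region where $|r-e(\varphi)|\ge 1$. You instead observe that $F(z):=\int_0^1\log^+|z-e(\varphi)|\,d\varphi$ is subharmonic (as an integral of maxima of subharmonic functions), so the maximum principle immediately places its supremum on $\partial D_1$; rotational invariance then pins it at $z=1$. This is a cleaner, more conceptual argument and avoids the differentiation-under-the-integral-sign bookkeeping (moving endpoints etc.) that the paper has to handle. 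You also identify the final integral as $2\,\mathrm{Cl}_2(\pi/3)/\pi\approx 0.6462$, while the paper only records the inequality $<13/20$; both give the same numerical bound since $-4\int_0^{1/6}\log(2\sin\pi\varphi)\,d\varphi=2\,\mathrm{Cl}_2(\pi/3)/\pi$ by the vanishing of the full-period integral and symmetry. Your final remark about the $L^2$ bound giving only $\pi/\sqrt{12}\approx 0.907$ is a correct observation, and it underscores why the sharper $L^1$ computation is actually needed here, since the relevant inequality~\eqref{eq:epsklein} (the condition $289(B_0 + 13/(20\log R/R_1)) < B_3^2$) is sensitive to the constant.
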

\begin{proof}
In view of \eqref{eq:T1Riesz} we set 
\[
v_{0}(\varphi):=\tilde{h}(e(\varphi)),\qquad   v_{1}(\varphi)=\int_{\overline{D_1}}\log{|z-e(\varphi)|}\, \tilde{\mu}(\d z)
\]
Then $\varepsilon_{0}$ is the constant from  Lemma~\ref{lm:hest} and we claim that 
\[
\varepsilon_{1}:= \| \log|1-e(\varphi)| \|_{L^{1}_\varphi} \| \tilde\mu\| 
\]
is an admissible choice. Indeed, 
\begin{align*}
\| v_{1}\|_{L^{1}(\T)} &\le \int_{\overline{D_1}} \big\| \log{|z-e(\varphi)|} \big\|_{L^{1}(\T_{\varphi})} \tilde{\mu}(\d z) \\
 & =  \int_{\overline{D_1}} \big\| \log{||z| -e(\varphi)|} \big\|_{L^{1}(\T_{\varphi})} \tilde{\mu}(\d z) \\
 &\le \max_{0\le r\le 1} \big\| \log{|r -e(\varphi)|} \big\|_{L^{1}(\T_{\varphi})}\, \| \tilde\mu\|
\end{align*}
Set $h(r) :=  \big\| \log{|r -e(\varphi)|} \big\|_{L^{1}(\T_{\varphi})}$ with $0\le r\le1$. In order to establish the claim, it suffices to verify that $h(r)$ is nondecreasing. 
First,
\begin{align*}
\int_{0}^{1} \log{|r -e(\varphi)|} \, d\varphi & = \int_{0}^{1} \log{|1 -re(\varphi)|} \, d\varphi =0
\end{align*}
since $\log{|1 -r\zeta|}$ is harmonic in $\zeta$ for $|\zeta|<1$ and any fixed $0\le r\le 1$. Therefore, if $0<r<1$ and $0<\varphi_{0}(r)<\frac12$ is the unique solution of $|r -e(\varphi_{0})| =1$, then    
\[
h(r) = 2\int_{\varphi_{0}(r)}^{1-\varphi_{0}(r)}  \log{|r -e(\varphi)|} \, d\varphi = \int_{\varphi_{0}(r)}^{1-\varphi_{0}(r)}  \log(1+r^{2} -2r\cos(2\pi\varphi))\, d\varphi
\]
Consequently, 
\begin{align*}
h'(r) &= \int_{\varphi_{0}(r)}^{1-\varphi_{0}(r)}  \frac{2(r-\cos(2\pi\varphi))}{1+r^{2} -2r\cos(2\pi\varphi)}\, d\varphi  \\
&\ge \int_{\varphi_{0}(r)}^{1-\varphi_{0}(r)}  \frac{r}{1+r^{2} -2r\cos(2\pi\varphi)}\, d\varphi  \ge 0
\end{align*}
In the second line we used that on the domain of integration $$|r -e(\varphi)|^{2}=1+r^{2}-2r\cos(2\pi\varphi) \ge 1$$ whence $2r-2\cos(2\pi\varphi) \ge r $.
Therefore, indeed $h(r)\le h(1)$,  justifying our choice of~$\varepsilon_{1}$ above. 
Finally, 
\begin{align*}
h(1) &= \| \log|1-e(\varphi)| \|_{L^{1}_\varphi} = -2 \int_{0}^{1} \min( \log|1-e(\varphi)|,0)\, d\varphi\\
&= -2 \int_{-\frac16}^{\frac16}  \log|1-e(\varphi)| \, d\varphi = -4 \int_{0}^{\frac16} \log(2\sin(\pi \varphi))\, d\varphi < 13/20. 
\end{align*}
and $\| \tilde\mu\| $ is controlled by~\eqref{est:tildemulemma}. 
\end{proof}

\begin{defn}  \label{def:epssmall}
Henceforth we impose the condition that 
\beq\label{eq:epsklein}
289\big(B_{0}(R,R_{1},R_{2})+\frac{13}{20\log R/R_{1}}\big) < B_3^{2}(R,R_{1},R_{2})
\eeq
where the constants are those from Definition~\ref{def:Poisson}. 
\end{defn}

Returning to the proof of Lemma \ref{lm:T1Splitting}, we denote the first term in \eqref{eq:T1Riesz} by $u$, viz. 
\beq\label{def:u}
u(x)=\int_{\overline{D_1}}\log{|z-e(x)|}\, \tilde{\mu}(\d z).
\eeq
Then $v=u+\tilde{h}$.
For any $f\in L^1(\T)$ with $\la f\ra=0$, the anti-derivative $D^{-1}f$ is uniquely defined as the absolutely continuous function 
\beq\label{def:anti-de}
(D^{-1}f)(t)=\int_{0}^t f(x)\, \d x + m(f),\quad \la D^{-1}f\ra=0,
\eeq
for arbitrary $t\in \T$. The constant $m(f)$ is chosen  to ensure the vanishing mean.  For  $e_{n}(x)=\exp(2\pi in x)$, one has   $D^{-1}(e_{n})=(2\pi in)^{-1}e_{n}$ for all $n\ne0$, whereas $D^{-1}e_{0}=0$.   In the distributional sense, $D^{-1}$ also applies to (complex) measures. For example, with $\delta_{0}$ now being the Dirac delta, 
$$D^{-1}(\delta_{0}-1)(x)=-(x+\frac12) \one_{[-\frac12<x<0]}(x) + (\frac12-x) \one_{[0<x<\frac12]}(x).$$ 
For any $z=|z|e(y)\in D_1$, one has the elementary relation
\beq
\begin{aligned}
\frac{d}{dx}\log{|e(x)-z|}
=&\frac{2\pi |z|\sin{(2\pi(x-y))}}{1-2|z|\cos{(2\pi(x-y))}+|z|^2}\\
=&\pi Q_{z}(x)\\
=&\pi (\mathcal{H}[P_{z}])(x),
\end{aligned}
\eeq
where $\mathcal{H}$ denotes the Hilbert transform and $Q_z$ is the standard notation for the conjugate function of the Poisson kernel. 
In particular,
\beq
\log{|e(x)-z|}=\pi (D^{-1}\mathcal{H}[P_z])(x)
%=\pi \left(\mathcal{H}D^{-1}(P_z-1)\right)(x),
\eeq
holds for any $z\in D_1$.
We thus have
\beq \label{eq:uHnu}
u(x)=(D^{-1}\mathcal{H}[\nu])(x),
\eeq
where 
$$
\frac{d \nu}{d x}(x) =\pi \int_{\overline{D_1}}P_z(x)\, \tilde{\mu}(\d z)
$$
is a positive measure, with 
\beq\label{est:nu}
\nu(\T)=\pi \tilde{\mu}(\overline{D_1}).
\eeq
Set 
\beq
\label{eq:epsviaeps1}
\epsilon:=\frac{1}{B_3} \sqrt{\frac{\varepsilon_1}{B-m}},
\eeq
 and define $J_{\epsilon}(x):=\frac{1}{2\epsilon}\one_{[-\epsilon, \epsilon]}(x)$ to be the box kernel. 
 Because of the upper bound in~\eqref{eq:eps1set} on $\varepsilon_{1}$ and~\eqref{eq:epsklein}, one has
 \[
 \varepsilon_{1} \le \big(B_{0}(R,R_{1},R_{2})+\frac{13}{20\log R/R_{1}} \big)(B-m) < \frac{B_3^2}{289}(B-m) 
 \]
 which ensures that $\epsilon<\frac{1}{17}$.  We will use this smallness property for the remainder of the proof. For example, it 
  guarantees that $J_{\epsilon}$ is in fact well-defined on the circle (less than $\frac12$ is enough here, but below we will need this sharper bound). 
 Then 
\begin{align*}
{v}
=&  {v}-J_{\epsilon}*{v}+J_{\epsilon}*v_1+J_{\epsilon}*v_0\notag\\
=& ({u}-J_{\epsilon}*{u})+(\tilde{h}-J_{\epsilon}*\tilde{h})+J_{\epsilon}*v_1+J_{\epsilon}*v_0.
\end{align*}
The last three terms have small $L^\infty$ norms, in the sense that
\begin{align*}
\left\lbrace
\begin{matrix}
&\|J_{\epsilon}*v_1\|_{L^\infty(\T)}\leq \|J_{\epsilon}\|_{L^\infty(\T)} \|v_1\|_{L^1(\T)}<\frac{\varepsilon_1}{2\epsilon}=\frac{B_3}{2}\sqrt{\varepsilon_1(B-m)},\\
\\
&\|J_{\epsilon}*v_0\|_{L^\infty(\T)}\leq \|J_{\epsilon}\|_{L^1(\T)} \|v_0\|_{L^\infty(\T)}<\varepsilon_0,\\
\\
&\|\tilde{h}-J_{\epsilon}*\tilde{h}\|_{L^\infty(\T)}\leq \frac{\epsilon}{2} \|\tilde{h}^\prime\|_{L^\infty(\T)}\leq \frac{1}{2} \epsilon B_1(B-m)=\frac{B_1}{2 B_3}\sqrt{\varepsilon_1(B-m)}.
\end{matrix}
\right.
\end{align*}
Hence 
\beq\label{eq:comparevu}
|({u}-J_{\epsilon}*{u})(x)|\geq |v(x)-C|-\varepsilon_0-\left(\frac{B_3}{2}+\frac{B_1}{2B_3}\right)\sqrt{\varepsilon_1(B-m)}.
\eeq
By \eqref{eq:uHnu}, we have
\begin{align}\label{eq:uHnu2}
(u-J_{\epsilon}*u)(x)
=&(D^{-1}\mathcal{H}[\nu-J_{\epsilon}*\nu])(x) \notag\\
=&\mathcal{H}[D^{-1}(\nu-J_{\epsilon}*\nu)](x).
\end{align}
Next, we control the pointwise size of the term in brackets in \eqref{eq:uHnu2}. Since the Hilbert transform eliminates constants, the integration constant in~\eqref{def:anti-de} drops out. 

\be{lm}\label{lm:LinftynorminsideH}
Modulo additive constants the function $D^{-1}(\nu-J_{\epsilon}*\nu)$ satisfies 
\beq 
\|D^{-1}(\nu-J_{\epsilon}*\nu)\|_{L^\infty(\T)}\leq \frac{9}{2}\varepsilon_0+2B_3\sqrt{\varepsilon_1(B-m)}
\eeq
\e{lm}
\be{proof}
We being with the observation that  (recall $\nu$ is a positive measure)
\begin{align}\label{eq:inftynorm1}
|(\nu-J_{\epsilon}*\nu)([a,b])|
=&|\int_{\T}(\one_{[a,b]}-\one_{[a,b]}*J_{\epsilon})(x) \, \nu (\d x)| \notag\\
\leq &  \sup_{\theta\in \T}\nu([\theta-\epsilon, \theta+\epsilon]),
\end{align}
uniformly in $[a,b]\subset\T$. If $b-a\ge 2\epsilon$, then 
on the one hand, 
\[
 |(\one_{[a,b]}-\one_{[a,b]}*J_{\epsilon})(x) | \le \frac12(\one_{[a-\epsilon,a+\epsilon]}+\one_{[b-\epsilon,b+\epsilon]})(x) 
\]
 On the other hand, if $b-a< 2\epsilon$ then by translation invariance it suffices to consider the symmetric expression
 \[
f(x):= \one_{[-d,d]}-\one_{[-d,d]}*J_{\epsilon}, \qquad 2d=b-a,\; d<\epsilon
 \]
 For any $0<d<\epsilon/2$ this function satisfies
 \begin{align*}
 |f| & \le \frac{d}{\epsilon} \one_{(-\epsilon-d,\epsilon+d)}+(1-2d/\epsilon)\one_{[-d,d]} \\
 &\le \frac{d}{\epsilon} \one_{(-\epsilon-d,\epsilon-d]}  + \frac{d}{\epsilon} \one_{(\epsilon-d,2\epsilon-d)}  +(1-2d/\epsilon)\one_{(-\epsilon,\epsilon)}
 \end{align*}
 whereas for $\epsilon/2<d<\epsilon$ one has
 \[
 |f| \le \frac12 \one_{(-\epsilon-d,\epsilon+d)} \le \frac12 \one_{(-\epsilon-d,\epsilon-d]}  + \frac12 \one_{(\epsilon-d,3\epsilon-d)}
 \]
In either case \eqref{eq:inftynorm1} holds. 

It therefore suffices to estimate $\sup_{\theta\in \T}\nu([\theta-\epsilon, \theta+\epsilon])$.
Next, we define an atom $\tau^\prime$ in the Hardy space $H^{1}(\T)$ as follows: 
\begin{align}\label{eq:tau'def}
\tau^\prime(x)=
\left\lbrace
\begin{matrix}
(x-(a-3\epsilon))/{\epsilon^2},\ \ \ a-3\epsilon\leq x\leq a-2\epsilon,\\
((a-\epsilon)-x)/{\epsilon^2},\ \ \ a-2\epsilon\leq x\leq a-\epsilon,\\
((a+\epsilon)-x)/{\epsilon^2},\ \ \ a+\epsilon\leq x\leq a+2\epsilon,\\
(x-(a+3\epsilon))/{\epsilon^2},\ \ \ a+2\epsilon\leq x\leq a+3\epsilon,\\
0,\ \ \ \text{otherwise}.
\end{matrix}
\right.
\end{align}
Note that this is well-defined on the circle since $\epsilon<\frac16$. 
By construction, $\la \tau^\prime\ra=0$. Set $\tau(x):=\int_{a-\frac{1}{2}}^x \tau^\prime(t)\,\d t$.
Moreover, $\tau\geq 0$, $\la \tau\ra=4\epsilon$, and $\tau(x)=1$ on $[a-\epsilon, a+\epsilon]$.
Thus 
\beq\label{eq:tau nu}
\nu([a-\epsilon, a+\epsilon])\leq \int_{\T}\tau(x) \, \nu(\d x) = \langle \tau,\nu\rangle.
\eeq
Let us consider
\beq\label{eq:inftynorm2}
\begin{aligned}
  &|(\tau-\la \tau\ra, \nu)|\\
=&|(\frac{d}{dx}\mathcal{H}[\tau], D^{-1}\mathcal{H}[\nu])|=|(\mathcal{H}[\tau^\prime], {u})|\\
 = &|(\mathcal{H}[\tau^\prime], v-\tilde{h} )|=  |(\mathcal{H}[\tau^\prime], v_{0}+v_{1}-\tilde{h} )| \\
\leq &|(\mathcal{H}[\tau^\prime], v_0)|+|(\mathcal{H}[\tau^\prime], v_1)|+|(\tau, \mathcal{H}[\tilde{h}^\prime])|\\
\leq &\|\mathcal{H}[\tau^\prime]\|_{L^1(\T)}\|v_0\|_{L^\infty(\T)}+\|\mathcal{H}[\tau^\prime]\|_{L^\infty(\T)}\|v_1\|_{L^1(\T)}+\|\tau\|_{L^1(\T)}
\|\mathcal{H}[\tilde{h}^\prime]\|_{L^\infty(\T)}\\
\leq &\varepsilon_0 \|\mathcal{H}[\tau^\prime]\|_{L^1(\T)}+\varepsilon_1 \|\mathcal{H}[\tau^\prime]\|_{L^\infty(\T)}+2\epsilon \|\frac{d^2}{dx^2}\tilde{h}\|_{L^\infty(\T)}
\end{aligned}
\eeq
In the last line, we used the following lemma on the third term. 

\be{lm}
For any $f\in C^{1}(\T)$ one has  $\| \mathcal{H}[f]\|_{\infty}\le \frac12 \|f'\|_{\infty}$.
\e{lm} 
\be{proof}
Since $\sin(\pi x)\ge 2x$ for all $0< x\le\frac{1}{2}$, one has 
\begin{align*}
 \| \mathcal{H}[f]\|_{\infty} &= \sup_{y\in \T} \Big| \int_{\T} \frac{f(x)-f(y)}{\sin(\pi(x-y))} \cos(\pi(x-y))\, dx\Big| \le \frac12 \|f'\|_{\infty}
\end{align*}
as claimed. 
\e{proof}

In order to bound the the other terms in the last line of~\eqref{eq:inftynorm2} we prove two lemmas.

\be{lm}\label{lm:L1norm}
Let $\tau'$  be defined by \eqref{eq:tau'def} and assume that $0<\eps\leq \frac{1}{17}$. Then we have
\beq
\|\curly{H}[\tau']\|_{L^1(\mathbb T)}\leq \frac{9}{2}.
\eeq
\e{lm}

\be{rmk}
The upper bound $\frac{1}{17}$ is a particular choice which we have found to be convenient in the last section of the paper. A more restrictive assumption on $\eps$ will slightly improve the bound; e.g., assuming $\eps<\frac{1}{36}$ yields the value $4.2$ for $\frac{9}{2}$. Such improvements are mainly due to the lower bound in \eqref{eq:sinslope} approaching $\pi$ as $\eps\to0$.
\e{rmk}

\emph{Proof.} By translation symmetry, we may assume $a=0$. We let $0<\eps\leq \frac{1}{17}$ and $I=[-r\eps,r\eps]$ with $r=3.6>3$. Notice that $r\eps<\frac{1}{2}$. We decompose $\|\curly{H}[\tau']\|_{L^1(\mathbb T)}$ into the two parts
$$
\|\curly{H}[\tau']\|_{L^1(\mathbb T)}=\|\curly{H}[\tau']\|_{L^1(I)}+\|\curly{H}[\tau']\|_{L^1(I^c)}.
$$
By Cauchy-Schwarz and the fact that the Hilbert transform is an isometry on $L^2(\mathbb T)$, we have
$$
\|\curly{H}[\tau']\|_{L^1(I)}\leq 
\sqrt{|I|}\|\curly{H}[\tau']\|_{L^2(I)} 
\leq \sqrt{|I|} \|\tau'\|_{L^2(I)} 
=\frac{2\sqrt{2}}{\sqrt{3}} \sqrt{r}.
$$
In the remainder of the proof, we bound $\|\curly{H}[\tau']\|_{L^1(I^c)}$. By symmetry, we have $\|\curly{H}[\tau']\|_{L^1(I^c)}=2\|\curly{H}[\tau']\|_{L^1([r\eps,\frac{1}{2}])}$. Hence, it suffices to consider the interval $[r\eps, \frac{1}{2}]$. For all $x\in [r\eps, 1/2]$, we have
\beq\label{eq:Htau'1}
\begin{aligned}
|\curly{H}[\tau'](x)|
=& \l|\int_{\supp\tau'} \tau'(y) \cot(\pi(x-y))\d y\r|\\
=& \l|\int_{\supp\tau'} \tau'(y) \l(\cot(\pi(x-y))-\cot(\pi x)\r)\d y\r|\\
=& \l|\int_{\supp\tau'} \tau'(y) \frac{\sin(\pi y)}{\sin(\pi(x-y))\sin(\pi x)}\d y\r|\\
\leq& \int_{\supp\tau'} |\tau'(y)| \frac{|\sin(\pi y)|}{\sin(\pi(x-y))\sin(\pi x)}\d y.
\end{aligned}
\eeq
Here we used that $x-y\in [(r-3)\eps,\frac{1}{2}+3\eps]\subset [0,1]$. To estimate this expression further, we decompose $\supp\tau'$ into the four intervals
$$
I_1:=[2\eps,3\eps],\qquad I_2:=[\eps,2\eps],\qquad I_3:=[-2\eps,-\eps],\qquad I_4:=[-3\eps,-2\eps].
$$
We write $\rho_j(x)\in I_j$ for the point in $I_j$ that is nearest to $x$ in the toroidal distance, i.e., $\|x-\rho_j(x)\|_{\mathbb T}=\mathrm{dist}_{\mathbb T}(x,I_j)$. (That point is not unique if $x$ is ``antipodal'' to the center of $I_j$; in this case we define $\rho_j$ as the right endpoint of $I_j$ for definiteness.) Notice that $\rho_j(x)$ is constant for $j\in \{1,2\}$ with 
\beq\label{eq:specifically0}
\rho_1(x)=\rho_{1,1}:=3\eps,\qquad \rho_2(x)=\rho_{2,1}:=2\epsilon,
\eeq
and piecewise constant for $j\in \{3,4\}$, i.e.,
$$
\rho_j(x)=\be{cases}
\rho_{j,1},\qquad \textnormal{ if } x\in[0,t_j],\\
\rho_{j,2},\qquad \textnormal{ if } x\in\l(t_j,\frac{1}{2}\r].
\e{cases}
$$
Specifically, we have
\beq\label{eq:specifically}
\begin{aligned}
\rho_3(x)=&
\be{cases}
-\eps,\;\,\qquad \textnormal{ if } x\in\l[0,\frac{1}{2}-\frac{3\eps}{2}\r],\\
-2\eps,\qquad \textnormal{ if } x\in\l(\frac{1}{2}-\frac{3\eps}{2},\frac{1}{2}\r],
\e{cases}\\
\rho_4(x)=&
\be{cases}
-2\eps,\qquad \textnormal{ if } x\in\l[0,\frac{1}{2}-\frac{5\eps}{2}\r],\\
-3\eps,\qquad \textnormal{ if } x\in\l(\frac{1}{2}-\frac{5\eps}{2},\frac{1}{2}\r],
\e{cases}
\end{aligned}
\eeq
From \eqref{eq:Htau'1} and $|\sin (\pi y)|\leq \pi |y|$, we obtain
$$
|\curly{H}[\tau'](x)|\leq \pi \sum_{j=1}^4 \frac{1}{\sin(\pi(x-\rho_j(x)))\sin(\pi x)} \int_{I_j}  |\tau'(y)| |y|\d y.
$$
We integrate both sides over $x\in [r\eps,\frac{1}{2}]$ and find
\beq\label{eq:Htau'2}
\begin{aligned}
\|\curly{H}[\tau']\|_{L^1([r\eps,\frac{1}{2}])}
\leq \pi \sum_{j=1}^4 f_j(\eps) \int_{I_j}  |\tau'(y)| |y|\d y,
\end{aligned}
\eeq
where we introduced the notation
$$
f_j(\eps):=\int_{r\eps}^{\frac{1}{2}} \frac{1}{\sin(\pi(x-\rho_j(x)))\sin(\pi x)}\d x,
$$
for $j\in \{1,2,3,4\}$. The following lemma gives a bound on these integrals.

\be{lm}\label{lm:j1234}
For $j\in \{1,2\}$, we have
\beq\label{eq:j12}
f_j(\eps)\leq \frac{1}{2.98\pi \rho_{j,1}} \log\l(1+\frac{\pi}{2.98}\,\frac{ \rho_{j,1}}{r\eps-\rho_{j,1}}\r).
\eeq
and for $j\in \{3,4\}$, we have
\beq\label{eq:j34}
f_j(\eps)\leq  \frac{2}{5\pi |\rho_{j,1}|} \log\l(1+\frac{2\pi}{5}\,\frac{|\rho_{j,1}|}{r\eps}\r)+0.21.
\eeq
\e{lm}

We postpone the proof of Lemma \ref{lm:j1234} for now. To continue the proof of Lemma \ref{lm:L1norm}, recall \eqref{eq:Htau'2}. We perform the integration in $y$ and find
\beq\label{eq:yintegral}
\begin{aligned}
 \int_{I_1}  |\tau'(y)| |y|\d y=& \int_{I_4}  |\tau'(y)| |y|\d y=\frac{7\eps}{6},\\
 \qquad \int_{I_2}  |\tau'(y)| |y|\d y=& \int_{I_3}  |\tau'(y)| |y|\d y=\frac{5 \eps}{6}.
 \end{aligned}
\eeq
Then we apply Lemma \ref{lm:j1234} and recall Definitions \eqref{eq:specifically0} and \eqref{eq:specifically} of $\rho_{j,1}$. This gives
\beq\label{eq:Htau'3}
\begin{aligned}
&\|\curly{H}[\tau']\|_{L^1([r\eps,\frac{1}{2}])}
\leq\pi \sum_{j=1}^4f_j(\eps) |\tau'(y)| |y|\d y\\
\leq& \frac{1}{17.88} \l(\frac{7}{3}\log\l(1+\frac{\pi}{2.98}\frac{3}{r-3}\r)
+\frac{5}{2}\log\l(1+\frac{\pi}{2.98}\frac{2}{r-2}\r)\r)\\
&
 +\frac{1}{15} \l(5\log\l(1+\frac{2\pi}{5}\frac{1}{r}\r)
+\frac{7}{2}\log\l(1+\frac{2\pi}{5}\frac{2}{r}\r)\r)
+(0.42)\pi \eps.
 \end{aligned}
\eeq
Notice that $(0.42)\pi \eps <0.08$. We write $\nu(r)$ for the expression in the last line. Altogether, we have shown that 
$$
\|\curly{H}[\tau']\|_{L^1(\mathbb T)}
\leq \frac{2\sqrt{2}}{\sqrt{3}} \sqrt{r} +2\nu(r)<4.5.
$$
In the second step, we evaluated the expression at $r=3.6$. This proves the main claim of Lemma \ref{lm:L1norm}. It remains to give the

\be{proof}[Proof of Lemma \ref{lm:j1234}]
We begin by observing that
\beq\label{eq:antideriv}
\frac{\d}{\d x}  \l(\frac{1}{ \sin x'}\log\l(\frac{\sin (x-x')}{\sin(x)}\r)\r) =\frac{1}{\sin x \sin(x-x')},
\eeq
whenever $x,x'$ are such that the logarithm is well-defined. 

Let $j\in\{1,2\}$, which implies that $\rho_{j}(x)=\rho_{j,1}>0$. By \eqref{eq:antideriv} and $\sin,\cos\leq 1$, we have
\beq\label{eq:j12pf}
\begin{aligned}
f_j(\eps)=&\frac{1}{\pi \sin(\pi\rho_{j,1})}\l[\log\l(\frac{\sin(\pi(x-\rho_{j,1}))}{\sin(\pi x)}\r)\r]_{x=r\eps}^{x=\frac{1}{2}}\\
\leq &\frac{1}{\pi \sin(\pi\rho_{j,1})} \log\l(\frac{\sin(\pi r\eps)}{\sin(\pi(r\eps-\rho_{j,1}))}\r)\\
\leq &\frac{1}{\pi \sin(\pi\rho_{j,1})} \log\l(1+\frac{\sin(\pi \rho_{j,1})}{\sin(\pi(r\eps-\rho_{j,1}))}\r)
\end{aligned}
\eeq
Next, we estimate the $\sin$'s by linear functions. While the upper bound $\sin (\pi x)\leq \pi x$ is valid for all $x$ (and is sharp for small $x$), a linear lower bound on $\sin (\pi x)$ depends directly on the allowed range of $x$ values. This is conveniently expressed via the quotient
$$
\inf_{x\in [0,3\eps]} \frac{\sin(\pi x)}{x}=\frac{\sin(3\pi \eps)}{3\eps}>2.98
$$
In the last step, we used that $\eps<\frac{1}{17}$. We may verify that all the arguments of $\sin(\pi\cdot)$ in the last line of \eqref{eq:j12pf} are located in the interval $[0,3\eps]$. Therefore
$$
\begin{aligned}
f_j(\eps)
\leq &\frac{1}{\pi \sin(\pi\rho_{j,1})} \log\l(1+\frac{\sin(\pi \rho_{j,1})}{\sin(\pi(r\eps-\rho_{j,1}))}\r)\\
\leq &\frac{1}{2.98\pi \rho_{j,1}} \log\l(1+\frac{\pi}{2.98}\,\frac{ \rho_{j,1}}{r\eps-\rho_{j,1}}\r).
\end{aligned}
$$
This proves \eqref{eq:j12}.\\

Next, let $j\in \{3,4\}$, so that $\rho_{j,1},\rho_{j,2}<0$. We have $r\eps<t_j$ by our assumptions on $r,\eps$ and \eqref{eq:antideriv} yields
$$
\begin{aligned}
f_j(\eps)\leq &\frac{1}{\pi \sin(\pi\rho_{j,1})} \l[\log\l(\frac{\sin(\pi(x-\rho_{j,1}))}{\sin(\pi x)}\r)\r]_{x=r\eps}^{x=t_j}\\
&+\l(\frac{1}{2}-t_j\r) \max_{x\in(t_j,\frac{1}{2}]} \frac{1}{\sin(\pi(x-\rho_{j,2}))\sin(\pi x)}.
\end{aligned}
$$
The second term is an error term (it vanishes as $\eps\to 0$). Indeed, recalling the definition of $t_j$ and $\rho_{j,2}$ from \eqref{eq:specifically}, we see that for $j\in\{3,4\}$, 
$$
\l(\frac{1}{2}-t_j\r) \frac{1}{\sin(\pi(\frac{1}{2}-\rho_{j,2}))\sin(\pi t_j)}
\leq \frac{5\eps}{2\cos^2(3\pi\eps)}\leq \frac{5\eps}{2(1-\frac{9}{2}\pi^2\eps^2)^2}\leq 0.21,
$$
where the last estimate used that $\eps\leq \frac{1}{17}$. Therefore, we have
$$
\begin{aligned}
&\int_{r\eps}^{\frac{1}{2}} \frac{1}{\sin(\pi(x-\rho_j(x)))\sin(\pi x)}\d x\\
\leq& \frac{1}{\pi \sin(\pi\rho_{j,1})} \l[\log\l(\frac{\sin(\pi(x-\rho_{j,1}))}{\sin(\pi x)}\r)\r]_{x=r\eps}^{x=t_j}+0.21\\
\leq& \frac{1}{\pi \sin(\pi|\rho_{j,1}|)} \log\l(\frac{\sin(\pi(r\eps-\rho_{j,1}))}{\sin(\pi r\eps)}\r)+0.21\\
\leq& \frac{2}{5\pi |\rho_{j,1}|} \log\l(1+\frac{2\pi}{5}\,\frac{|\rho_{j,1}|}{r\eps}\r)+0.21.
\end{aligned}
$$
In the second step, we used that $0\leq t_j<t_j-\rho_{j,1}\leq \frac{1}{2}$ and monotonicity properties of $\sin$. In the last step, we used $\cos\leq 1$ and 
\beq\label{eq:sinslope}
\inf_{x\in [0,(r+2)\eps]} \frac{\sin(\pi x)}{x}=\frac{\sin(\pi(r+2)\eps)}{(r+2)\eps}>\frac{5}{2}.
\eeq
This bound holds because $(r+2)\eps<\frac{1}{3}$, which may be verified from $r=3.6$ and $\eps\leq \frac{1}{17}$. This shows \eqref{eq:j34} and concludes the proof of Lemma \ref{lm:j1234}, and hence also of Lemma \ref{lm:L1norm}.
\e{proof}

Next, we control the pointwise size of $\mathcal{H}[\tau^\prime]$.

\be{lm}\label{lm:Linftynorm}
We have $$\|\mathcal{H}[\tau^\prime]\|_{L^\infty(\T)}\le \frac{5}{2\epsilon}$$
\e{lm}
\be{proof}
By translation invariance, we can set $a=0$. 
Let us first consider $$\int_{\mathrm{supp}(\tau^\prime)}\tau^\prime(x)\cot{\pi(x-y)}\, \d y.$$
If $x\notin \mathrm{supp}(\tau^\prime)$, then $\int_{\mathrm{supp}(\tau^\prime)}\tau^\prime(x)\cot{\pi(x-y)}\, \d y=0$. 
Thus, we can assume without loss of generality that $x\in [-3\epsilon, -\epsilon]$. 
On the one hand, 
\begin{align*}
\Big |\int_{[\epsilon, 3\epsilon]}\tau^\prime(x)\cot{\pi(x-y)}\, \d y \Big | &\leq 2\epsilon\,  |\tau'(x)| \sup_{y\in[\epsilon, 3\epsilon] }|\cot \pi (x-y)|\\
&\le 2 \sup_{u\in [2\epsilon,6\epsilon]} |\cot(\pi u)| \le 2 \cot{(2\pi \epsilon)}\leq \frac{1}{2\epsilon},
\end{align*}
since $6\epsilon \pi \le \pi - 2\pi \epsilon$ and $\sin(2\pi \epsilon)\ge 4\epsilon$ 
(recall that $\epsilon <\frac18$). 
On the other hand, for the negative support of $\tau'$, we can further assume by symmetry that $x\in [-3\epsilon, -2\epsilon]$. Thus, 
\beq 
\begin{aligned}
  &|\int_{[-3\epsilon, -\epsilon]}\tau^\prime(x)\cot{\pi(x-y)} \, \d y|\\
=&|\tau^\prime(x)\int_{2x+3\epsilon}^{-\epsilon} \cot{\pi (y-x)} \, \d y|
=|\tau^\prime(x)\int_{x+3\epsilon}^{-\epsilon-x}\cot{(\pi y)} \, \d y|\\
=&\frac{1}{\epsilon^2}(x+3\epsilon)\int_{x+3\epsilon}^{-\epsilon-x}\frac{1}{2y} \, \d y
=\frac{1}{\epsilon}\ \frac{x+3\epsilon}{2\epsilon}\ln{\left(\frac{2\epsilon}{x+3\epsilon}-1\right)}\\
\leq &\frac{1}{\epsilon} \sup_{t\in [0,1/2]} t\ln{\left(\frac{1}{t}-1\right)}<\frac{1}{4\epsilon}.
\end{aligned}
\eeq
Hence, overall we have 
\beq
|\int_{\mathrm{supp}(\tau^\prime)}\tau^\prime(x)\cot{\pi(x-y)}\, \d y|<\frac{3}{4\epsilon}.
\eeq
Now let us consider with $x\in [-3\epsilon,-\epsilon]$, 
\beq
\begin{aligned}
  &|\mathcal{H}[\tau^\prime](x)-\int_{\mathrm{supp}(\tau^\prime)}\tau^\prime(x)\cot{\pi(x-y)}\, \d y|\\
=&\Big |\int_{\mathrm{supp}(\tau^\prime)}(\tau^\prime(y)-\tau^\prime(x))\,\cot{\pi (x-y)}  \, \d y \Big|\\
\leq &\int_{-3\epsilon}^{-\epsilon}\left|\frac{\tau^\prime(y)-\tau^\prime(x)}{\sin{\pi(y-x)}}\right| \, \d y + \int_{\epsilon}^{3\epsilon}\frac{|\tau^\prime(y)|+|\tau^\prime(x)|}{|\sin{\pi(y-x)}|}  \, \d y \\
\leq &\int_{-3\epsilon}^{-\epsilon} \frac{|x-y|\epsilon^{-2}}{2|x-y|} \, dy +\big (\frac{1}{4}+ \frac12\log2\big)\epsilon^{-1}
%\leq &\frac{\epsilon}{2}\sup_{x\neq y}\frac{|\tau^\prime(x)-\tau^\prime(y)|}{\|x-y\|_{\T}}\\
\le \frac{1.6}{\epsilon}.
\end{aligned}
\eeq
In summary, 
\beq
\|\mathcal{H}[\tau^\prime]\|_{L^\infty(\T)}<\frac{5}{2\epsilon}
\eeq
as claimed. 
\e{proof}

Combining \eqref{est:nu},  \eqref{eq:tau nu},  \eqref{eq:inftynorm2}, with Lemmas \ref{lm:muest}, \ref{lm:L1norm} and \ref{lm:Linftynorm}, we have
\begin{align*}
(\tau, \nu) \leq& \frac{9}{2} \varepsilon_0+5\frac{\varepsilon_1}{2\epsilon}+2\epsilon B_2(B-m)+\la \tau\ra \nu(\T) \\
\leq &\frac{9}{2}\varepsilon_0+5\frac{\varepsilon_1}{2\epsilon}+\epsilon \left(2B_2+\frac{4\pi}{\log{R/{R_1}}}\right)(B-m)\\
=&\frac{9}{2}\varepsilon_0+2B_3\sqrt{\varepsilon_1(B-m)},
\end{align*}
Note that our choice~\eqref{eq:epsviaeps1} of $\epsilon$ minimizes the contribution of $\varepsilon_1$.
Finally,  \eqref{eq:inftynorm1} concludes the proof of Lemma \ref{lm:LinftynorminsideH}.
\e{proof}

In order to prove the exponential integrability of $v-c$, and thus complete the proof of Lemma~\ref{lm:T1Splitting}, we invoke the following classical result about the Hilbert transform of bounded functions on the circle, see for example~\cite{Katz}. 

\be{lm}\label{lm:expint}
Let $f$ be a real-valued function on $\T$ such that $|f|\leq 1$. Then for any $0\leq \alpha<\frac{1}{2}\pi$,
$$\int_{\T} e^{\alpha |\mathcal{H}[f](x)|}\ \d x\leq \frac{2}{\cos{\alpha}}=2\sec\alpha.$$
\e{lm}

Applying Lemma \ref{lm:expint} to $f=(D^{-1}(\nu-J_{\epsilon}*\nu))/\|(D^{-1}(\nu-J_{\epsilon}*\nu))\|_{L^\infty(\T)}$, by \eqref{eq:uHnu2}, we have
\beq\label{est:expint}
\int_{\T} \exp\big( \beta |(u-J_{\epsilon}*u)(x)|\big) \,  \d x\leq {2}\,{\sec{\alpha}},
\eeq 
where $\alpha=\beta (\frac{9}{2}\varepsilon_0+2B_3\sqrt{\varepsilon_1(B-m)})<\frac{\pi}{2}$. 
%for any $0\leq \beta<\pi/({2\frac{9}{2}\varepsilon_0+4B_3\sqrt{\varepsilon_1 (B-m)}})$.
Taking $\alpha=\frac{\pi}{4}$ in \eqref{est:expint}, then \eqref{eq:comparevu} yields (absorbing the constant $c$ into~$v$ for simpicity)
\beq\begin{aligned}\label{est:expint2}
&\int_{\T} \exp\Big(\frac{\pi |v(x)|}{18\varepsilon_0+8B_3\sqrt{\varepsilon_1 (B-m)}}\Big)\;\d x\\
& \leq 2\sqrt{2}\exp\Big(\frac{\pi \varepsilon_0+\pi (B_3+B_1/B_3)\sqrt{\varepsilon_1(B-m)}/2}{18\varepsilon_0+8B_3\sqrt{\varepsilon_1(B-m)}}\Big)\\
&\le 2\sqrt{2}\, \exp\Big(\pi\big [\frac{17}{144}+\frac{B_1}{16 B_3^2}\big]  \Big), 
\end{aligned}\eeq
which concludes the proof of Lemma~\ref{lm:T1Splitting}. 
$\hfill{} \Box$

\subsection*{Decay of the Fourier coefficients}

We conclude this section with an important decay estimate on the Fourier coefficients of the subharmonic function~$v$. 
This lemma will be used in Sec.\ref{sec:longsum}.

\be{lm}\label{lm:Fourier}
Let $v$ be as in Theorem \ref{Riesz}, then the Fourier coefficients of $v$ satisfy
$$|\hat{v}(k)|\leq \frac{C(R,R_1,R_2)}{|k|}(B-m),\ \ \text{for any}\ k\neq 0 ,$$
in which 
\beq\label{def:constantCRR1R2}
C(R,R_1,R_2)=\frac{1}{2 \log{ R/{R_1}}}+\frac{1}{2\pi} B_1(R,R_1,R_2).
\eeq
\e{lm}
\be{proof}
For any $k\neq 0$, we have
\beq\label{eq:vFourier}
|\hat{v}(k)|=\left|\int_{\T}(u(x)+\tilde{h}(x))e^{-2\pi i k x}\d  x\right|\leq \frac{1}{2\pi |k|} \left(|\widehat{u^\prime}(k)|+|\widehat{\tilde{h}^\prime}(k)| \right).
\eeq
By Lemma \ref{lm:T1Riesz}, we have $|\tilde{h}^\prime(x)|\leq  B_1(R,R_1,R_2)(B-m)$, hence 
\beq\label{eq:hFourier}
|\widehat{\tilde{h}^\prime}(k)|\leq B_1(R,R_1,R_2)(B-m).
\eeq
By \eqref{eq:uHnu}, \eqref{est:nu} and Theorem \ref{Riesz}, we have
\beq\label{eq:uFourier}
|\widehat{u^\prime}(k)|=|\widehat{\mathcal{H}[\nu]}(k)|=|\hat{\nu}(k)|\leq \pi \tilde{\mu}(\overline{D_1})\leq \frac{\pi(B-m)}{\log{R/{R_1}}}.
\eeq
In view of  \eqref{eq:vFourier}, \eqref{eq:hFourier} and \eqref{eq:uFourier} we infer that 
\begin{align*}
|\hat{v}(k)|
\leq \frac{1}{2\pi |k|}\left( B_1(R,R_1,R_2)(B-m)+ \frac{\pi(B-m)}{\log{R/{R_1}}}\right).
\end{align*}
as claimed. 
\e{proof}

\section{$\T^2$ Splitting Lemma}\label{sec:T2splitting}

Our applications to the skew-shift dynamics on $\T^{2}$ require a version of the splitting lemma in two variables. First, we formalize the class  of
plurisubharmonic functions that we will be working with. 

\begin{defn}\label{def:T2 v}
Let $v(z,w)$ be a continuous pluri-subharmonic function on $D_{R}\times D_{R}$, satisfying the following estimates for $R>1$,
\beq\label{eq:v 41}
\begin{aligned}
\left\lbrace
\begin{matrix}
v(z, w)\leq B_4(R)\ \text{for}\ \forall (z, w)\in D_R\times \partial D_1,\ \text{and}\ v(0,e(y))\ge m_4\ \text{for}\ \forall y\in \T,\\
\\
v(z, w)\leq B_5(R)\ \text{for}\ \forall (z, w)\in \partial D_1\times D_R,\ \text{and}\ v(e(x),0)\ge m_5\ \text{for}\ \forall x\in \T,\\
\\
|v(e(x), e(y))|\leq B_6\ \text{for}\ \forall (x,y)\in \T^2
\end{matrix}
\right.
\end{aligned}
\eeq 
For a function $f$ defined on a polydisk in $\C^2$ which contains $\T^{2}$, let us denote $f(e(x), e(y))$ by $f(x,y)$ for simplicity. In particular, we will write $v(x,y)$ on~$\T^{2}$. The average is denoted by  $\la f\ra_{\T^2}:=\int_{\T^2}f(x,y)\, \d x\d y$.
\end{defn}

Below, we will analyze a particular Schr\"{o}dinger cocycle over a skew shift base. At that opine,  we will specify the constants in Definitions~\ref{def:Poisson} and~\ref{def:T2 v}. But for now we develop more analytical machinery with these constants as parameters.  
Recall that for a Borel set $U$, the Lebesgue measure  will be written as  $|U|$.

\be{lm}\label{lm:T2Splitting}
Let $v$ be as in Definition~\ref{def:T2 v}, and assume \eqref{eq:epsklein}.   
Let $0<r<1$, $0<\varepsilon_1$ and $0<\varepsilon_0<\varepsilon_3<\varepsilon_2$.
Assume
$$v(x,y)=v_0(x,y)+v_1(x,y)+\la v\ra_{\T^2},$$
where $\|v_0\|_{L^\infty(\T^2)}<\varepsilon_0$, and $\|v_1\|_{L^1(\T^2)}<\varepsilon_1$.
Then
\begin{align*}
  \l|\left\lbrace (x,y)\in \T^2: |v(x,y)-\la v\ra_{\T^2}|>\varepsilon_2\right\rbrace\r|
 & < 2(2C_{0})^{\frac12} \exp\big( -\frac{\pi}{8\, \delta_{0}^{(1)} } \varepsilon_{3}\big)  \\
&\qquad + C_{0 }\exp\big( -\frac{\pi}{4 \, \delta_{0}^{(2)} } \varepsilon_{2}\big)
%<&2^{\frac{9}{4}}e^{\frac{\pi}{16\frac{9}{2}}}e^{\frac{-\pi \varepsilon_3}{16}+2\sqrt{2}e^{\frac{\pi}{8\frac{9}{2}}}e^{\frac{-\pi \varepsilon_2}{8\frac{9}{2}\varepsilon_3+16 B_3\sqrt{B_6} \sqrt{\varepsilon_1^{1-r} (B_5-m_5)}}},
\end{align*}
in which $\frac{9}{2}$ and $B_3=B_3(R,R_1,R_2)$ are defined as in \eqref{def:constantsB3B4}, $C_{0}$ in~\eqref{eq:expintegrable},  and 
\beq\label{eq:delta def}
\begin{aligned}
\delta_{0}^{(1)} &:=   \frac{9}{2} \varepsilon_0+ 2 B_3\sqrt{\varepsilon_1^r (B_4-m_4)}  \\
\delta_{0}^{(2)} &:=   \frac{9}{2}\varepsilon_3+ 4  B_3\sqrt{B_6} \sqrt{\varepsilon_1^{1-r} (B_5-m_5)}
\end{aligned}
\eeq
\e{lm}
\be{proof}
Fix $0<r<1$. 
Let 
\beq\label{def:setA1}
A_1:=\left\lbrace y\in \T: \int_{\T}|v_1(x,y)|\ \mathrm{d}x<\varepsilon_1^{r}\right\rbrace.
\eeq
By Markov's inequality, we have 
\beq\label{est:A1mes}
|A_1^c|< \varepsilon_1^{1-r}.
\eeq
For any fixed $y\in A_{1}$, we have $\|v_0(\cdot,y)\|_{L^\infty(\T)}<\varepsilon_0$ and $\|v_1(\cdot, y)\|_{L^1(\T)}<\varepsilon_1^r$.
Applying Lemma \ref{lm:T1Splitting} in the $x$ variable, we then have
$$
\int_{\T} \exp\big( \frac{\pi}{4\,\delta_{0}^{(1)}} |v(x,y)-\la v\ra_{\T^2}|\big) \,\d x\leq C_{0}
$$
Integrating over $y\in A_1$, and interchanging the integrations, yields
\beq\label{T2:1}
\int_{\T}  \int_{A_{1}}\exp\big( \frac{\pi}{4\,\delta_{0}^{(1)}} |v(x,y)-\la v\ra_{\T^2}|\big) \,\d y\d x\leq C_{0}
\eeq
For $\gamma>0$, let us define
\beq\label{def:setA2}
A_2:=\left\lbrace x\in \T: \int_{A_1}\exp\big( \frac{\pi}{4\,\delta_{0}^{(1)}} |v(x,y)-\la v\ra_{\T^2}|\big) \,\d y \leq C_{0}\, \gamma^{-1} \right\rbrace.
\eeq
By Markov's inequality,
\beq\label{est:A2mes}
|A_2^c|<\gamma.
\eeq
For $x\in A_2$ and $\varepsilon_3>\varepsilon_0$, let us define
\beq\label{def:setA3}
A_3:=\left\lbrace y\in A_1: |v(x,y)-\la v\ra_{\T^2}|<\varepsilon_3\right\rbrace,
\eeq
Again, by Markov's inequality,
\beq\label{est:A3mes}
|A_3^c|<C_{0} \, \gamma^{-1} \exp\big(- \frac{\pi\varepsilon_3 }{4\,\delta_{0}^{(1)}}\big)
\eeq
Thus for $x\in A_2$, $\left\lbrace y\in \T: |v(x,y)-\la v\ra|>\varepsilon_3\right\rbrace\subseteq A_1^c\cup A_3^c$, with the following measure estimate
\beq\label{T2:2}
|\left\lbrace y\in \T: |v(x,y)-\la v\ra_{\T^2}|>\varepsilon_3\right\rbrace|\leq \varepsilon_1^{1-r}+C_{0} \, \gamma^{-1} \exp\big(- \frac{\pi\varepsilon_3 }{4\,\delta_{0}^{(1)}}\big).
\eeq
Here we divide into two different cases, depending on which term on the right-hand side dominates. 

\medskip

\subparagraph{{\bf Case 1:} $\varepsilon_1^{1-r}< C_{0} \, \gamma^{-1} \exp\big(- \frac{\pi\varepsilon_3 }{4\,\delta_{0}^{(1)}}\big) $.}
Then \eqref{T2:2} directly implies that for any $x\in A_2$,
\beq\label{T2:3}
\l|\left\lbrace y\in \T: |v(x,y)-\la v\ra_{\T^2}|>\varepsilon_3\right\rbrace\r|\leq 2C_{0} \, \gamma^{-1} \exp\big(- \frac{\pi\varepsilon_3 }{4\,\delta_{0}^{(1)}}\big)
\eeq
Together with \eqref{est:A2mes}, we conclude that
\beq\label{T2:4}
\l|\left\lbrace (x,y)\in \T^2: |v(x,y)-\la v\ra_{\T^2}|>\varepsilon_3 \right\rbrace \r|\leq \gamma+2C_{0} \, \gamma^{-1} \exp\big(- \frac{\pi\varepsilon_3 }{4\,\delta_{0}^{(1)}}\big)
\eeq

\subparagraph{{\bf Case 2:} $\varepsilon_1^{1-r}\ge  C_{0} \, \gamma^{-1} \exp\big(- \frac{\pi\varepsilon_3 }{4\,\delta_{0}^{(1)}}\big) $.}
Then for any $x\in A_2$, 
\beq\label{T2:5}
\l|\left\lbrace y\in \T: |v(x,y)-\la v\ra_{\T^2}|>\varepsilon_3\right\rbrace\r|\leq 2\varepsilon_1^{1-r}.
\eeq
For $x\in A_2$, let
\beq\label{T2def:v0v1tilde}
\begin{aligned}
\tilde{v}_{x,0}(y)&=(v(x,y)-\la v\ra_{\T^2})\ \one_{\{y\in \T: |v(x,y)-\la v\ra_{\T^2}|\leq \varepsilon_3\}} \\
\tilde{v}_{x,1}(y)&=(v(x,y)-\la v\ra_{\T^2})\ \one_{\{y\in \T: |v(x,y)-\la v\ra_{\T^2}|>\varepsilon_3\}}.
\end{aligned}
\eeq
Then \eqref{T2:5} implies, assuming $x\in A_2$, 
\beq\begin{aligned}\label{T2:6}
&v(x,y)=\tilde{v}_{x,0}(y)+\tilde{v}_{x,1}(y)+\la v\ra_{\T^2},\\
&\|\tilde{v}_{x,0}(\cdot)\|_{L^\infty(\T)}\leq \varepsilon_3,\\
&\|\tilde{v}_{x,1}(\cdot)\|_{L^1(\T)}\leq 2\,\varepsilon_1^{1-r}\|v(x, \cdot)-\la v\ra_{\T^2}\|_{L^\infty(\T)}\leq 4B_6\, \varepsilon_1^{1-r}.
\end{aligned}
\eeq
Applying Corollary \ref{cor:T1Splitting} in the $y$ variable, we obtain that for any $x\in A_2$ and any $\varepsilon_2>\varepsilon_3$, 
\beq\label{T2:7}
\l|\left\lbrace y\in \T: |v(x,y)-\la v\ra_{\T^2}|>\varepsilon_2\right\rbrace\r|\leq C_{0} \exp\big(- \frac{\pi\varepsilon_2 }{4\,\delta_{0}^{(2)}}\big)
\eeq
Together with \eqref{est:A2mes}, we then get
\beq\label{T2:8}
\l|\left\lbrace (x,y)\in \T^2: |v(x,y)-\la v\ra_{\T^2}|>\varepsilon_2\right\rbrace\r|\leq \gamma+  C_{0} \exp\big(- \frac{\pi\varepsilon_2 }{4\,\delta_{0}^{(2)}}\big).
\eeq

\medskip

Finally, we choose $\gamma$ to equalize the terms in~\eqref{T2:4}:  
\[
\gamma= (2C_{0})^{\frac12} \exp\big(- \frac{\pi\varepsilon_3 }{8\,\delta_{0}^{(1)}}\big)
\]
Then the estimate of Case 1, namely \eqref{T2:4}, yields
\beq\label{T2:4p}
\begin{aligned}
&\l|\left\lbrace (x,y)\in \T^2: |v(x,y)-\la v\ra_{\T^2}|>\varepsilon_2 \right\rbrace \r| \\
\leq &\l|\left\lbrace (x,y)\in \T^2: |v(x,y)-\la v\ra_{\T^2}|>\varepsilon_3 \right\rbrace \r|\\
\leq & 2(2C_{0})^{\frac12} \exp\big(- \frac{\pi\varepsilon_3 }{8\,\delta_{0}^{(1)}}\big)
\end{aligned}
\eeq
The estimate of Case 2, namely \eqref{T2:8}, becomes
\beq\label{T2:8p}
\begin{aligned}
&\l|\left\lbrace (x,y)\in \T^2: |v(x,y)-\la v\ra_{\T^2}|>\varepsilon_2\right\rbrace\r|\\
\leq &  (2C_{0})^{\frac12} \exp\big(- \frac{\pi\varepsilon_3 }{8\,\delta_{0}^{(1)}}\big) + C_{0} \exp\big(- \frac{\pi\varepsilon_2 }{4\,\delta_{0}^{(2)}}\big).
\end{aligned}
\eeq
Combining \eqref{T2:4p} with \eqref{T2:8p}, we conclude that 
\begin{align*}
&\l|\left\lbrace (x,y)\in \T^2: |v(x,y)-\la v\ra_{\T^2}|>\varepsilon_2\right\rbrace\r|\\
\leq & 2 (2C_{0})^{\frac12} \exp\big(- \frac{\pi\varepsilon_3 }{8\,\delta_{0}^{(1)}}\big) + C_{0} \exp\big(- \frac{\pi\varepsilon_2 }{4\,\delta_{0}^{(2)}}\big).
\end{align*}
as claimed. 
\end{proof}

\section{Avalanche Principle}

The Avalanche Principle (AP)  is a device to compare the logarithm of the norm of a long product $A_{n}A_{n-1}\ldots A_{2}A_{1}$ of matrices to the sum of the logarithms of the norms of shorter sections of the product. In the original formulation from~\cite{GS} for $\mathrm{SL}_2(\R)$ matrices the length of the chain was limited depending on the norms of the individual matrices $A_{j}$. The same restriction applied to the extension of the AP to $\mathrm{SL}_d(\R)$ matrices in~\cite{Sch}. Later, Duarte and Klein~\cite{DK} found a different proof of the AP which does not use impose any restriction on the length of the chain. Even though the older version~\cite{GS} would suffice for our purposes, we present the argument from~\cite{DK} with explicit constants. (These are not provided in \cite{DK}.)

Thus, this section is devoted to making the constants in Chapter~2 of \cite{DK2} effective (we mostly follow~\cite{DK2} instead of~\cite{DK} for the sake of simplicity). 
We use the same notation as \cite{DK2}, which we first recall.
Although we only need the results for $\mathrm{SL}_2(\R)$ matrices in this paper, we aim at proving more general results which are of independent interest.

Let $\mathrm{GL}_d(\R)$ be the general linear group of real $d\times d$ matrices. 
\begin{defn}
Given matrices $g_0, g_1,..., g_{n-1}\in \mathrm{GL}_d(\R)$, the expansion rift is the ratio
$$\rho(g_0, g_1,...,g_{n-1}):=\frac{\|g_{n-1}\cdots g_1 g_0\|}{\|g_{n-1}\|\cdots \|g_1\| \|g_0\|}\in (0,1].$$
\end{defn}

Given $g\in \mathrm{GL}_d(\R)$, let
$$s_1(g)\geq s_2(g)\geq ...\geq s_d(g)>0$$
denote the sorted singular values of $g$. The first singular value $s_1(g)$ is the operator norm
$$s_1(g)=\max_{x\in \R^d\setminus \{0\}}\frac{\|gx\|}{\|x\|}:=\|g\|.$$
The last singular value of $g$ is the least expansion factor of $f$, regarded as a linear transformation, and it can be characterized by 
$$s_d(g)=\min_{x\in \R^d\setminus \{0\}}\frac{\|gx\|}{\|x\|}=\|g^{-1}\|^{-1}.$$

\be{defn}
The gap (or the singular gap) of $g\in\mathrm{GL}_d(\R)$ is the ratio between its first and second singular values.
$$\mathrm{gr}(g):=\frac{s_1(g)}{s_2(g)}.$$
\e{defn}
\be{rmk}
If $g\in \mathrm{SL}_2(\R)$, then $\mathrm{gr}(g)=\|g\|^2$.
\e{rmk}

Let $\mathbb P(\R^d)$ denote the projective space.
Points in $\mathbb P(\R^d)$ are equivalence classes $\hat{x}$ of non-zero vectors $x\in \R^d$. We consider the projective distance $\delta:\mathbb{P}(\R^d)\times \mathbb{P}(\R^d)\rightarrow [0,1]$,
$$\delta(\hat{x}, \hat{y}):=\sin{(\angle (x,y))},$$
where $\angle$ is the length of the arc connecting $x$ and $y$.

\be{defn}
Given $g\in \mathrm{GL}_d(\R)$ such that $\mathrm{gr}(g)>1$, the most expanding direction of $g$ is the singular direction $\hat{\mf{v}}\in \mathbb{P}(\R^d)$ associated with the first singular value $s_1(g)$ of $g$. Let $\mf{v}(g)$ be any of the two unit vector representatives of the projective point $\hat{\mf{v}}(g)$. We set $\hat{\mf{v}}^*(g):=\hat{\mf{v}}(g^*)$ and $\mf{v}^*(g):=\mf{v}(g^*)$.
\e{defn}

Any matrix $g\in \mathrm{GL}_d(\R)$ maps the most expanding direction of $g$ to the most expanding direction of $g^*$, multiplying vectors by the factor $s_1(g)=\|g\|$.
$$g\mf{v}(g)=\pm s_1(g)\mf{v}^*(g).$$
The matrix $g$ also induces a projective map $\hat{g}:\pr (\R^d)\rightarrow \pr(\R^d)$, $\hat{g}(\hat{x}):=\hat{gx}$, for which one has
$$\hat{g}(\hat{\mf{v}}(g))=\hat{\mf{v}}^*(g)\ \ \text{and}\ \ \hat{g^*} (\hat{\mf{v}}^*(g))=\hat{\mf{v}}(g).$$

\be{thm}\label{thm:AP}
Let $n\geq 1$ and $0<\epsilon \leq\frac{1}{10}$. Given $0<\kappa\leq \frac{1}{10}\epsilon^2$ and $g_0,g_1,...,g_{n-1}\in \mathrm{GL}_d(\R)$, if
\begin{itemize}
\item (G) $\mathrm{gr}(g_i)\geq \kappa^{-1}$ for $j=0,1,...,n-1$,
\item (A) $\rho(g_{j-1}, g_j)\geq \epsilon$ for $j=1,2,...,n-1$,
\end{itemize}
then, writing $g^j:=g_{j-1}\cdots g_1g_0$, we have
\begin{itemize}
\item (i) $\max{\{ \delta(\hat{\mf{v}}(g^n), \hat{\mf{v}}(g_0)),\ \delta(\hat{\mf{v}}^*(g^n), \hat{\mf{v}}^*(g_{n-1})) \}}\leq 3\kappa\epsilon^{-1}$,
\item (ii) $e^{-5n\kappa/{\epsilon^2}}\leq \frac{\rho(g_0,g_1,...,g_{n-1})}{\rho(g_0,g_1)\cdots \rho(g_{n-2}, g_{n-1})}\leq e^{11n\kappa/{\epsilon^2}}$.
\end{itemize}
\e{thm}
The proof follows the general line of argumentation in \cite{DK2}, keeping track of the effective constants throughout.

\subsection*{Staging the proof}

The projective distance $\delta:\pr(\R^d)\times \pr(\R^d)\rightarrow [0,1]$ determines a complementary angle function $\alpha:\pr(\R^d)\times \pr(\R^d)\rightarrow [0,1]$, defined by
$$\alpha(\hat{x}, \hat{y}):=|\cos{(\angle (x,y))}|.$$

Let us also introduce the algebraic operation
$$a\oplus b:=a+b-ab.$$
For properties of $a\oplus b$, one may refer to Proposition 2.1 of \cite{DK2}.

\be{lm}\label{lm:DK22} 
Given $g\in \mathrm{GL}_d(\R)$ with $\mathrm{gr}(g)>1$, $\hat{x}\in \pr(\R^d)$ and a unit vector $x\in \hat{x}$, writing $\alpha=\alpha(\hat{x}, \hat{\mf{v}}(g))$ we have
\begin{itemize}
\item (a) $\alpha\leq \frac{\|gx\|}{\|x\|}\leq \sqrt{\alpha^2\oplus \mathrm{gr}(g)^{-2}}$,
\item (b) $\delta(\hat{g}(\hat{x}), \hat{\mf{v}}^*(g))\leq \alpha^{-1}\mathrm{gr}(g)^{-1}\delta(\hat{x}, \hat{\mf{v}}(g))$,
\item (c) The restriction of the map $\hat{g}:\pr(\R^d)\rightarrow \pr(\R^d)$ to the disk $\{\hat{x}\in \pr(\R^d):\ \delta(\hat{x}, \hat{\mf{v}}(g))\leq r\}$ has Lipschitz constant $\leq \frac{\pi}{2}\frac{r+\sqrt{1-r^2}}{\mathrm{gr}(g)(1-r^2)}$ with respect to the $\delta$-metric.
\end{itemize}
\e{lm}
\be{proof}
The factor $\frac{\pi}{2}$ in the Lipschitz constant is already explicit in the proof of Lemma 2.2 of \cite{DK2}.
\e{proof}

\be{cor}\label{cor:DK23}
Given $g\in \mathrm{GL}_d(\R)$ such that $\mathrm{gr}(g)\geq \kappa^{-1}$, define 
$$\Sigma_{\epsilon}:=\{\hat{x}\in \pr(\R^d):\ \alpha(\hat{x}, \hat{\mf{v}}(g))\geq \epsilon\}=B(\hat{\mf{v}}(g), \sqrt{1-\epsilon^2}).$$
Given a point $\hat{x}\in \Sigma_{\epsilon}$,
\begin{itemize}
\item (a) $\delta(\hat{g}(\hat{x}), \hat{g}(\hat{\mf{v}}(g)))\leq \kappa\epsilon^{-1}\delta(\hat{x}, \hat{\mf{v}}(g))$,
\item (b) The map $\hat{g}|_{\Sigma_{\epsilon}}\rightarrow \pr(\R^d)$ has Lipschitz constant $\leq \frac{\sqrt{2}\pi}{2} \kappa\epsilon^{-2}$.
\end{itemize}
\e{cor}
\be{proof}
(a) follows directly from (b) of Lemma \ref{lm:DK22}. 
(b) follows from (c) of Lemma \ref{lm:DK22} and the fact that $\epsilon+\sqrt{1-\epsilon^2}\leq \sqrt{2}$.
\e{proof}

\be{defn}
Given $g, g^\prime\in \mathrm{GL}_d(\R)$ with $\mathrm{gr}(g), \mathrm{g}(g^\prime)>1$, we define their lower angle as 
$$\alpha(g, g^\prime):=\alpha(\hat{\mf{v}}^*(g), \hat{\mf{v}}(g^\prime)).$$
The upper angle between $g$ and $g^\prime$ is 
$$\beta(g, g^\prime):=\sqrt{\mathrm{gr}(g)^{-2}\oplus \alpha(g, g^\prime)^2\oplus \mathrm{gr}(g^\prime)^{-2}}.$$
\e{defn}

\be{lm}\label{lm:DK24}% (see Lemma 2.4 of \cite{DK})
Given $g, g^\prime\in \mathrm{GL}_d(\R)$, if $\mathrm{gr}(g), \mathrm{gr}(g^\prime)>1$, then 
$$\alpha(g, g^\prime)\leq \rho(g, g^\prime)\leq \beta(g, g^\prime).$$
\e{lm}
This lemma has the following immediate corollary. It shows how the assumptions (G) and (A) in Theorem \ref{thm:AP} will be used.

\be{cor}\label{cor:AP(A)}
Given $g, g^\prime\in \mathrm{GL}_d(\R)$, if $\mathrm{gr}(g), \mathrm{gr}(g^\prime)\geq \kappa^{-1}$ and $\rho(g, g^\prime)\geq \epsilon$,
then 
$$\delta(\hat{\mf{v}}^*(g), \hat{\mf{v}}(g^\prime))\leq \sqrt{1-\frac{\epsilon^2}{1+2\frac{\kappa^2}{\epsilon^2}}}.$$
\e{cor}

We recall that $g^j:=g_{j-1}\cdots g_1g_0$.

\be{lm}\label{lm:DK25}  %[Lemma 2.5, \cite{DK}]
If $\mathrm{gr}(g_j)>1$ for $j=0,1,...,n-1$, and $\mathrm{gr}(g^j)>1$ for $j=1,2,...,n$, then 
$$\prod_{j=1}^{n-1}\alpha(g^j, g_j)\leq \rho(g_0, g_1,..., g_{n-1})\leq \prod_{j=1}^{n-1}\beta(g^j, g_j).$$
\e{lm}

\subsection*{Proof of Theorem \ref{thm:AP}}
To simplify the notation, we will write $c_0=\frac{1}{10}$, $\hat{\mf{v}}_j:=\hat{\mf{v}}(g_j)$ and $\hat{\mf{v}}_j^*:=\hat{\mf{v}}^*(g_j)$ for $j=0,1,...,n-1$.
We also let
\begin{align*}
g_j=g_{2n-1-j}^*,\ \ \hat{\mf{v}}_j=\hat{\mf{v}}^*_{2n-1-j},\ \ \text{and}\ \ \hat{\mf{v}}_j^*=\hat{\mf{v}}_{2n-1-j}\ \ \text{for}\ j=n,n+1,...,2n-1.
\end{align*}
For each $i=0,1,...,2n-1$ and $j=0,1,...,2n-i$, set 
$$\hat{\mf{v}}_i^j:=\hat{g}_{i+j-1}\cdots \hat{g}_{i+1}\hat{g}_i \hat{\mf{v}}_i.$$
In terms of the notation above, we have $\widehat{(g^n)^* g^n}(\hat{\mf{v}}(g_0))=\hat{\mf{v}}_0^{2n}$ and $\hat{\mf{v}}_0=\hat{\mf{v}}_{2n-1}^1$.

By Assumption (A), we have $\rho(g_{j-1}, g_j)\geq \epsilon$ for $1\leq j\leq n-1$. Hence for $n+1\leq j\leq 2n-1$, 
\beq\label{eq:APA1}
\rho(g_{j-1}, g_j)=\rho(g_{2n-j}^*, g_{2n-j-1}^*)=\rho(g_{2n-j-1}, g_{2n-j})\geq \epsilon.
\eeq
Clearly, we also have 
\beq\label{eq:APA2}
\rho(g_{n-1}, g_n)=\rho(g_{n-1}, g_{n-1}^*)=\frac{\|g_{n-1}^*g_{n-1}\|}{\|g_{n-1}\|^2}=1.
\eeq
Therefore combining Assumption (A) with \eqref{eq:APA1} and \eqref{eq:APA2}, we have
\beq\label{eq:APA3}
\rho(g_{j-1}, g_j)\geq \epsilon\ \ \text{for}\ j=1,2,...,2n-1.
\eeq

We begin with the proof of statement (i). We will prove $\delta(\hat{\mf{v}}(g^n), \hat{\mf{v}}(g_0))\leq 3\kappa\epsilon^{-1}$. The other bound can be proved in exactly the same way.

First, we will show that for $\tilde{\epsilon}=t\epsilon$, $t=2/3$, we have
\be{lm}\label{APlm:1}
For any $1\leq j\leq 2n-1$,
$$\hat{g}_{j-1}(B(\hat{\mf{v}}_{j-1}, \sqrt{1-\tilde{\epsilon}^2}))\subseteq B(\hat{\mf{v}}_{j}, \sqrt{1-\tilde{\epsilon}^2}).$$
\e{lm}
\be{proof}
Taking any $\hat{x}\in B(\hat{\mf{v}}_{j-1}, \sqrt{1-\tilde{\epsilon}^2})$, we have 
$$\delta(\hat{x}, \hat{\mf{v}}_{j-1})=\sin(\angle (\hat{x}, \hat{\mf{v}}_{j-1}))\leq \sqrt{1-\tilde{\epsilon}^2}.$$
By (a) of Corollary \ref{cor:DK23}, 
\beq\label{AP:induction1}
\begin{aligned}
\delta(\hat{g}_{j-1}\hat{x}, \hat{g}_{j-1}\hat{\mf{v}}_{j-1})=\delta(\hat{g}_{j-1}\hat{x}, \hat{\mf{v}}_{j-1}^1)
=&\sin(\angle (\hat{g}_{j-1}\hat{x}, \hat{\mf{v}}_{j-1}^1))\\
\leq &\kappa\tilde{\epsilon}^{-1}\sqrt{1-\tilde{\epsilon}^2}\leq \frac{c_0\epsilon}{t}\sqrt{1-t^2\epsilon^2}.
\end{aligned}
\eeq
By \eqref{eq:APA3} and Corollary \ref{cor:AP(A)}, we have
\beq\label{AP:induction2}
\begin{aligned}
\delta(\hat{\mf{v}}^*(g_{j-1}), \hat{\mf{v}}(g_{j}))=\delta(\hat{\mf{v}}_{j-1}^{1}, \hat{\mf{v}}_{j})
=&\sin(\angle(\hat{\mf{v}}_{j-1}^{1}, \hat{\mf{v}}_{j})))\\
 \leq &\sqrt{1-\frac{\epsilon^2}{1+2\frac{\kappa^2}{\epsilon^2}}}\leq \sqrt{1-\frac{\epsilon^2}{1+2c_0^2\epsilon^2}}.
\end{aligned}
\eeq
Let $\theta_1=\angle (\hat{g}_{j-1}\hat{x}, \hat{\mf{v}}_{j-1}^1)$ and $\theta_2=\angle(\hat{\mf{v}}_{j-1}^{1}, \hat{\mf{v}}_{j}))$.
Then 
\beq\label{AP:induction3}
\begin{aligned}
\delta(\hat{g}_{j-1}\hat{x}, \hat{\mf{v}}_j)
\leq &|\cos{\theta_1}|\sin{\theta_2}+\sin{\theta_1}|\cos{\theta_2}|\\
=&\sqrt{1-\sin^2{\theta_1}}\sin{\theta_2}+\sin{\theta_1}\sqrt{1-\sin^2{\theta_2}}:=f(\sin{\theta_1}, \sin{\theta_2}).
\end{aligned}
\eeq
With $f(x,y)=y\sqrt{1-x^2}+x\sqrt{1-y^2}$, it is easy to see that both $\frac{\partial f}{\partial x}$ and $\frac{\partial f}{\partial y}$ have the same sign as 
$\sqrt{1-x^2}\sqrt{1-y^2}-xy$.
Thus both $\frac{\partial f}{\partial x}$ and $\frac{\partial f}{\partial y}$ are positive if $x^2+y^2<1$.

By \eqref{AP:induction1} and \eqref{AP:induction2}, we have
\beq\label{AP:induction4}
\sin^2{\theta_1}+\sin^2{\theta_2}\leq \frac{c_0^2\epsilon^2}{t^2}(1-t^2\epsilon^2)+1-\frac{\epsilon^2}{1+2c_0^2\epsilon^2}<1.
\eeq
Here it is enough to have that for $\tilde{\epsilon}=t\epsilon$, $$\frac{c_0^2}{t^2}+c_0^2\epsilon^2<1.$$

Then \eqref{AP:induction3} implies
\beq\label{AP:induction5}
\begin{aligned}
\delta(\hat{g}_{j-1}\hat{x}, \hat{\mf{v}}_j) 
\leq &f\left(\frac{c_0\epsilon}{t}\sqrt{1-t^2\epsilon^2}, \sqrt{1-\frac{\epsilon^2}{1+2c_0^2\epsilon^2}}\right)\\
<&\left(1-\frac{c_0^2\epsilon^2}{2t^2}(1-t^2\epsilon^2)\right)\left(1-\frac{\epsilon^2}{2+4c_0^2\epsilon^2}\right)+\frac{c_0\epsilon^2}{t}(1-\frac{c_0^2\epsilon^2}{2})\\
<&\sqrt{1-t^2\epsilon^2}=\sqrt{1-\tilde{\epsilon}^2}.
\end{aligned}
\eeq
(By our choice of  $c_0=\frac{1}{10}$ and $t=\frac{2}{3}$,  the $\epsilon^2$ coefficients of \eqref{AP:induction5} corresponds to $-\frac{9}{800}-\frac{1}{2}+\frac{3}{20}<-\frac{2}{9}$.)
\e{proof}

This lemma has the following intermediate corollary.
\be{cor}\label{APlm:1cor}
For any $1\leq j\leq 2n-1$ and $1\leq m\leq 2n-j-1$, we have
$$\hat{g}_{j+m-1}\cdots \hat{g}_{j}\hat{g}_{j-1}B(\hat{\mf{v}}_{j-1}, \sqrt{1-\tilde{\epsilon}^2})\subseteq B(\hat{\mf{v}}_{j+m}, \sqrt{1-\tilde{\epsilon}^2}).$$
\e{cor}

Next, let us show
\be{lm}\label{APlm:2}
For any $0\leq j\leq 2n-1$, for any $\hat{x}\in B(\hat{\mf{v}}_j, \sqrt{1-\tilde{\epsilon}^2})$,
$$
\delta(\hat{g}_{2n-1}\cdots \hat{g}_{j+2}\hat{g}_{j}\hat{x}, \hat{\mf{v}}_{j}^{2n-j})\leq \kappa \tilde{\epsilon}^{-1} \left(\frac{\sqrt{2}\pi}{2}\kappa \tilde{\epsilon}^{-2}\right)^{2n-j-1}
$$
\e{lm}
\be{proof}
By Corollary \ref{APlm:1cor}, for any $0\leq m\leq 2n-j-1$, we have that both the two elements $\hat{g}_{j+m-1}\cdots \hat{g}_{j+1}\hat{g}_{j}\hat{x}$ and $\hat{\mf{v}}_j^m$ belong to $B(\hat{\mf{v}}_{j+m}, \sqrt{1-\tilde{\epsilon}^2})$. 
Hence by (a) of \ref{cor:DK23}, we have that for $m=0$,
\beq\label{eq:AP2}
\delta(\hat{g}_{j}\hat{x}, \hat{\mf{v}}_j^{1})=\delta(\hat{g}_{j}\hat{x}, \hat{g}_{j}\hat{\mf{v}}_j)\leq \kappa\tilde{\epsilon}^{-1} \delta(\hat{x}, \hat{\mf{v}}_j)<\kappa\tilde{\epsilon}^{-1}.
\eeq
For $1\leq m\leq 2n-j-1$, by (b) of \ref{cor:DK23}, we have
\beq\label{eq:AP3}
\begin{aligned}
\delta(\hat{g}_{j+m}\hat{g}_{j+m-1}\cdots \hat{g}_{j+1}\hat{g}_{j}\hat{x}, \hat{\mf{v}}_j^{m+1})
=&\delta(\hat{g}_{j+m}\hat{g}_{j+m-1}\cdots \hat{g}_{j+1}\hat{g}_{j}\hat{x}, \hat{g}_{j+m}\hat{\mf{v}}_j^m)\\
\leq &\frac{\sqrt{2}\pi}{2}\kappa\tilde{\epsilon}^{-2}\delta(\hat{g}_{j+m-1}\cdots \hat{g}_{j+1}\hat{g}_{j}\hat{x}, \hat{\mf{v}}_j^m).
\end{aligned}
\eeq
\eqref{eq:AP2} and \eqref{eq:AP3} imply that
\beq\label{eq:AP4}
\delta(\hat{g}_{2n-1}\cdots \hat{g}_{j+2}\hat{g}_{j}\hat{x}, \hat{\mf{v}}_{j}^{2n-j})\leq \kappa \tilde{\epsilon}^{-1} \left(\frac{\sqrt{2}\pi}{2}\kappa \tilde{\epsilon}^{-2}\right)^{2n-j-1}
\eeq
as desired. 
\e{proof}

In particular, combining Corollary \ref{APlm:1cor} with Lemma \ref{APlm:2}, we have the following corollary
\be{cor}\label{APlm:2cor}
For any $1\leq j\leq 2n-1$,
$$
\delta(\hat{\mf{v}}_{j-1}^{2n-j+1}, \hat{\mf{v}}_{j}^{2n-j})\leq \kappa \tilde{\epsilon}^{-1} \left(\frac{\sqrt{2}\pi}{2}\kappa \tilde{\epsilon}^{-2}\right)^{2n-j-1}
$$
\e{cor}

Next, we will show
\be{lm}\label{APlm:3}
For any $\hat{x}\in B(\hat{\mf{v}}_0, \sqrt{1-\tilde{\epsilon}^2})$, we have
$$
\delta(\widehat{(g^n)^* g^n}\hat{x}, \hat{\mf{v}}_0)\leq 3\kappa\epsilon^{-1},\qquad 
\delta(\hat{\mf{v}}_0^{2n}, \hat{\mf{v}}_{2n-1}^1) \leq 3\kappa\epsilon^{-1}.
$$
\e{lm}
\be{proof}
By Corollary \ref{APlm:2cor}, we have
\beq\label{eq:AP5}
\delta(\hat{\mf{v}}_0^{2n}, \hat{\mf{v}}_{2n-1}^1)\leq \sum_{j=1}^{2n-1}\delta(\hat{\mf{v}}_{j-1}^{2n-j+1}, \hat{\mf{v}}_{j}^{2n-j})\leq \kappa\tilde{\epsilon}^{-1}\sum_{j=0}^{2n-2}\left(\frac{\sqrt{2}\pi}{2}\kappa\tilde{\epsilon}^{-2}\right)^{j}.
\eeq
By Lemma \ref{APlm:2},  
\beq\label{eq:AP6}
\delta(\hat{g}_{2n-1}\cdots \hat{g}_1\hat{g}_0\hat{x}, \hat{\mf{v}}_0^{2n})\leq \kappa\tilde{\epsilon}^{-1}\left(\frac{\sqrt{2}\pi}{2}\kappa\tilde{\epsilon}^{-2}\right)^{2n-1}.
\eeq
Hence, combining \eqref{eq:AP5} with \eqref{eq:AP6}, we conclude that 
$$\delta(\widehat{(g^n)^* g^n}\hat{x}, \hat{\mf{v}}(g_0))\leq \kappa\tilde{\epsilon}^{-1}\sum_{j=0}^{2n-1}\left(\frac{\sqrt{2}\pi}{2}\kappa\tilde{\epsilon}^{-2}\right)^{j} \leq \frac{\kappa\tilde{\epsilon}^{-1}}{1-\frac{\sqrt{2}\pi}{2}\kappa\tilde{\epsilon}^{-2}}\leq \frac{1}{t-\frac{\pi c_0}{\sqrt{2}t}}\kappa\epsilon^{-1}<3\kappa\epsilon^{-1}. $$
\e{proof}

We are now ready to give the

\dashuline{Proof of (i).}
Lemma \ref{APlm:3} shows that $\widehat{(g^n)^* g^n}$ maps the ball $B(\hat{\mf{v}}_0, \sqrt{1-\tilde{\epsilon}^2})$ into itself. By Corollary \ref{cor:DK23}, it has the contracting Lipschitz factor $\leq (\frac{\sqrt{2}\pi}{2}\kappa\tilde{\epsilon}^{-2})^{2n}\ll 1$. Therefore the map $\widehat{(g^n)^* g^n}$ has a unique fixed point in $B(\hat{\mf{v}}_0, \sqrt{1-\tilde{\epsilon}^2})$, call it $x_*$. Lemma \ref{APlm:3} implies that 
\beq\label{eq:x*close}
\delta(x_*,\hat{\mf{v}}(g_0))<3\kappa\epsilon^{-1}.
\eeq

The claim will follow once we prove that $x_*=\hat{\mf{v}}(g^n)$. Since $\hat{\mf{v}}(g^n)$ is a fixed point of $\widehat{(g^n)^* g^n}$, it suffices to prove that $\hat{\mf{v}}(g^n)\in B(\hat{\mf{v}}_0, \sqrt{1-\tilde{\epsilon}^2})$. (Notice that $\widehat{(g^n)^* g^n}$ has several fixed points, one for every eigenvalue of $(g^n)^* g^n$.)

Let $\de_*:=\de(\hat{\mf{v}}(g^n), x_*)$. We will show that $\de_*=0$. For any unit vector $v$, we have
$$
\l|\langle\mf{v}(g^n),v\rangle\r|
=\frac{1}{(s_1(g^n))^2}\l|\langle(g^n)^*g^n\mf{v}(g^n),v\rangle\r|
\leq \l|\langle\mf{v}(g^n),\frac{(g^n)^*g^nv}{|(g^n)^*g^n v|}\rangle\r|,
$$
where we used that $|(g^n)^*g^n v|\leq (s_1(g^n))^2$. This lifts to a relation on projective space:
\beq\label{eq:this}
\de(\hat{\mf{v}}(g^n),\widehat{(g^n)^* g^n} \hat{v})\leq \de(\hat{\mf{v}}(g^n),\hat{v}).
\eeq
We apply this with $\hat{v}=\hat{v}_*$ as the ``halfway point'' between $\hat{\mf{v}}(g^n)$ and $x_*$, i.e., $\hat{v}_*$ satisfies 
$$
\de(\hat{\mf{v}}(g^n),\hat{v}_*)=\de(x_*,\hat{v}_*)=\frac{\de_*}{2}.
$$
(This $\hat{v}_*$ can be constructed by following the arc that connects $\hat{\mf{v}}(g^n)$ with $x_*$, assuming that $\mf{v}(g^n)\neq x_*$.) 

Notice that $\hat{v}_*\in B(\hat{\mf{v}}_0, \sqrt{1-\tilde{\epsilon}^2})$ because \eqref{eq:x*close} gives
$$
\de(\hat{\mf{v}}_0,\hat{v}_*)\leq \de(\hat{\mf{v}}_0,x_*)+\frac{\de_*}{2}
\leq \frac{3\kappa}{\eps}+\frac{\de_*}{2}
\leq \frac{3\eps}{10}+\frac{1}{2}<\sqrt{1-\tilde\eps^2}.
$$
Recall that $\widehat{(g^n)^* g^n}$ maps the ball $B(\hat{\mf{v}}_0, \sqrt{1-\tilde{\epsilon}^2})$ into itself with Lipschitz factor $\leq L_0:=(\frac{\sqrt{2}\pi}{2}\kappa\tilde{\epsilon}^{-2})^{2n}\ll 1$. Since $\widehat{(g^n)^* g^n}x_*=x_*$, we have
$$
\de(\widehat{(g^n)^* g^n}\hat{v}_*,x_*)\leq L_0\de(\hat{v}_*,x_*)=L_0\frac{\de_*}{2}.
$$
We combine this bound and \eqref{eq:this}, with $\hat{v}=\hat{v}_*$, to conclude that
$$
\de_*
=\de(\hat{\mf{v}}(g^n), x_*)
\leq \de(\hat{\mf{v}}(g^n),\widehat{(g^n)^* g^n}\hat{v}_*)
+\de(\widehat{(g^n)^* g^n}\hat{v}_*,x_*)
\leq \l(1+\frac{L_0}{2}\r)\de_*.
$$
Since $L_0<1$, this implies $\de_*=0$, i.e., $x_*=\hat{\mf{v}}(g^n)$. Consequently, \eqref{eq:x*close} reads $\delta(\hat{\mf{v}}(g^n), \hat{\mf{v}}(g_0))\leq 3\kappa\epsilon^{-1}$ as claimed in (i) of Theorem \ref{thm:AP}. The other bound in (i) can be proved in exactly the same way.

\dashuline{Proof of (ii)}
By Lemma \ref{lm:DK25}, we have
\beq\label{eq1:AP2}
\prod_{j=1}^{n-1}\frac{\alpha(g^j,g_j)}{\beta(g_{j-1},g_j)}\leq \frac{\rho(g_0,...,g_{n-1})}{\prod_{j=1}^{n-1}\rho(g_{j-1}, g_j)}\leq \prod_{j=1}^{n-1}\frac{\beta(g^j,g_j)}{\alpha(g_{j-1},g_j)}.
\eeq
We will show that the factors
$$\frac{\alpha(g^j,g_j)}{\beta(g_{j-1},g_j)}\ \ \text{and}\ \ \frac{\beta(g^j,g_j)}{\alpha(g_{j-1},g_j)}$$
are very close to $1$, with logarithms of order $\kappa\epsilon^{-2}$. 
From conclusion (i), applied to the sequence of matrices $g_0,g_1,...,g_{j-1}$, we have
\beq\label{eq2:AP2}
\max\{\delta(\hat{\mf{v}}^*(g^j), \hat{\mf{v}}^*(g_{j-1})),\ \delta(\hat{\mf{v}}(g^j), \hat{\mf{v}}(g_0)) \}\leq 3\kappa\epsilon^{-1}
\eeq
for all $1\leq j\leq n$. From \eqref{eq2:AP2}, we deduce that 
\beq\label{eq3:AP2}
\begin{aligned}
\l|\log\frac{\alpha(g^j, g_j)}{\alpha(g_{j-1},g_j)}\r|
\leq &\frac{|\alpha(g^j,g_j)-\alpha(g_{j-1},g_j)|}{\min\{\alpha(g^j, g_j), \alpha(g_{j-1}, g_j)\}}\\
\leq &\frac{\delta(\hat{\mf{v}}^*(g^j), \hat{\mf{v}}^*(g_{j-1}))}{\min\{\alpha(g^j, g_j), \alpha(g_{j-1}, g_j)\}}
\leq \frac{3\kappa\epsilon^{-1}}{\min\{\alpha(g^j, g_j), \alpha(g_{j-1}, g_j)\}}.
\end{aligned}
\eeq
We estimate the minimum as follows, using \eqref{eq2:AP2} and Corollary \ref{cor:AP(A)},
\beq\label{eq:minal}
\begin{aligned}
\min\{\alpha(g^j, g_j), \alpha(g_{j-1}, g_j)\}
\geq& \alpha(g_{j-1}, g_j)-|\alpha(g^j,g_j)-\alpha(g_{j-1},g_j)|\\
\geq&
\alpha(g_{j-1}, g_j)-3\kappa\epsilon^{-1}\\
\geq& \frac{\eps}{\sqrt{1+2\frac{\kappa^2}{\eps^2}}}-3\frac{\kappa}{\epsilon}\\
\geq& \frac{2\eps}{3}.
\end{aligned}
\eeq
In the last step, we used $\frac{\kappa}{\eps}\leq \frac{\eps}{10}$. Returning to \eqref{eq3:AP2}, we have shown
\beq\label{eq:11/3}
\l|\log\frac{\alpha(g^j, g_j)}{\alpha(g_{j-1},g_j)}\r|\leq \frac{9}{2}\frac{\kappa}{\epsilon^{2}}.
\eeq
From the definition of the upper angle $\beta$ and Corollary \ref{cor:AP(A)}, we also have
\beq\label{eq4:AP2}
\l|\log{\frac{\beta(g_{j-1},g_j)}{\alpha(g_{j-1},g_j)}}\r|
\leq \log{\sqrt{1+2\frac{\kappa^2}{\alpha(g_{j-1},g_j)}}}
\leq \log{\sqrt{1+2\frac{\kappa^2}{\epsilon^2}}}\leq \frac{\kappa^2}{\epsilon^2}.
\eeq
Hence \eqref{eq:11/3} and \eqref{eq4:AP2} yield
$$\l|\log{\frac{\alpha(g^j,g_j)}{\beta(g_{j-1},g_j)}}\r|\leq \frac{9}{2}\frac{\kappa}{\epsilon^{2}}+\frac{\kappa^2}{\epsilon^2}<5\frac{\kappa}{\epsilon^{2}}.$$
Together with \eqref{eq1:AP2}, this implies the lower bound in (ii), i.e.,
$$ e^{-5n\kappa/{\epsilon^2}}\leq \frac{\rho(g_0,g_1,...,g_{n-1})}{\rho(g_0,g_1)\cdots \rho(g_{n-2}, g_{n-1})}.$$

For the upper bound, we argue similarly. The only difference occurs in the analog of \eqref{eq4:AP2}, i.e., the estimate
on
$$
\l|\log{\frac{\beta(g^j,g_j)}{\alpha(g^j,g_j)}}\r|.
$$
To bound this quantity, we need to control the gap ratio $(\mathrm{gr}(g^{j}))^{-1}$. This control is provided by the following lemma. 

\be{lm}\label{lm:gap}
We have $\l(\mathrm{gr}(g^{j})\r)^{-1}\leq \kappa':=20 \frac{\kappa}{\eps}$.
\e{lm}

We postpone the proof of the lemma for now. It gives
$$
\l|\log{\frac{\beta(g^j,g_j)}{\alpha(g^j,g_j)}}\r|
\leq \log{\sqrt{1+2\frac{(\kappa')^2}{\alpha(g^j,g_j)}}}
\leq 600\frac{\kappa^2}{\eps^3}.
$$
In the second step, we used that $\alpha(g^j,g_j)\geq 2\eps/3$ by \eqref{eq:minal}. Recalling \eqref{eq:11/3}, one has
$$\l|\log{\frac{\beta(g^j,g_j)}{\al(g_{j-1},g_j)}}\r|\leq \frac{9}{2}\frac{\kappa}{\epsilon^{2}}+600\frac{\kappa^2}{\eps^3}<11\frac{\kappa}{\epsilon^{2}}.$$
In the last estimate, we used that $600\kappa/\eps\leq 60\eps\leq 6$. 

By \eqref{eq1:AP2}, this proves the upper bound in (ii), i.e.,
$$ \frac{\rho(g_0,g_1,...,g_{n-1})}{\rho(g_0,g_1)\cdots \rho(g_{n-2}, g_{n-1})}\leq e^{11n\kappa/{\epsilon^2}}.$$

It remains to prove Lemma \ref{lm:gap}. For this part, we follow page 71 in \cite{DK} and make the constants precise. From Proposition 2.28 in \cite{DK}, we see that
$$
\l(\mathrm{gr}(g^j)\r)^{-1}=\|(D\hat{g^j})_{\hat{\mf{v}}(g^j)}\|.
$$ 
where $(D\hat{g^j})_{\hat{\mf{v}}(g^j)}$ is the derivative of $\hat{g^j}:\mathbb{P}(\R^d)\to \mathbb{P}(\R^d)$, evaluated at $\hat{\mf{v}}(g^j)$. The norm of this derivative is bounded by the Lipschitz constant in a neighborhood of $\hat{\mf{v}}(g^j)$. Since the Lipschitz constant is $\leq L^j$, with $L:=\frac{\sqrt{2}\pi}{2}\kappa\tilde{\epsilon}^{-2}$, everywhere on the ball $B(\hat{\mf{v}}_0, \sqrt{1-\tilde{\epsilon}^2})$, we immediately obtain the bound $(\mathrm{gr}(g^j))^{-1}\leq L^j$. However, this is not good enough for our purposes (note that $L$ is an order one quantity in general). 

We may improve the estimate as follows: Applying statement (i) of the theorem with $n=j$, we obtain that 
$$
\de(\hat{\mf{v}}(g^j),\hat{\mf{v}}(g_0))<3\frac{\kappa}{\epsilon}.
$$
Now, a calculation based on Proposition 2.28 in \cite{DK} shows that
$$
\|(D\hat{g_0})_{\hat{\mf{v}}(g^j)}-(D\hat{g_0})_{\hat{\mf{v}}(g_0)}\|\leq 12\pi \frac{\kappa}{\eps}
$$
and therefore
$$
\|(D\hat{g_0})_{\hat{\mf{v}}(g^j)}\|\leq \|(D\hat{g_0})_{\hat{\mf{v}}(g_0)}\|+12\pi \frac{\kappa}{\eps}
=\mathrm{gr}(g_0)^{-1}+12\pi \frac{\kappa}{\eps}
\leq \kappa +12\pi \frac{\kappa}{\eps}
\leq \l(12\pi+\frac{1}{10}\r) \frac{\kappa}{\eps}.
$$
Finally, we apply the chain rule and estimate the derivative of the product $g_{j-1}\ldots g_1$ by its Lipschitz constant $L^{j-1}$, which satisfies $L^{j-1}\leq L\leq \frac{9\pi}{40\sqrt{2}}$ for $j\geq 2$. Therefore, 
$$
\begin{aligned}
\l(\mathrm{gr}(g^j)\r)^{-1}
=&\|(D\hat{g^j})_{\hat{\mf{v}}(g^j)}\|
\leq \|(D\hat{g}_{j-1}\ldots \hat{g}_1)_{\hat{\mf{v}}(g^j)} g\| \|(D\hat{g}_0)_{\hat{\mf{v}}(g^j)}\|\\
\leq& L\l(12\pi+\frac{1}{10}\r) \frac{\kappa}{\eps}
<20 \frac{\kappa}{\eps}.
\end{aligned}
$$
This proves Lemma \ref{lm:gap} and hence completes the proof of Theorem \ref{thm:AP}.
\qed

\section{Herman's regularization}\label{sec:Herman}
\subsection{Monodromy matrices}

One has  $T_{\omega}^n(x,y)=(x+ny+\frac{n(n-1)}{2}\omega, y+n\omega)$ for any positive integer~$n$, where $T$  is the skew-shift with frequency $\omega$.
Denote   the projection of $\T^2$ onto the first coordinate by $\mathcal{P}$, viz.\ $\mathcal{P}(x,y)=x$.

We consider the Schr\"odinger operator 
$$(H_{\lambda, \omega, x,y}\psi)_n=\psi_{n+1}+\psi_{n-1}+2\lambda \cos{(2\pi \mathcal{P}(T_{\omega}^n(x,y)))}\psi_n$$
with $\lambda>0$. 
This equation has the following cocycle reformulation
\beq\label{def:transfer}
\begin{aligned}
\left(\begin{matrix}
\psi_{n+1}\\
\psi_n
\end{matrix}\right)
&=
\left(\begin{matrix}
E-2\lambda\cos{(2\pi \mathcal{P}(T_{\omega}^n(x,y)))}\ \ &-1\\
1\ \ &0
\end{matrix}\right)
\left(\begin{matrix}
\psi_{n}\\
\psi_{n-1}
\end{matrix}\right) \\
& =:M(\lambda, E; T^n_{\omega}(x,y))
\left(\begin{matrix}
\psi_{n}\\
\psi_{n-1}
\end{matrix}\right).
\end{aligned}
\eeq
Define the  transfer matrices $M_n(\lambda, E; x,y)$ to be  
\beq\label{def:n-transfer}
M_n(\lambda, E; x,y)=
\left\lbrace
\begin{matrix}
\prod_{j=n}^{1} M(\lambda, E; T^j_{\omega}(x,y)),\ \ &n\geq 1,\\
\\
\mathrm{Id},\ \ &n=0,\\
\\
(M_{-n}(\lambda, E; T_{\omega}^{n+1}(x,y)))^{-1},\ \ &n<0. 
\end{matrix}
\right.
\eeq
Then
\begin{align*}
\left(\begin{matrix}
\psi_{n+1}\\
\psi_{n}
\end{matrix}\right)=
M_n(E; x,y)
\left(\begin{matrix}
\psi_1\\
\psi_0
\end{matrix}\right),
\end{align*}
The following function on $\T^{2}$ plays a fundamental role in our analysis: 
\beq\label{def:un}
\begin{aligned}
& u_n(\lambda, E; x,y):
=\frac{1}{n}\log{\|M_n(\lambda, E; x,y)\|}\\
=&\frac{1}{n}\log{\left\lVert \prod_{j=n}^1 \left(\begin{matrix}
E-\lambda e^{2\pi i (x+jy+\frac{j(j-1)}{2}\omega)}-\lambda e^{-2\pi i (x+jy+\frac{j(j-1)}{2}\omega)}\ \ &-1\\
1\ \ &0
\end{matrix} \right)\right\rVert}.
\end{aligned}
\eeq 
Let $z=e^{2\pi ix}, w=e^{2\pi i y}, a=e^{\pi i \omega}$, as well as 
\beq\label{eq:Matrix A}
A_{\lambda}(\lambda, E, z, w, a) := \left(\begin{matrix}
Ez w -\lambda z^2 w^{2} a -\lambda \ol{a} \ \ &-z w\\
z w \ \ &0
\end{matrix} \right) . 
\eeq
Then for $(z,w)\in \partial D_1\times \partial D_1$,
\beq\label{def:un=vn}
\begin{aligned}
u_n(\lambda, E; x,y)
=&\frac{1}{n}\log{\left\lVert \prod_{j=n}^1 \left(\begin{matrix}
E-\lambda z w^{j} a^{j(j-1)} -\lambda z^{-1} {w}^{-j}\, \ol{a}^{j(j-1)} \ \ &-1\\
1\ \ &0
\end{matrix} \right)\right\rVert}\\
=&\frac{1}{n}\log{\left\lVert \prod_{j=n}^1 \left(\begin{matrix}
Ez w^{j}-\lambda z^2 w^{2j} a^{j(j-1)} -\lambda \ol{a}^{j(j-1)} \ \ &-z w^{j}\\
z w^{j}\ \ &0
\end{matrix} \right)\right\rVert}\\
=&\frac{1}{n} \log \Big \| \prod_{j=n}^1A_{\lambda}(E, z, w^{j}, a^{j(j-1)}) \Big \| \\
=&:v_n(\lambda, E; z,w).
\end{aligned}
\eeq
Note that $v_n(\lambda, E; z,w)$ is a pluri-subharmonic function on $\C^2$.  {\em Herman's regularization} refers to the transition from the first to the second line in~\eqref{def:un=vn}, which removes the singularities $z^{-1}$ and~$w^{-1}$.   Note that 
\beq\label{eq:origin}
v_n(\lambda, E; 0,w) = v_n(\lambda, E; z,0) = \log\lambda 
\eeq
For simplicity, we will write $A$ instead of $A_{\lambda}$, since $\lambda$ will be a fixed parameter within some range.  As a general rule for the arguments of the matrix function~$A$, the complex variables $z,w$ will belong to some disk $D_{R}$, whereas $|a|=1$ and $E$ will be real-valued within some range. We will also keep $0<\lambda<1$.

\subsection{Explicit bounds on the monodromy matrices}

As a first step towards obtaining the explicit constants in Definition~\ref{def:Poisson} and~\ref{def:T2 v} we prove the following bounds on $v_{n}$. 

\be{lm}\label{lm:Herman}
Let $0<\lambda<1$, and $R_3\geq 1$. Define
\beq\label{def:ClambdaR}
U(\lambda, R_3):=\frac{1}{2}\log{\left(\left(\lambda (1+\frac{1}{R_3^{2}})^2+\frac{2}{R_3} \right)^2+\frac{2}{R_3^{2}}\right)}.
\eeq
Then for any $E$ with $|E|\leq 2+2\lambda$,
$v_n(\lambda, E, z, w)$ from~\eqref{def:un=vn} satisfies the following estimates 
\begin{itemize}
\item for any $w\in \partial{D}_1$, 
\beq\label{eq:B5Herman}
v_n(\lambda, E; z, w)\leq 2\log{R_3}+U(\lambda, R_3)\ \text{for}\ \forall z\in \ol{D_{R_3}},\ \ \text{and}\ v_n(\lambda, E; 0,w)=\log{\lambda}.
\eeq
\item for any $(z,w)\in \partial{D}_1\times \ol{D_{R_3}}$, we have upper bound
\beq\label{eq:B6Herman}
v_n(\lambda, E; z, w)\leq 
(n+1)\log{R_3} + U(\lambda, R_3) 
\eeq
We also have $v_n(\lambda, E; z,0)=\log{\lambda}$ for any $z\in \partial{D}_1$.
\item for any $(z,w)\in \partial{D}_1\times \partial{D}_1$,
\beq\label{eq:B7Herman}
|v_n(\lambda, E; z, w)|\leq U(\lambda, 1).
\eeq
\end{itemize}
\e{lm}
\begin{rmk}
Let us note that 
\beq\label{eq:4U>1}
4U(\lambda, 1)\geq 2\log{6}>1.
\eeq
\end{rmk}

\be{proof}
Clearly \eqref{eq:B7Herman} follows from \eqref{eq:B5Herman} with $R_3=1$.
   We will use that  for any complex-valued matrix
   \[
   \|A\|^{2} = \| A^{*}A\| \le \Tr(A^{*}A) 
   \]
   For $A$ as in~\eqref{eq:Matrix A} this means that 
   \begin{align*}
   \|A(E, z, w, a)\|^{2} &\le |Ezw - \lambda z^{2}w^{2}a - \lambda\bar a|^{2}+ 2|z|^{2}|w|^{2} \\ 
   & \le (|E| |zw| + \lambda |zw|^{2} + \lambda)^{2} + 2|z|^{2}|w|^{2} 
   \end{align*}
whence
\beq\label{eq:vnleqtrace}
\begin{aligned}
|v_n(E; z,w)|
\leq &\frac{1}{n}\sum_{j=1}^{2n}\log{\|A(E, z, w^{j}, a^{j(j-1)})\|}\\
\leq &\frac{1}{2n}\sum_{j=1}^{n}\log{\left( \left( \lambda (|z|^2 |w|^{2j}+ 1)+|E| |z| |w|^{j} \right)^2+ 2|z|^2 |w|^{2j}\right)}.
\end{aligned}
\eeq
For $w\in \partial{D}_1$ and $|z|\leq R_3$, \eqref{eq:vnleqtrace} yields
\beq
\begin{aligned}
|v_n(E; z,w)|
\leq &\frac{1}{2n}\sum_{j=1}^{n}\log{\left( \left( \lambda (R_3^2+1)+|E| R_3 \right)^2+ 2R_3^2 \right)}\\
\leq &2\log{R_3}+ \frac{1}{2}\log{\left( \left( \lambda (1+\frac{1}{R_3^2})+\frac{2+2\lambda}{R_3} \right)^2+ \frac{2}{R_3^2} \right)}\\
=& 2\log{R_3}+U(\lambda, R_3).
\end{aligned}
\eeq
This proves \eqref{eq:B5Herman}.

Next, we turn to \eqref{eq:B6Herman}. 
For $z\in \partial{D}_1$ and $|w|\leq R_3$, \eqref{eq:vnleqtrace} yields
\beq\nn%\label{eq:vnleqB61}
\begin{aligned}
|v_n(E; z,w)|
\leq &\frac{1}{2n}\sum_{j=1}^{n}\log{\left( \left( \lambda (R_3^{2j}+1)+|E| R_3^{j} \right)^2+ 2R_3^{2j} \right)}\\
\leq &{(n+1)}\log{R_3}+\frac{1}{2n}\sum_{j=1}^{n}\log{\left( \left( \lambda (1+\frac{1}{R_3^{2j}})+(2+2\lambda) \frac{1}{R_3^{j}} \right)^2+ \frac{2}{R_3^{2j}} \right)}.
\end{aligned}
\eeq
Note that the summands are maximized at $j=1$, which gives us the constant $2U(\lambda, R_3)$. Hence, in total
\[
|v_n(E; z,w)| \le (n+1)\log{R_3} + U(\lambda, R_3)
\]
as claimed. 
\e{proof}

\section{Long Sums of Skew-Shift Functions}\label{sec:longsum}
In this section we establish a key large-deviation estimate on the ergodic averages of a pluri-subharmonic function, as defined above, over a long skew-shift orbit. 
The argument is based on~\cite[Lemma 2.6]{BGS}, but deviates from that reference in ways which are essential for our purposes. The precise dependence on all parameters is made explicit and effective. This leads to a somewhat cumbersome formulation which is, however, absolutely necessary for the main application.  We wish to point out  one technical feature of our version of this argument, namely that we only use a trivial bound on the number theoretic divisor function, see the constant $C^*$ below. We have found this to lead to the best constants. We also remark that significant gains in the following proposition would  lead to dramatic improvements in the inductive machinery that we use to control the Lyapunov exponent, cf.~the next two sections. At this point, however, it is not clear how to obtain such gains. 

Recall constants $B_1(R,R_1,R_2)$ is as in \eqref{def:constantsB0B1B2}, $B_3(R,R_1,R_2)$ is as in \eqref{def:constantsB3B4}, and $B_4(R), B_5(R), m_4, m_5$ are as in \eqref{eq:B56m56}. 
In the following we will write $B_1, B_3, B_4, B_5$ and omit the dependence on the radii.

\be{prop}\label{lm:longsumwithtrivialbdd}
Let $\omega=\frac{\sqrt{5}-1}{2}$ be the golden ratio.
Let $v$ be defined as in the beginning of Section~\ref{sec:T2splitting}, let $C=C(R,R_1,R_2)$ be the constant as in \eqref{def:constantCRR1R2}, and impose Definition~\ref{def:epssmall}. 
Let $\delta\in (0, 1/2)$ and $\delta_2,\delta_3>0$ be constants. 
Assuming 
\begin{enumerate}[label=(\roman*).]
\item $C(B_5-m_5) \leq K^{\delta}$,
\item $K\geq 38$,
\item $\exp{\Big(4(\log{K})^{\delta_2}\Big)}\geq K+1$,
\item $21 K^{-\frac{9}{10}+\frac{9}{5}\delta} {(\ln{K})^{\frac{9}{10}+\frac{9}{5}\delta_2}}+4C(B_4-m_4) \leq K^{\delta}(\log{K})^{\delta_2}$.
\end{enumerate}
Then for any positive parameter $C_2>0$, we have
\begin{align*}
&\l| \left\lbrace (x,y)\in \T^2:\ \l|\frac{1}{K}\sum_{k=1}^K v\circ T_{\omega}^k(x,y)-\la v\ra \r|>  \varepsilon_4 \right\rbrace \r|\\
\leq &2\sqrt{2} \, \exp \Big(\frac{\pi}{4} \big [\frac{17}{36}+ \frac{B_1}{4 B_3^2} -\varepsilon_5  \big]  \Big)\\
&\qquad +\sqrt{2} (C(B_4-m_4))^{-1} {K^{\frac{1}{5}-\frac{2\delta}{5}}}{(\ln{K})^{-\frac{1}{5}-\frac{2\delta_2}{5}}} \exp{\left(-2(\ln{K})^{\delta_2}\right)},
\end{align*}
where
\beq\label{def1:epsilon45}
\begin{aligned}
\varepsilon_4&=C_2 {K^{-\frac{1}{10}+\frac{\delta}{5}}}{(\ln{K})^{\frac{1}{10}+\frac{\delta_2}{5}+\delta_3}},\\
\varepsilon_5&=C_2 \left(472.5 +3.2 B_3(B_4-m_4) \sqrt{C} \right)^{-1} (\log{K})^{\delta_3}.
\end{aligned}
\eeq
\e{prop}

Before proving Proposition \ref{lm:longsumwithtrivialbdd}, we will review some background of continued fractions.
\subsection{Continued fractions}
Each $\omega\in [0,1)$ has the following unique expansion
\beq\label{def:conti}
\omega=\frac{1}{a_1+\frac{1}{a_2+\frac{1}{a_3+\frac{1}{\cdots}}}},
\eeq
where $a_i\in \N_+$. 
We will denote this expansion by $\omega=[a_1,a_2,...]$
If $\omega\in \Q$, the expansion is finite, while it is infinite for irrational $\omega$.

Let $\omega\in [0,1)\setminus \Q$, let
\beq\label{def:pnqn}
\frac{p_n}{q_n}=\frac{1}{a_1+\frac{1}{a_2+\frac{1}{\cdots+\frac{1}{a_n}}}}
\eeq
be the continued fraction approximants of $\omega$.
These approximants satisfy the following three properties:
\beq\label{eq:anqnqn+1}
q_{n+1}=a_{n+1}q_n+q_{n-1}\ \ \text{with}\ q_0:=1.
\eeq
\beq\label{eq:qnbest}
\|k\omega\|_{\T}\geq \|q_n\omega\|_{T}\ \ \text{for any}\ q_n\leq |k|<q_{n+1}.
\eeq
\beq\label{eq:qnqn+1}
\frac{1}{q_{n+1}+q_n}\leq \|q_n\omega\|_{\T}\leq \frac{1}{q_{n+1}}.
\eeq

A very special number we are interested in is the golden ration $\omega=\frac{\sqrt{5}-1}{2}$, it is well-known that $\omega$ has continued fraction expansion with $a_i\equiv 1$ for any $i\geq 1$.
Then by \eqref{eq:anqnqn+1}, we have $q_{n+1}=q_n+q_{n-1}$ with $q_0=q_1=1$. It is then easy to find out that for any $n\geq 0$,
%$$
%q_n=\frac{1}{\sqrt{5}}\left( (\frac{\sqrt{5}+1}{2})^{n+1}-(\frac{1-\sqrt{5}}{2})^{n+1} \right).
%$$
Hence for any $n\geq 0$, $q_{n+1}\leq 2q_n$.
Then by \eqref{eq:qnbest} and \eqref{eq:qnqn+1}, we have the following property of the golden ratio:

\be{prop} The golden ratio satisfies 
\beq\label{eq:3}
\|k\omega\|_{\T}\geq \frac{1}{3|k|}\ \ \text{for any}\ k\neq 0.
\eeq
\e{prop}

The optimal bound here is  $\frac{\sqrt{5}+1}{2}+$, but the constant $3$ is sufficient. 
We will use the following corollary of~\eqref{eq:3} repeatedly. 

\be{cor}\label{cor:gold1} 
The golden ratio satisfies the following two properties: 
\begin{itemize}
\item Let $\ell$ be a positive integer such that $\| \ell\omega\|_{\T}\leq \sigma$, then $\ell\geq \frac{1}{3\sigma}$.
\item 
Let $\ell, \tilde{\ell}$ be two distinct positive integers such that $\max{(\| \ell\omega\|_{\T}, \|\tilde{\ell}\omega\|_{\T})}\leq \sigma$, then $|\ell-\tilde{\ell}|\geq \frac{1}{6\sigma}$.
\end{itemize}
\e{cor}

In order to control small divisors we will rely on the following two propositions.  

\be{lm}
For $\theta\in \R$, we have
\beq\label{eq:sinlb}
\left|\sin{\left(\frac{\pi}{2}\theta\right)}\right|\geq \|\theta\|_{\T}.
\eeq
\e{lm}
\be{proof}
If $\theta\in \Z+\frac{1}{2}$, $\left|\sin{\left(\frac{\pi}{2}\theta\right)}\right|=\sin{\left(\frac{\pi}{4}\right)}>\frac{1}{2}=\|\theta\|_{\T}$.

If $\theta\notin \Z+\frac{1}{2}$, then there exists a unique $k\in \Z$, such that $\theta=k+\|\theta\|_{\T}$ (if $\theta\in [k, k+\frac{1}{2})$), or $\theta=k-\|\theta\|_{\T}$ (if $\theta\in (k-\frac{1}{2},k)$). 
If $k$ is an even number, then $\left| \sin{\left(\frac{\pi}{2}\theta\right)}\right|=\sin{\left(\frac{\pi}{2}\|\theta\|_{\T}\right)}\geq \|\theta\|_{\T}$, in which we used $\sin{x}\geq \frac{2}{\pi}x$ for $0\leq x\leq \frac{\pi}{2}$.
If $k$ is odd, then $\left| \sin{\left(\frac{\pi}{2}\theta\right)}\right|=\cos{\left(\frac{\pi}{2}\|\theta\|_{\T}\right)}\geq \cos{\left(\frac{\pi}{4}\right)}>\frac{1}{2}\geq \|\theta\|_{\T}$.
\e{proof}

We will also use the following two estimates:
\be{lm}
For any positive integer $R$,
\beq\label{eq:sumofexp}
\l| \sum_{k=1}^{R} e(k\ell\omega)\r|\leq \min{\left(R, \frac{2}{2\sin{(\pi \| \ell\omega\|_{\T})}}\right)}  \leq \min{\left(R, \frac{1}{2\| \ell\omega\|_{\T}} \right)},
\eeq
and 
\beq\label{eq:sumofexp/2}
\l| \sum_{k=1}^{R} e\left(\frac{1}{2}k\ell \omega \right) \r|\leq \min{\left(R, \frac{1}{\| \ell\omega \|_{\T}} \right)}.
\eeq
\e{lm}
\be{proof}
For $\theta\notin \Z$, we have
\begin{align*}
\left| \sum_{k=1}^R e(k\theta)\right|=\min{\left(R, \left|\frac{e(\theta)-e((k+1)\theta)}{1-e(\theta)}\right| \right)}
\leq \min{\left(R, \frac{1}{\left| \sin{(\pi \theta)} \right|}\right)},
\end{align*}
then \eqref{eq:sumofexp} follows from taking $\theta=\ell\omega$, and using $\sin{(\pi x)}\geq {2}x$ for $0\leq x\leq \frac{1}{2}$.
\eqref{eq:sumofexp/2} follows from taking $\theta=\frac{1}{2}\ell \omega$ and employing \eqref{eq:sinlb}.
\e{proof}

\subsection{Proof of Proposition \ref{lm:longsumwithtrivialbdd}}
Let $\hat{v}(\ell,y)$ and $\hat{v}(x, \ell)$ denote the Fourier coefficients relative to the first and second variables,  respectively,
and by $\hat{v}(\ell_1, \ell_2)$ we mean the Fourier transform in both variables. 
For simplicity, let us omit the dependence of $C(R,R_1,R_2), B_3(R,R_1,R_2)$ on the radii.

First, we note the following estimate as a corollary of Lemma \ref{lm:Fourier}.
\be{cor}\label{cor:uFourier}
For any $\ell\neq 0$, we have
\beq\label{cor:uFourier1}
\begin{aligned}
\sup_{x\in \T}|\hat{v}(x, \ell)|\leq \frac{C}{|\ell |}(B_5-m_5),\\
\sup_{y\in \T}|\hat{v}(\ell, y)|\leq \frac{C}{|\ell |}(B_4-m_4),
\end{aligned}
\eeq
and
\beq\label{cor:uFourier2}
\begin{aligned}
(\sum_{\ell_1\in \Z} |\hat{v}(\ell_1, \ell_2)|^2)^{\frac{1}{2}}\leq \frac{C}{| \ell_2|}(B_5-m_5),\ \ \text{for any}\ \ell_2\neq 0,\\
(\sum_{\ell_2\in \Z} |\hat{v}(\ell_1, \ell_2)|^2)^{\frac{1}{2}}\leq \frac{C}{| \ell_1|}(B_4-m_4),\ \ \text{for any}\ \ell_1\neq 0,
\end{aligned}
\eeq
\e{cor}
\be{proof}
Note that \eqref{cor:uFourier1} follows directly from Lemma \ref{lm:Fourier}. On the other hand, $$(\sum_{\ell_1\in \Z} |\hat{v}(\ell_1, \ell_2)|^2)^{\frac{1}{2}}=\|\hat{v}(\cdot, \ell_2)\|_{L^2(\T)}\leq \sup_{x\in \T} |\hat{v}(x, \ell_2)|.$$
Hence, 
\eqref{cor:uFourier2} reduces to \eqref{cor:uFourier1}.
\e{proof}

With some positive integer $p_1$ to be determined, let
\beq\label{eq:uxlow-high}
\begin{aligned}
v(x,y)=&\sum_{|\ell_1|\leq p_1}\hat{v}(\ell_1,y)e(\ell_1 x)+\sum_{|\ell_1|>p_1}\hat{v}(\ell_1, y) e(\ell_1 x)\\
=:&v_1(x,y)+\tilde{v}_1(x,y),
\end{aligned}
\eeq
where $v_1$ and $\tilde{v}_1$ are the low and high frequency parts,  respectively. 

By Corollary \ref{cor:uFourier}, 
\beq\label{eq:tildeu1L2}
\begin{aligned}
\sup_{y\in \T}\|\tilde{v}_1(\cdot, y)\|_{L^1(\T)}
\leq &\left(\sum_{|\ell_1|>p_1} \sup_{y\in \T} |\hat{v}(\ell_1, y)|^2\right)^{\frac{1}{2}}\\
\leq &C_{5}\, \left(\sum_{|\ell_1|>p_1}\frac{1}{\ell_1^2}\right)^{\frac{1}{2}}(B_4-m_4)\\
\leq &\sqrt{2}C\, (B_4-m_4)  p_1^{-\frac{1}{2}}.
\end{aligned}
\eeq

Next we further decompose $v_1$ into low and high frequency parts in the $y$ variable. 
With some positive integer $p_2$ to be determined, let
\beq\label{eq:u1ylow-high}\begin{aligned}
v_1(x,y)&=\sum_{\substack{|\ell_1|\leq p_1\\ |\ell_2|>p_2}}\hat{v}(\ell_1, \ell_2) e(\ell_1 x+\ell_2 y)+\sum_{\substack{|\ell_1|\leq p_1\\ |\ell_2|\leq p_2}}\hat{v}(\ell_1, \ell_2) e(\ell_1 x+\ell_2 y) \\
&=:v_2(x,y)+v_3(x,y).
\end{aligned}
\eeq
By Corollary \ref{cor:uFourier}, we have
\beq\label{eq:u2L1}
\begin{aligned}
\|v_2(x,y)\|_{L^1(\T^2)} \leq & \|v_2(x,y)\|_{L^2(\T^2)}
=\left( \sum_{|\ell_1|\leq p_1,\ |\ell_2|>p_2}|\hat{v}(\ell_1, \ell_2)|^2\right)^{\frac{1}{2}}\\
\leq &\left( \sum_{\ell_1\in \Z,\ |\ell_2|>p_2}|\hat{v}(\ell_1, \ell_2)|^2\right)^{\frac{1}{2}} 
\leq \left( \sum_{|\ell_2|>p_2}\frac{C^2}{\ell_2^2}(B_5-m_5)^2\right)^{\frac{1}{2}}\\
<& \sqrt{2}C(B_5-m_5) p_2^{-\frac{1}{2}}.
\end{aligned}
\eeq
Hence, by Markov's inequality,
\beq\label{eq:lsbady}
\begin{aligned}
       &\l| \left\lbrace y\in \T:\ \frac{1}{K}\int_{\T}\l| \sum_{k=1}^K v_2\circ T_{\omega}^k(x,y)\r| \d x> t \right\rbrace \r| \\
 \leq &   \sqrt{2}\, C(B_5-m_5) p_2^{-\frac{1}{2}}\, t^{-1}. 
\end{aligned}
\eeq
We denote the set on the left-hand side of \ref{eq:lsbady} by $\mathcal{A}(t)$.

Now let us consider $v_3$, which will lead to small divisor problems.
By Corollary \ref{cor:uFourier} and the fact that $\hat{v}(0,0)=\la v\ra$, separating the cases $\ell_{1}=0$, $\ell_{2}=0$, and $\ell_{1}\ell_{2}\ne0$, yields 
\beq\label{def:lsS1S2}
\begin{aligned}
       &\sup_{(x,y)\in \T^2}\l| \frac{1}{K}\sum_{k=1}^K v_3\circ T^k_{\omega}(x,y)-\la v\ra \r|\\
\leq &\frac{1}{K}\sum_{\substack{|\ell_1|\leq p_1\\ |\ell_2|\leq p_2\\ |\ell_1|+|\ell_2|\neq 0}} |\hat{v}(\ell_1, \ell_2)| \l|\sum_{k=1}^K e\left(\ell_1\left(ky+\frac{k(k-1)\omega}{2}\right)+\ell_2k\omega\right)\r|\\
\leq & \frac{C(B_5-m_5)}{K}\sum_{1\leq |\ell_2|\leq p_2}\frac{1}{|\ell_2|} \l|\sum_{k=1}^{K} e(\ell_2 k\omega)\r|\\
+& \frac{C(B_4-m_4)}{K}\sum_{1\leq |\ell_1|\leq p_1}\frac{1}{|\ell_1|} \l|\sum_{k=1}^K e\left(\ell_1 \left(ky+\frac{k(k-1)\omega}{2}\right)\right)\r| \\
+& \frac{1}{K}\sum_{1\leq |\ell_2|\leq p_2}\ \sum_{1\leq |\ell_1|\leq p_1} |\hat{v}(\ell_1,\ell_2)| \l|\sum_{k=1}^K e\left(\ell_1 \left(ky+\frac{k(k-1)\omega}{2}\right)+\ell_2k\omega\right)\r|
\end{aligned}
\eeq
We now separately consider the sums appearing on the previous three lines. First, 
\[
S_{1}:=  \frac{1}{K}\sum_{1\leq |\ell_2|\leq p_2}\frac{1}{|\ell_2|} \l|\sum_{k=1}^{K} e(\ell_2 k\omega)\r|
\]
Second, by Cauchy-Schwarz, 
\beq
\begin{aligned}
& K^{-1}\sum_{1\leq |\ell_1|\leq p_1}\frac{1}{|\ell_1|} \l|\sum_{k=1}^K e\left(\ell_1 \left(ky+\frac{k(k-1)\omega}{2}\right)\right)\r|  \\
& \leq K^{-1} \left(\sum_{1\leq |\ell_1|\leq p_1}\frac{1}{\ell_1^2}\right)^{\frac{1}{2}} \left( \sum_{1\leq |\ell_1|\leq p_1} \l|\sum_{k=1}^K e\left(\ell_1 \left(ky+\frac{k(k-1)\omega}{2}\right) \right)\r|^2 \right)^{\frac{1}{2}} \\
& \le 2K^{-1}\left( \sum_{1\leq |\ell_1|\leq p_1} \l|\sum_{k=1}^K e\left(\ell_1 \left(ky+\frac{k(k-1)\omega}{2}\right) \right)\r|^2 \right)^{\frac{1}{2}} =: S_{2}
\end{aligned}
\eeq
And, finally, 
\begin{align*}
& K^{-1}\sum_{1\leq |\ell_2|\leq p_2}\ \sum_{1\leq |\ell_1|\leq p_1} |\hat{v}(\ell_1,\ell_2)| \l|\sum_{k=1}^K e\left(\ell_1 \left(ky+\frac{k(k-1)\omega}{2}\right)+\ell_2k\omega\right)\r|  \\
&\leq  K^{-1}\sum_{1\leq |\ell_2|\leq p_2}\ \left(\sum_{1\leq |\ell_1|\leq p_1} |\hat{v}(\ell_1,\ell_2)|^2\right)^{\frac{1}{2}} \left( \sum_{1\leq |\ell_1|\leq p_1} \l|\sum_{k=1}^K e\left(\ell_1 \left(ky+\frac{k(k-1)\omega}{2}\right)+\ell_2k\omega\right)\r|^2 \right)^{\frac{1}{2}}\\ 
&\leq C(B_5-m_5) K^{-1}\!\!\!\!\sum_{1\leq |\ell_2|\leq p_2}\!\! \frac{1}{|\ell_2|} \left( \sum_{1\leq |\ell_1|\leq p_1} \l|\sum_{k=1}^K e\left(\ell_1 \left(ky+\frac{k(k-1)\omega}{2}\right)+\ell_2k\omega\right)\r|^2 \right)^{\frac{1}{2}} \\
& =: C(B_5-m_5) S_{3}
\end{align*}
Returning to  \eqref{def:lsS1S2} we conclude that 
\beq\label{def:lsS1S2*}
\begin{aligned}
       &\sup_{(x,y)\in \T^2}\l| \frac{1}{K}\sum_{k=1}^K v_3\circ T^k_{\omega}(x,y)-\la v\ra \r|\\
\leq &  C(B_5-m_5)S_1+C(B_4-m_4)S_2+C(B_5-m_5) S_3.
\end{aligned}
\eeq

\subsubsection{Estimate of $S_1$}
Applying \eqref{eq:sumofexp} to $S_1$, we infer that 
\beq\label{eq:lsS1}
\begin{aligned}
S_1
< &2 \sum_{1\leq \ell_2\leq p_2}\frac{1}{\ell_2} \min{\left(1, \frac{1}{2K\| \ell_2\omega\|_{\T}}\right)}\\
=& 2 \sum_{1\leq \ell_2\leq p_2} \one_{\{\ell_2:\ \| \ell_2\omega\|\le \frac{1}{2K}\}} \frac{1}{\ell_2} \\ %\min{\left(1, \frac{1}{2K\| \ell_2\omega\|_{\T}}\right)}\\
+& 2 \sum_{j=1}^{2^j\leq 2K} \sum_{1\leq \ell_2\leq p_2}\one_{\{\ell_2:\ \frac{2^{j-1}}{2K}< \| \ell_2\omega\|_{\T}\le \frac{2^j}{2K}\}} \frac{1}{\ell_2} \min{\left(1, \frac{1}{2K\| \ell_2\omega\|_{\T}}\right)}\\
=:&S_{1,1}+S_{1,2}.
\end{aligned}
\eeq

\subsubsection*{Estimate of $S_{1,1}$}
\beq\label{eq:lsS111}
S_{1,1}= 2 \sum_{1\leq \ell_2\leq p_2} \one_{\{\ell_2:\ \| \ell_2\omega\|\leq \frac{1}{2K}\}} \frac{1}{\ell_2}.
\eeq
By Corollary \ref{cor:gold1}, if for some $\ell\geq 1$, $\| \ell\omega\|_{\T}\leq \frac{1}{2K}$, then
\beq\label{eq:lsS112}
\ell\geq \frac{2}{3}K.
\eeq
Moreover, if for some distinct $\ell, \tilde{\ell}\geq 1$, $\max{(\| \ell\omega\|_{\T}, \|\tilde{\ell}\omega\|_{\T})}\leq \frac{1}{2K}$, then
\beq\label{eq:lsS113}
|\ell-\tilde{\ell}|\geq \frac{1}{3}K.
\eeq
By \eqref{eq:lsS111}, \eqref{eq:lsS112} and \eqref{eq:lsS113}, we have
\beq\label{eq:lsS114}
S_{1,1}\leq 2 \sum_{\ell=1}^{\lfloor 3 p_2/K\rfloor-1} \frac{3}{\ell+1}\frac{1}{K}<\frac{6}{K}\ln{\frac{3 p_2}{K}}.
\eeq
For this estimate, and from this point on, we assume that $p_{2}\ge K$. 

\subsubsection*{Estimate of $S_{1,2}$}
\beq\label{eq:lsS121}
\begin{aligned}
S_{1,2}
\leq &\frac{1}{K} \sum_{j=1}^{2^j\leq 2K} \sum_{1\leq \ell_2\leq p_2}\one_{\{\ell_2:\ \frac{2^{j-1}}{2K}< \| \ell_2\omega\|_{\T}\leq \frac{2^j}{2K}\}} \frac{1}{\ell_2 \| \ell_2\omega\|_{\T}}\\
\leq &2 \sum_{j=1}^{2^j\leq 2K} \sum_{1\leq \ell_2\leq p_2}\one_{\{\ell_2:\ \frac{2^{j-1}}{2K}< \| \ell_2\omega\|_{\T}\leq \frac{2^j}{2K}\}} \frac{1}{2^{j-1} \ell_2}
\end{aligned}
\eeq
By Corollary \ref{cor:gold1}, if for some $\ell\geq 1$, $\| \ell\omega\|_{\T}\leq \frac{2^j}{2K}$, then we have
\beq\label{eq:lsS122}
\ell\geq \frac{2}{2^j 3}K.
\eeq
Moreover, if for some distinct $\ell,\tilde{\ell}\geq 1$, $\max{(\| \ell\omega\|_{\T}, \| \tilde{\ell}\omega\|_{\T})}\leq \frac{2^j}{2K}$, then
\beq\label{eq:lsS123}
|\ell-\tilde{\ell}|\geq \frac{1}{2^j 3}K.
\eeq
By \eqref{eq:lsS121}, \eqref{eq:lsS122} and \eqref{eq:lsS123}, we have
\beq\label{eq:lsS124}
\begin{aligned}
S_{1,2}
\leq &2 \sum_{j=1}^{2^j \leq 2K}\ \ \sum_{\ell=1}^{\lfloor 2^j 3p_2/K\rfloor -1} \frac{6}{\ell+1}\frac{1}{K}\\
\leq &\frac{12}{K} \sum_{j=1}^{2^j\leq 2K} \ln{\frac{2^j 3 p_2}{K}}.\\
\leq &\frac{12}{K} \left( \frac{\ln{2}}{2} (\log_2{K}+1)^2 +(\log_2{K}+1) \ln{\frac{3 p_2}{K}}\right).
\end{aligned}
\eeq
Putting \eqref{eq:lsS1}, \eqref{eq:lsS114} and \eqref{eq:lsS124} together, we have
\beq\label{eq:lsS1final}
S_1\leq \frac{6}{K}\left(\ln{2} (\log_2{K}+2)^2 +(2\log_2{K}+3) \ln{\frac{3 p_2}{K}}\right).
\eeq
Henceforth we assume  $p_{2}\ge K>23$, which allows us to 
 simplify \eqref{eq:lsS1final} into 
\beq\label{eq:lsS1final2}
S_1\leq 26 \frac{\ln{K}}{K}\ln{p_2}.
\eeq

\subsubsection{Estimate of $S_2$ and $S_3$} 
In order to estimate $S_2$ and $S_3$, we use the well-known method of Weyl-differencing, cf.~for example~\cite{Mont}.  As a first step, 
\begin{align*}
  &\l|\sum_{k=1}^K e\left(\ell_1 \left(ky+\frac{k(k-1)\omega}{2} \right)+\ell_2k\omega \right)\r|^2\\ 
  =&\left(\sum_{k=1}^K e \left( \ell_1 \left(ky+\frac{k(k-1)\omega}{2} \right)+\ell_2k\omega\right) \right) \left(\sum_{j=1}^K e\left( -\ell_1\left(jy+\frac{j(j-1)\omega}{2}\right)-\ell_2j\omega \right) \right)\\
=&\sum_{j,k=1}^K e\left( \ell_1 \left( (y-\frac{\omega}{2})(k-j)+\frac{\omega}{2} (k^2-j^2)\right)+ \ell_2\omega (k-j) \right)
\end{align*}
Let $\ell=k+j$ and $m=k-j$, hence $\ell\equiv m$ (mod 2). 
Let us denote the sum over even integers $j$'s by $\evensum_j$, and odd integers by $\oddsum_j$.
\beq\label{eq:Weyl1}
\begin{aligned}
  &\l|\sum_{k=1}^K e\left(\ell_1 \left(ky+\frac{k(k-1)\omega}{2}\right)+\ell_2k\omega \right)\r|^2\\ 
=&\evensumm_{m=1-K}^{K-1}\ \ \evensumm_{\ell=2+|m|}^{2K-|m|} e\left(\ell_1m \left(y+\frac{(\ell-1)\omega}{2} \right)+\ell_2 m\omega\right)\\
  &\qquad\qquad\qquad\qquad\quad
  +\oddsumm_{m=1-K}^{K-1}\ \ \oddsumm_{\ell=2+|m|}^{2K-|m|} e\left(\ell_1m \left(y+\frac{(\ell-1)\omega}{2} \right)+\ell_2 m\omega\right),
\end{aligned}
\eeq
In which by \eqref{eq:sumofexp}, with $m=2\tilde{m}$ and $\ell=2\tilde{\ell}$,
\beq\label{eq:Weyl1even}
\begin{aligned}
  &\left|\ \evensumm_{m=1-K}^{K-1}\ \ \evensumm_{\ell=2+|m|}^{2K-|m|} e\left(\ell_1m \left(y+\frac{(\ell-1)\omega}{2} \right)+\ell_2 m\omega\right) \right|\\
=&\left| \sum_{\tilde{m}=-\lfloor\frac{K-1}{2}\rfloor}^{\lfloor \frac{K-1}{2}\rfloor}\ \sum_{\tilde{\ell}=1+|\tilde{m}|}^{K-|\tilde{m}|} 
e\left(\ell_1 \tilde{m} \left(2y+(2\tilde{\ell}-1)\omega \right)+2\ell_2 \tilde{m}\omega\right) \right|\\
\leq & K+2\sum_{\tilde{m}=1}^{\lfloor \frac{K-1}{2}\rfloor} \min{\left(K-2\tilde{m},\ \frac{1}{2\| 2\ell_1 \tilde{m} \omega\|_{\T}} \right)}\\
\leq & K + 2K \sum_{\tilde{m}=1}^{\lfloor \frac{K-1}{2}\rfloor} \min{\left(1,\ \frac{1}{2K\| 2\ell_1 \tilde{m} \omega\|_{\T}} \right)},
\end{aligned}
\eeq
and with $m=2\tilde{m}-1$, $\ell=2\tilde{\ell}-1$, 
\beq\label{eq:Weyl1odd}
\begin{aligned}
&\left|\ \oddsumm_{m=1-K}^{K-1}\ \ \oddsumm_{\ell=2+|m|}^{2K-|m|} e\left(\ell_1m \left(y+\frac{(\ell-1)\omega}{2} \right)+\ell_2 m\omega\right) \right|\\
\leq &\ 2\  \oddsumm_{m=1}^{K-1}\ \ \left|\ \oddsumm_{\ell=2+|m|}^{2K-|m|} e\left(\ell_1m \left(y+\frac{(\ell-1)\omega}{2} \right)+\ell_2 m\omega\right) \right|
\end{aligned}
\eeq
which is further equal to 
\beq\nn 
\begin{aligned}
&2\sum_{\tilde{m}=1}^{\lfloor \frac{K}{2}\rfloor}\ \ \left|\sum_{\tilde{\ell}=\tilde{m}+1}^{K-\tilde{m}+1} 
e\left(\ell_1 (2\tilde{m}-1) \left(y+\tilde{\ell}\omega \right)+\ell_2 (2\tilde{m}-1)\omega\right) \right|\\
\leq & 2\sum_{\tilde{m}=1}^{\lfloor \frac{K}{2}\rfloor} \min{\left(K-2\tilde{m}+1,\ \frac{1}{2\| 2\ell_1 \tilde{m} \omega\|_{\T}}\right)}\\
\leq & 2K \sum_{\tilde{m}=1}^{\lfloor \frac{K}{2}\rfloor} \min{\left(1,\  \frac{1}{2K \| 2\ell_1 \tilde{m} \omega\|_{\T}}\right)}.
\end{aligned}
\eeq
Plugging the estimates of \eqref{eq:Weyl1even} and \eqref{eq:Weyl1odd} into \eqref{eq:Weyl1}, yields
\begin{align*}
\l|\sum_{k=1}^K e\left(\ell_1 \left(ky+\frac{k(k-1)\omega}{2}\right)+\ell_2k\omega \right)\r|^2
\leq K+4K \sum_{m=1}^{\lfloor \frac{K}{2}\rfloor} \min{\left(1,\  \frac{1}{2K \| 2\ell_1 \tilde{m} \omega\|_{\T}}\right)}.
\end{align*}

Hence we have
\beq\label{eq:newS2}
\begin{aligned}
S_2
\leq &\frac{2}{K}\left(2p_1 K+4K \sum_{1\leq |\ell_1|\leq p_1}\ \sum_{m=1}^{\lfloor \frac{K}{2}\rfloor} \min{\left(1, \frac{1}{2K \| 2\ell_1m\omega\|_{\T}}\right)} \right)^{\frac{1}{2}}\\
\leq &\frac{2}{K}\left( 2p_1 K+8K \sum_{\ell_1=1}^{p_1}\sum_{m=1}^{\lfloor \frac{K}{2}\rfloor} \min{\left(1, \frac{1}{2K \| 2\ell_1m\omega\|_{\T}}\right)} \right)^{\frac{1}{2}}\\
\leq &\frac{2}{K}\left( 2p_1 K+8K C^*_{p_1 K} \sum_{j=1}^{p_1 K} \min{\left(1, \frac{1}{2K \| j\omega\|_{\T}}\right)} \right)^{\frac{1}{2}}.
\end{aligned}
\eeq
and similarly
\beq\label{eq:Weyl2}
\begin{aligned}
S_3
\leq &\frac{2}{K}(\ln{(p_2)}+1) \left(2p_1 K+8K C^*_{p_1 K} \sum_{j=1}^{p_1K} \min{\left(1, \frac{1}{2K \| j\omega\|_{\T}}\right)} \right)^{\frac{1}{2}}.
\end{aligned}
\eeq
The constant $C^*_{p_1K}$ comes from over-counting.

Next, we will need to bound $C^*_{p_1 K}$ and $\sum_{j=1}^{p_1K} \min{\left(1, \frac{1}{2K \| j\omega\|_{\T}}\right)}$ separately. 

\subsubsection*{Estimate of $C^*_{p_1K}$}
We first note the following simple bound on $C^*_{p_1K}$
\beq\label{eq:Cp1K}
C^*_{p_1K}\leq \min(p_1, \tau^*(p_1K)),
\eeq
where $\tau^*(p_1 K):=\max_{1\leq n\leq p_1K}\ \tau(n)$ with $\tau(n)$ be the divisor function of $n$.
Standard divisor bound yields the following estimates
\be{lm}\label{lm:boundJacobian}
\beq
\begin{aligned}
\tau^*(p_1K)\leq 
\left\lbrace
\begin{matrix}
(p_1K)^{\frac{1.06602}{\ln\ln(p_1K)}},\\
\\
C(\epsilon)(p_1K)^{\epsilon}.
\end{matrix}
\right.
\end{aligned}
\eeq
The second inequality above holds for any integer $p_1K\geq 1$ with explicit constants $C(\frac{1}{2})=2$ and $C(\frac{1}{8})=42000$.
It also holds for $p_1K\leq 327680000$ with constant $C(\frac{1}{50})=702$.
\e{lm}

Combining \eqref{eq:Cp1K} with Lemma \ref{lm:boundJacobian}, we have that for any $0\leq \alpha\leq 1$,
\beq\label{eq:Cp1K2}
C^*_{p_1K}\leq (C(\epsilon))^{\alpha} (p_1K)^{\alpha \epsilon} p_1^{1-\alpha}.
\eeq
We will only use the case when $\alpha=0$, but we keep this as a reference for the sake of completeness.

\subsubsection*{Estimate of $\sum_{\ell=1}^{p_1K}\min{\left(1, \frac{1}{2K \| \ell\omega\|_{\T}}\right)}$}\

In analogy with \eqref{eq:lsS1}, we will split the term $\sum_{\ell=1}^{p_1K}\min{\left(1, \frac{1}{2K \| \ell\omega\|_{\T}}\right)}$ appearing in~\eqref{eq:Weyl2} as follows: 
\beq\label{def:lsS21S22}
\begin{aligned}
  &\sum_{\ell=1}^{p_1K}\min{\left(1, \frac{1}{2K \| \ell\omega\|_{\T}}\right)}\\
=&\sum_{\ell=1}^{p_1K}\one_{\{\ell:\ \| \ell\omega\|_{\T}< \frac{1}{2K}\}}+\sum_{j=1}^{2^j<2K}\ \sum_{\ell=1}^{p_1K}\one_{\{\ell:\ \frac{2^{j-1}}{2K}
\leq \| \ell\omega\|_{\T}<\frac{2^j}{2K}\}}\ \frac{1}{2K \| \ell\omega\|_{\T}}\\
\leq &\sum_{\ell=1}^{p_1K}\one_{\{\ell:\ \| \ell\omega\|_{\T}< \frac{1}{2K}\}}+\sum_{j=1}^{2^j<2K}\ \sum_{\ell=1}^{p_1K}\one_{\{\ell:\ \frac{2^{j-1}}{2K}\leq \| \ell\omega\|_{\T}<\frac{2^j}{2K}\}}\ \frac{1}{2^{j-1}}\\
=:&S_4+S_5.
\end{aligned}
\eeq

By Corollary \ref{cor:gold1}, if for some $\ell\geq 1$, $\| \ell\omega\|_{\T}<\frac{2^j}{2K}$, then
\beq\label{eq:lsS211}
\ell\geq \frac{2}{2^j 3}K.
\eeq
By Corollary \ref{cor:gold1}, if for some distinct $\ell,\tilde{\ell}\geq 1$, $\max{(\| \ell\omega\|_{\T}, \|\tilde{\ell}\omega\|_{\T})}<\frac{2^j}{K}$, then
\beq\label{eq:lsS212}
|\ell-\tilde{\ell}|\geq \frac{1}{2^j 3}K.
\eeq
Combining \eqref{eq:lsS211} with \eqref{eq:lsS212}, we have
\beq\label{eq:lsS21}
S_4\leq 3 p_1,
\eeq
and
\beq\label{eq:lsS22}
S_5\leq \sum_{j=1}^{2^j<2K} 6 p_1\leq 6(\log_2{K}+1)p_1.
\eeq

In view of  \eqref{eq:newS2}, \eqref{eq:Weyl2}, \eqref{def:lsS21S22}, \eqref{eq:lsS21} and \eqref{eq:lsS22} together, one has 
\beq\label{eq:lsS3final}
\begin{aligned}
S_2
<&\frac{2}{K}\left(2p_1 K+24 p_1K\ C^*_{p_1K} (2\log_2{K}+3) \right)^{\frac{1}{2}}\\
<&8\sqrt{3}(C^*_{p_1K})^{\frac{1}{2}} p_1^{\frac{1}{2}}K^{-\frac{1}{2}}(\log_2{K}+2)^{\frac{1}{2}}\\
<&20 (C_{p_1K}^*)^{\frac{1}{2}}p_1^{\frac{1}{2}}K^{-\frac{1}{2}}(\ln{K)}^{\frac{1}{2}},\ \ \text{for}\ K\geq 23,
\end{aligned}
\eeq
and
\beq\label{eq:lsS2final}
\begin{aligned}
S_3
<&8\sqrt{3} (C^*_{p_1K})^{\frac{1}{2}} (\log{(p_2)}+1)p_1^{\frac{1}{2}}K^{-\frac{1}{2}}(\log_2{K}+2)^{\frac{1}{2}}\\
<& 25 (\ln{p_2}) (C^*_{p_1K})^{\frac{1}{2}} p_1^{\frac{1}{2}}K^{-\frac{1}{2}} (\ln K)^{\frac{1}{2}},\ \ \text{for}\ p_2\geq K\geq 38.
\end{aligned}
\eeq

%\be{rmk}
%(double check)
%Assuming $K>8$ and $p_2\geq 2$, we have a simpler expression
%\beq\label{eq:lsS2final2}
%\begin{aligned}
%S_2
%<&\frac{18 \sqrt{2}}{(\ln{2})^{\frac{1}{2}}}(C^*_{p_1K})^{\frac{1}{2}}  p_1^{\frac{1}{2}}K^{-\frac{1}{2}}(\ln{K})^{\frac{1}{2}}\ln{p_2}\\
%<&31 (C^*_{p_1K})^{\frac{1}{2}}  p_1^{\frac{1}{2}}K^{-\frac{1}{2}}(\ln{K})^{\frac{1}{2}}\ln{p_2}.
%\end{aligned}
%\eeq
%\e{rmk}

\subsubsection{Combining $S_1, S_2, S_3$}
Taking $p_1=\lfloor K^{\delta_1} \rfloor$ and $p_2=\lfloor e^{4(\ln{K})^{\delta_2}} \rfloor$, the estimate of $S_1$, namely \eqref{eq:lsS1final2}, becomes
\beq\label{eq:lsS1final3}
S_1 <105 K^{-1} (\ln{K})^{\delta_2+1}.
\eeq
Recall from the preceding that we impose the conditions $K\ge 38$ and $\exp\big(4(\log K)^{\delta_2}\big)\ge K+1$, note that these are our assumptions (ii) and (iii).
The estimate of $S_2$, \eqref{eq:lsS3final}, becomes
\beq\label{eq:lsS3final2}
S_2<20 (C^*_{K^{1+\delta_1}})^{\frac{1}{2}} K^{-\frac{1-\delta_1}{2}} (\ln{K})^{\frac{1}{2}}.
\eeq
The estimate of $S_3$, \eqref{eq:lsS2final}, becomes
\beq\label{eq:lsS2final3}
S_3<100 (C^*_{K^{1+\delta_1}})^{\frac{1}{2}} K^{-\frac{1-\delta_1}{2}} (\ln{K})^{\delta_2+\frac{1}{2}}.
\eeq
Combining \eqref{eq:lsS1final3}, \eqref{eq:lsS3final2}, \eqref{eq:lsS2final3}, \eqref{def:lsS1S2*} with our assumption (i) that 
$C(B_5-m_5)\leq K^{\delta}$, yields
\beq\label{eq:lsS1S2final}
\begin{aligned}
  &\sup_{(x,y)\in \T^2}\l| \frac{1}{K}\sum_{k=1}^K v_3\circ T^k_{\omega}(x,y)-\la v\ra \r|\\
  <&105 K^{-1+\delta} (\ln{K})^{\delta_2+1} +20 C(B_4-m_4) (C^*_{K^{1+\delta_1}})^{\frac{1}{2}} K^{-\frac{1-\delta_1}{2}} (\ln{K})^{\frac{1}{2}}\\ 
  +&100 (C^*_{K^{1+\delta_1}})^{\frac{1}{2}} K^{-\frac{1-\delta_1}{2}+\delta}  (\ln{K})^{\delta_2+\frac{1}{2}}.
\end{aligned}
\eeq
By \eqref{eq:Cp1K2}, with $\alpha=0$, we have
\begin{align*}
  &\sup_{(x,y)\in \T^2}\l| \frac{1}{K}\sum_{k=1}^K v_3\circ T^k_{\omega}(x,y)-\la v\ra \r|\\
    <&105 K^{-1+\delta} (\ln{K})^{\delta_2+1} +20 C(B_4-m_4) K^{-\frac{1-2\delta_1}{2}}  (\ln{K})^{\frac{1}{2}}\\ 
  +&100 K^{-\frac{1-2\delta_1}{2}+\delta}  (\ln{K})^{\delta_2+\frac{1}{2}}.
\end{align*}
By condition (iv) in our statement of the proposition, we have
\begin{align*}
21 K^{-\frac{1}{2}+\delta-\delta_1}(\log{K})^{\delta_2+\frac{1}{2}}+4 C(B_4-m_4)\leq K^{\delta}(\log{K})^{\delta_2}.
\end{align*}
which implies
\beq\label{eq:lsS1S2final3}
\begin{aligned}
\sup_{(x,y)\in \T^2}\l| \frac{1}{K}\sum_{k=1}^K v_3\circ T^k_{\omega}(x,y)-\la v\ra \r| <105 K^{-\frac{1-2\delta_1}{2}+\delta}  (\ln{K})^{\delta_2+\frac{1}{2}}
=:\varepsilon_0.
\end{aligned}
\eeq

Let 
\beq\label{def:t}
t=C(B_4-m_4)  K^{-\frac{1}{2}\delta_1}
\eeq 
in \eqref{eq:lsbady}, then for any $y\notin \mathcal{A}(t)$, 
with \eqref{eq:tildeu1L2} we have,
\beq\label{eq:tildev1+v2L1norm}
\begin{aligned}
\|\frac{1}{K}\sum_{k=1}^K(\tilde{v}_1+v_2)\circ T_{\omega}^k(\cdot ,y)\|_{L^1(\T)}
\leq &\sqrt{2}C(B_4-m_4) K^{-\frac{1}{2}\delta_1}+t\\
=&(\sqrt{2}+1)C(B_4-m_4) K^{-\frac{1}{2}\delta_1}=:\varepsilon_1.
\end{aligned}
\eeq

Recall that $v=\tilde{v}_1+v_2+v_3$.
For any fixed $y\notin \mathcal{A}(t)$, consider the subharmonic function
$$v_y(z):=\frac{1}{K}\sum_{k=1}^K v\circ T_{\omega}^k(z,y) \ \ \text{with}\ \ z\in D_R.$$
This subharmonic function is going to satisfy the following bounds
$$v_y(z)\leq B_4\ \text{for}\ \forall z\in D_R,\ \ \text{and}\ \ v_y(0)\geq m_4.$$
By \eqref{eq:lsS1S2final3} and \eqref{eq:tildev1+v2L1norm}, we know $v_y(x)-\la v\ra $ can be decomposed into two parts, 
one with small $L^\infty$ norm $\varepsilon_0$, the other with small $L^1$ norm $\varepsilon_1$.  
We will choose $\delta_1$ such that $\varepsilon_0\sim \sqrt{\varepsilon_1}$, in the sense that
\beq
K^{-\frac{1-2\delta_1}{2}+\delta}(\log{K})^{\delta_2+\frac{1}{2}}=K^{-\frac{1}{4}\delta_1},
\eeq
which yields
\beq
K^{\delta_1}=\frac{K^{\frac{2}{5}-\frac{4\delta}{5}}}{(\ln{K})^{\frac{2}{5}+\frac{4\delta_2}{5}}},\  \text{ with } 0<\delta_1<\frac{2}{5}-\frac{4\delta}{5}.
\eeq
Then
\beq\label{eqrmk:1}
\varepsilon_0\leq 105  {K^{-\frac{1}{10}+\frac{\delta}{5}}}{(\ln{K})^{\frac{1}{10}+\frac{\delta_2}{5}}},
\eeq
and
\beq\label{eqrmk:3}
\varepsilon_1= (\sqrt{2}+1) C(B_4-m_4) {K^{-\frac{1}{5}+\frac{2\delta}{5}}}{(\ln{K})^{\frac{1}{5}+\frac{2\delta_2}{5}}}.
\eeq

Applying Corollary \ref{cor:T1Splitting} to $v_y-\la v\ra$, we obtain that for 
\beq\label{def:longsumepsilon2}
\varepsilon_4=C_2  {K^{-\frac{1}{10}+\frac{\delta}{5}}}{(\ln{K})^{\frac{1}{10}+\frac{\delta_2}{5}+\delta_3}}
\eeq 
with some constant $C_2>0$, 
\beq\label{eq:longsumlast}
\begin{aligned}
&\sup_{y\notin \mathcal{A}(t)} \l| \left\lbrace x\in\T:\ \l|\frac{1}{K}\sum_{k=1}^K v\circ T_{\omega}^k(x,y)-\la v\ra \r|>\varepsilon_4 \right\rbrace \r|\\
&\qquad\qquad\qquad\qquad \leq 2\sqrt{2} \, \exp \Big(\frac{\pi}{4} \big [\frac{17}{36}+\frac{B_1}{4B_3^2}-\varepsilon_{2}\delta_{0}^{-1}   \big]  \Big),
\end{aligned}
\eeq
where
\begin{align*}
\delta_{0}= \left(472.5 +2B_3(B_4-m_4)\sqrt{(\sqrt{2}+1) C}\right) {K^{-\frac{1}{10}+\frac{\delta}{5}}}{(\ln{K})^{\frac{1}{10}+\frac{\delta_2}{5}}},
\end{align*}
and hence
\begin{align*}
\varepsilon_4 \delta_0^{-1}
=&C_2 \left(472.5+2B_3 (B_4-m_4) \sqrt{(\sqrt{2}+1) C}\right)^{-1} (\log{K})^{\delta_3}\\
\geq &C_2 \left(472.5+3.2 B_3(B_4-m_4) \sqrt{C} \right)^{-1} (\log{K})^{\delta_3}=:\varepsilon_5.
\end{align*}

\eqref{eq:longsumlast} together \eqref{def:t} and \eqref{eq:lsbady} imply
\begin{align*}
&\l| \left\lbrace (x,y)\in \T^2:\ \l|\frac{1}{K}\sum_{k=1}^K v\circ T_{\omega}^k(x,y)-\la v\ra \r|>  \varepsilon_4 \right\rbrace \r|\\
\leq &2\sqrt{2} \, \exp \Big(\frac{\pi}{4} \big [\frac{17}{36}+\frac{B_1}{4B_3^2} -\varepsilon_5\big]  \Big)+|\mathcal{A}(t)|\\
\leq &2\sqrt{2} \, \exp \Big(\frac{\pi}{4} \big [\frac{17}{36}+\frac{B_1}{4B_3^2} -\varepsilon_5 \big]  \Big)\\
&\qquad +\sqrt{2} (C(B_4-m_4))^{-1} {K^{\frac{1}{5}-\frac{2\delta}{5}}}{(\ln{K})^{-\frac{1}{5}-\frac{2\delta_2}{5}}} \exp{\left(-2(\ln{K})^{\delta_2}\right)},
\end{align*}
as claimed.
$\hfill{} \Box$

\section{Multi-scale estimates}\label{sec:MS}

In this section we commence with the inductive arguments in our multi-scale Lyapunov exponent machinery. In analogy with~\cite{GS, B1} we proceed by combining the large deviation estimates with the avalanche principle. We begin with the basic induction step, which provides a lower bound for the Lyapunov exponent at a large scale from information on the Lyapunov exponents at smaller scales, in combination with level-set estimates. In Proposition \ref{prop:main8}, which is the main result of this section, we will also invoke  the quantitative control on the Birkhoff averages over the skew shift from the previous section in order to derive large deviation estimates at the larger scale. 

The following subsection will serve as an abstract multi-scale scheme to providing a lower bound of the (maximal) Lyapunov exponent, assuming large deviation estimates. 
In our application to the skew shift, the large deviation estimates will come from Proposition~\ref{lm:longsumwithtrivialbdd}, see Section \ref{sec:applytoskew}.

\subsection{Abstract multi-scale scheme}\

\subsubsection{Lyapunov exponent}
Let $(X, \mu, S)$ be an ergodic dynamical system. 
A linear cocycle over $(X, \mu, S)$ is a skew-product map 
\beq\nn
F_A: X\times \R^d \rightarrow X\times \R^d,
\eeq
given by 
\beq\nn 
X\times \R^d \ni (x, v)\rightarrow (Sx, A(x) v)\in X\times \R^d,
\eeq
where
\beq\nn
A: X\rightarrow \mathrm{SL}_d(\R)
\eeq
is a measurable function.

The forward iterates $F_A^n$ of a linear cocycle $F_A$ are given by $F_A^n(x, v)=(S^n x, M_n(x)v)$, where
\beq\nn
M_n(x):=A(S^{n-1}x)\cdots A(Sx)A(x)\ \ \text{(}n\in \N \text{)}.
\eeq

A linear cocycle $A$ is said to be $\mu$-integrable if 
\beq\nn
\int_{X} \log{\|A(x)\|}\ \mathrm{d}\mu<+\infty.
\eeq
Due to the fact that norms are sub-multiplicative with respect to matrix products, 
the sequence of functions $\log{\|A^{(n)}(x)\|}$ are subadditive. 
The F\"{u}rstenberg-Kesten theorem (or Kingman's ergodic theorem) implies that for a $\mu$-integrable linear cocycle, the following $\mu$-a.e. limit exists
\beq\nn
L(A):=\lim_{n\rightarrow\infty} \frac{1}{n}\log{\|M_n(x)\|},
\eeq
and it is called the (maximal) Lyapunov exponent of $A$.
Moreover,
\beq\nn
L(A):=\lim_{n\rightarrow\infty}\int_X \frac{1}{n}\log{\|M_n(x)\|}\, \mu(\d x)=\inf_{n\geq 1}\int_X \frac{1}{n}\log{\|M_n(x)\|}\, \mu(\d x).
\eeq
We point out the since $A\in \mathrm{SL}_d (\R)$, we have $\|M_n(x)\|\geq 1$, hence $L(A)\geq 0$.

\subsubsection{Inductive scheme}\

Let us denote 
\beq\nn
L_n(A):=\int_X \frac{1}{n}\log{\|M_n(x)\|}\ \mathrm{d}\mu(x).
\eeq
For simplicity, we may omit the dependence of $L(A), L_n(A)$ on $A$, and simply write $L$ and $L_n$.

Let us further assume that there exists a constant $C_3>0$, such that 
\beq\label{def:C1Linfty}
\frac{1}{n}\log{\|M_n(x)\|}\leq C_3<+\infty,
\eeq 
for $\mu$-a.e. $x$, uniformly in $n$.
We point out that in our application to the skew-shift model, $C_3$ can be taken as $U(\lambda, 1)$, see \eqref{def:un=vn} and \eqref{eq:B7Herman}.

\begin{defn}\label{def:Bn}
In our multi-scale scheme, we quantify the failure of the F\"{u}rstenberg-Kesten theorem via the following sets $\mathcal{B}_n$:
\beq\nn
\mathcal{B}_n:=\left\lbrace x\in X:\ \l| \frac{1}{n}\log{\|M_n(x)\|}-L_n \r|>\frac{1}{10} L_n \right\rbrace.
\eeq
\end{defn}

The lemma below shows how to inductively obtain estimates of $L_N$ at a larger scale $N$, based on information at a smaller scale $n$.
The key ingredient is the Avalanche Principle, Theorem \ref{thm:AP}.

\be{lm}\label{lm:induction'}
Let $n, N/n\in \N$ be positive integers, and $\delta\in (0,1/2)$.
Let $C_3$ be as in \eqref{def:C1Linfty}.
Assume the following three conditions:
\begin{enumerate}
\item[(a).]
\beq\nn
nL_n\geq 7,
\eeq
\item[(b).]
\beq\nn
L_n-L_{2n}\leq \frac{1}{8}L_n,
\eeq
\item[(c).] 
\beq\nn
\begin{aligned}
\max{(\mu(\mathcal{B}_n), \mu(\mathcal{B}_{2n}))}\leq N^{-\frac{12}{5}+\frac{4\delta}{5}}.
\end{aligned}
\eeq
\end{enumerate}
Then we have
\beq\label{conclusion:LN'}
L_N\geq  L_n-\left(2-\frac{2n}{N}\right)(L_n-L_{2n})-\frac{11}{n}e^{-\frac{1}{2} n L_n}- 8C_3 N^{-\frac{7}{5}+\frac{4\delta}{5}},
\eeq
and
\beq\label{conclusion:LNL2N'}
L_N-L_{2N}\leq \frac{n}{N} (L_n-L_{2n})+\frac{22}{n}e^{-\frac{1}{2} n L_n}+ 24 C_3 N^{-\frac{7}{5}+\frac{4\delta}{5}}.
\eeq
\e{lm}

\subsubsection{Multi-scale scheme}\ 

Lemma~\ref{lm:sequencescales}
below shows how information on a sequence of larger and larger scales determines the limit $L$.

\be{lm}\label{lm:sequencescales}
Let $\delta\in (0,1/2)$ be a constant, and $C_3$ be as in \eqref{def:C1Linfty}.
Let $\{N_m\}_{m=0}^{\infty} \in \N$ be a sequence of positive integers, such that $10\leq N_{m}/N_{m-1}\in \N$ for $1\leq m$.
Assume that the following hold for an integer $j\geq 0$ (note that (2)-(4) below are empty conditions for $j=0$):
\begin{enumerate}
\item \[N_0 L_{N_0}\geq 7, \ \ \text{and}\ \ L_{N_0}-L_{2N_0}\leq \frac{1}{8}L_{N_0},\]
\item \[\sum_{m=0}^{j-1}\frac{1}{N_m}e^{-\frac{1}{2}N_m L_{N_m}}<\frac{1}{512}L_{N_0},\]
\item \[\sum_{m=1}^{j} N_m^{-\frac{7}{5}+\frac{4\delta}{5}}<\frac{1}{1280 C_3}L_{N_0},\]
\item \[\max{(\mu(\mathcal{B}_{N_m}), \mu(\mathcal{B}_{2N_m}))}\leq N_{m+1}^{-\frac{12}{5}+\frac{4\delta}{5}},\ \ \text{for }0\leq m\leq j-1.\]
\end{enumerate}
Then we have the following four estimates for $j\geq 0$.

First,
\beq\label{eq:LNj}
\begin{aligned}
L_{N_j}
\geq &L_{N_0}-\left(2-\frac{2N_0}{N_j}\right)(L_{N_0}-L_{2N_0})\\
-&\sum_{m=1}^{j}\left(\frac{11}{N_{m-1}}e^{-\frac{1}{2}N_{m-1}L_{N_{m-1}}}+8C_3 N_{m}^{-\frac{7}{5}+\frac{4\delta}{5}}\right)\\
-&\sum_{m=1}^{j-1}\left(2-\frac{2N_{m}}{N_j}\right)\left(\frac{22}{N_{m-1}}e^{-\frac{1}{2}N_{m-1}L_{N_{m-1}}}+24C_3 N_{m}^{-\frac{7}{5}+\frac{4\delta}{5}}\right),
\end{aligned}
\eeq
in which $\sum_{m=1}^0=\sum_{m=1}^{-1}:\equiv 0$.

Second,
\beq\label{eq:LNjL2Nj2}
L_{N_j}-L_{2N_j}\leq \frac{N_0}{N_j}(L_{N_0}-L_{2N_0})+\sum_{m=1}^{j}\frac{N_m}{N_j} \left(\frac{22}{N_{m-1}}e^{-\frac{1}{2} N_{m-1} L_{N_{m-1}}}+24 C_3 N_{m}^{-\frac{7}{5}+\frac{4\delta}{5}} \right),
\eeq
in which $\sum_{m=1}^0:\equiv 0$.

Third,
\beq\label{eq:LNjL2Nj}
L_{N_j}-L_{2N_j}\leq \frac{1}{8}L_{N_j}
\eeq

Fourth,
\beq\label{eq:LNjLN0}
L_{2N_j}\geq \frac{1}{2}L_{N_0},\ \ \text{and}\ \ N_j L_{N_j}\geq 7.
\eeq
\e{lm}

\subsection{Application to the skew-shift model}\label{sec:applytoskew}
The two cornerstones of the abstract multi-scale scheme are:
\begin{itemize}
\item Initial scale $N_0$ estimates, including (1). $N_0 L_{N_0}\geq 7$, (2). $L_{N_0}-L_{2N_0}\leq \frac{1}{8}L_{N_0}$ and (3).  Large deviation estimates of $\mu(\mathcal{B}_{N_0})$ and $\mu(\mathcal{B}_{2N_0})$.
\item Large deviation estimates of $\mu(\mathcal{B}_{N_j})$ and $\mu(\mathcal{B}_{2N_j})$ for $j\geq 1$.
\end{itemize}
In this subsection, we will present a machinery that inductively provides large deviation estimates for scales $N_j$, $j\geq 1$,
thus reducing the problem to the initial scale only.
The key ingredients are the Avalanche Principle and the quantitative control of the ergodic averages of plurisubharmonic functions over a skew-shift orbit, Proposition~\ref{lm:longsumwithtrivialbdd}.

Let us recall some notations from Section \ref{sec:Herman}: 
$u_n(\lambda, E; x,y)=\frac{1}{n}\log{\|M_n(\lambda, E; x,y)\|}$, 
and $v_n(\lambda, E; z, w)$ be the complexification of $u_n$ from $\T^2$ to $\C^2$, as in \eqref{def:un=vn}.
The constant $U(\lambda, 1)$, as in \eqref{eq:B7Herman}, is a uniform (in $n$ and $E$) $L^{\infty}$ upper bound on $u_n(\lambda, E; x,y)$.
For simplicity, we will omit the dependence of $u_n(x,y)$, $v_n(z,w)$, $\log{\|M_n(x,y)\|}$ and $L_n$ on $\lambda, E$, since $\lambda$ will be fixed, and our estimates are uniform in $E\in [-2-2\lambda, 2+2\lambda]$.
Recall also from Lemma \ref{lm:Herman} with $R_3=R$ that the bounds with respect to $v_n$ satisfy
\beq\label{eq:B56m56}
\begin{aligned}
B_4-m_4&=2\log{R}+U(\lambda, R)-\log{\lambda},\\
B^{(n)}_5-m^{(n)}_5&=(n+1)\log{R}+U(\lambda, R)-\log{\lambda}.
\end{aligned}
\eeq
Let us finally also recall the constants $B_3(R, R_1, R_2)$ as in \eqref{def:constantsB3B4}, $C(R,R_1,R_2)$ as in \eqref{def:constantCRR1R2},
and $C_0(R,R_1,R_2)$ is as in \eqref{eq:expintegrable}.
In the following we will write $B_3, C, C_0$ for simplicity.

\be{prop}\label{prop:main8}
Let $\omega=\frac{\sqrt{5}-1}{2}$ be the golden ratio.
Let $\delta\in (0, 1/2)$ and \[\delta_2,\delta_3, \delta_4, C_2, C_4, C_5>0\] be constants. 
{{Let $n, N\in \N$ be positive integers and assume that $n$ divides $N$.}}
In addition to the conditions {{(a)-(c)}} in Lemma~\ref{lm:induction'} and Definition~\ref{def:epssmall}, assume further that
{{the following  properties hold for both $\tilde{N}=N$ and $2N$:}}
\begin{enumerate}[label=(\Roman*).]
\item $C\left((2n+1)\log{R}+U(\lambda, R)-\log{\lambda}\right) \leq \tilde{N}^{\delta}$,
\item $\tilde{N}\geq 38$,
\item $\exp{\Big(4(\log{\tilde{N}})^{\delta_2}\Big)}\geq \tilde{N}+1$,
\item $21 \tilde{N}^{-\frac{9}{10}+\frac{9}{5}\delta} {(\ln{\tilde{N}})^{\frac{9}{10}+\frac{9}{5}\delta_2}}+4C(B_4-m_4) \leq \tilde{N}^{\delta}(\log{\tilde{N}})^{\delta_2}$,
\item $2n \tilde{N}^{-1}(L_n-L_{2n})+8U(\lambda,1)\tilde{N}^{-\frac{7}{5}+\frac{4\delta}{5}}+5U(\lambda, 1)n \tilde{N}^{-1}<C_2 \tilde{N}^{-\frac{1}{10}+\frac{\delta}{5}}{(\ln{\tilde{N}})^{\frac{1}{10}+\frac{\delta_2}{5}+\delta_3}}$,
\item $22 n^{-1} \exp{\left(-nL_n/2\right)}<C_2 \tilde{N}^{-\frac{1}{10}+\frac{\delta}{5}}{(\ln{\tilde{N}})^{\frac{1}{10}+\frac{\delta_2}{5}+\delta_3}}$,
\item $4\sqrt{2} \, \exp \Big(\frac{\pi}{4} \big [\frac{17}{36}+\frac{B_1}{4B_3^2} -C_2 (472.5 +3.2 B_3(B_4-m_4) \sqrt{C})^{-1} (\log{\tilde{N}})^{\delta_3}  \big]  \Big)\leq \tilde{N}^{-\frac{7}{5}+\frac{4\delta}{5}}$,
\item $2\sqrt{2} (C(B_4-m_4))^{-1} {\tilde{N}^{\frac{1}{5}-\frac{2\delta}{5}}}{(\ln{\tilde{N}})^{-\frac{1}{5}-\frac{2\delta_2}{5}}} \exp{\left(-2(\ln{\tilde{N}})^{\delta_2}\right)}\leq \tilde{N}^{-\frac{7}{5}+\frac{4\delta}{5}}$,
\item $\tilde{N}>(\log{R}+U(\lambda,R)-\log{\lambda})(\log{R})^{-1}$,
\item $C_4 (\log{\tilde{N}})^{\delta_4}>4$,
\item $C_5 (\log{\tilde{N}})^{\delta_4}>C_4$.
\end{enumerate}
Then  {{the following holds for both $\tilde{N}=N$ and $2N$:}}
\beq\nn
\begin{aligned}
&\l| \left\lbrace (x,y)\in \T^2:\ |v_{\tilde{N}}(x,y)-L_{\tilde{N}} |>C_2 C_5  {\tilde{N}^{-\frac{1}{10}+\frac{\delta}{5}}}{(\ln{\tilde{N}})^{\frac{1}{10}+\frac{\delta_2}{5}+\delta_3+2\delta_4}} \right\rbrace \r|\\
\leq &2(2C_0)^{\frac{1}{2}} \exp{\left(\frac{-\pi C_2 C_4 (\ln{\tilde{N}})^{\delta_4}}{144 C_2 +{{48 B_3\sqrt{2U(\lambda,1) (B_4-m_4)}}} (\log{\tilde{N}})^{-\frac{1}{10}-\frac{\delta_2}{5}-\delta_3}}\right)}\\
&+C_0\exp{\left(\frac{-\pi C_2 C_5  {(\ln{\tilde{N}})^{\delta_4}}}{18 C_2 C_4 +{{96 B_3 U(\lambda, 1)\sqrt{\log{R}}}} {(\ln{\tilde{N}})^{-\frac{1}{10}-\frac{\delta_2}{5}-\delta_3-\delta_4}}} \right)}.
\end{aligned}
\eeq
\e{prop}
\begin{rmk}
Note that our conditions ($\mathrm{I}$)-($\mathrm{IV}$) correspond to (i)-(iv) of Proposition~\ref{lm:longsumwithtrivialbdd}.
In particular, ($\mathrm{I}$) is (i) of Proposition \ref{lm:longsumwithtrivialbdd} with $B_5^{(2n)}-m_5^{(2n)}$ given in~\eqref{eq:B56m56}.
\end{rmk}

\subsection{Proofs}\

Before proving Lemma \ref{lm:induction'} and Proposition \ref{prop:main8}, we will first give a quick proof of Lemma~\ref{lm:sequencescales} based on Lemma \ref{lm:induction'}.
\subsection*{Proof of Lemma \ref{lm:sequencescales}}
For $m\geq 1$, let us denote 
\beq\nn
\begin{aligned}
\alpha_m:=&{11}{N_{m-1}^{-1}}e^{-\frac{1}{2} N_{m-1} L_{N_{m-1}}}+ 8C_3 N_m^{-\frac{7}{5}+\frac{4\delta}{5}},\\
\beta_m:=&{22}{N_{m-1}^{-1}}e^{-\frac{1}{2} N_{m-1} L_{N_{m-1}}}+ 24 C_3 N_m^{-\frac{7}{5}+\frac{4\delta}{5}}.
\end{aligned}
\eeq
Note that in terms of $\alpha$ and $\beta$, our conditions~(2) and (3) in the statement of the lemma become
\beq\label{eq:alphaupper}
\sum_{m=1}^{j}\alpha_m\leq \frac{11}{512}L_{N_0}+\frac{8}{1280}L_{N_0}=\frac{71}{2560}L_{N_0},
\eeq
and 
\beq\label{eq:betaupper}
\sum_{m=1}^{j} \beta_m\leq \frac{22}{512}L_{N_0}+\frac{24}{1280}L_{N_0} =\frac{79}{1280}L_{N_0}.
\eeq

Our proof is based on induction on~$j$.
Note that for the induction base case $j=0$: 
\eqref{eq:LNjL2Nj2}, \eqref{eq:LNjL2Nj} and \eqref{eq:LNjLN0} follow directly from condition (1). 
\eqref{eq:LNj} follows from the fact that $L_{N_0}-L_{2N_0}\geq 0$.

Now let us suppose Lemma \ref{lm:sequencescales} holds for $j=J$ for some $J\geq 0$.
Note that conditions (2)-(4) with $j=J+1$ already imply those with $j=J$.
Hence by our inductive assumption, \eqref{eq:LNj}, \eqref{eq:LNjL2Nj2}, \eqref{eq:LNjL2Nj} and \eqref{eq:LNjLN0} hold for $j=J$, whence
\beq\label{eq:LNJ}
\begin{aligned}
L_{N_J}\geq L_{N_0}-\left(2-\frac{2N_0}{N_J}\right)(L_{N_0}-L_{2N_0})-\sum_{m=1}^{J} \alpha_{m} -\sum_{m=1}^{J-1}\left(2-\frac{2N_{m}}{N_J}\right) \beta_{m},
\end{aligned}
\eeq
\beq\label{eq:LNJL2NJ}
L_{N_{J}}-L_{2N_{J}}\leq \frac{N_0}{N_J}(L_{N_0}-L_{2N_0})+\sum_{m=1}^J \frac{N_m}{N_J}\beta_m,
\eeq
\beq\label{eq:LNJL2NJ2}
L_{N_J}-L_{2N_J}\leq \frac{1}{8}L_{N_J},
\eeq
and 
\beq\label{eq:LNJLN0}
N_J L_{N_J}\geq 7.
\eeq

Note that \eqref{eq:LNJL2NJ2}, \eqref{eq:LNJLN0} and our condition (4) in the statement of the lemma with $m=J$ verify the conditions of Lemma~\ref{lm:induction'} for $n=N_J$ and $N=N_{J+1}$.
Therefore Lemma~\ref{lm:induction'} implies
\beq\label{eq:LNJ1}
L_{N_{J+1}}\geq  L_{N_{J}}-\left(2-\frac{2N_{J}}{N_{J+1}}\right)(L_{N_{J}}-L_{2N_{J}}) -\alpha_{J+1},
\eeq
and
\beq\label{eq:LNJ+1L2NJ+1}
\begin{aligned}
L_{N_{J+1}}-L_{2N_{J+1}}
\leq &\frac{N_{J}}{N_{J+1}} (L_{N_{J}}-L_{2N_{J}})+\beta_{J+1}
\end{aligned}
\eeq

Plugging \eqref{eq:LNJL2NJ} into \eqref{eq:LNJ+1L2NJ+1}, we obtain
\beq\label{eq:J+1difference}
\begin{aligned}
L_{N_{J+1}}-L_{2N_{J+1}} \leq &\frac{N_0}{N_{J+1}}(L_{N_0}-L_{2N_0})+\frac{N_{J}}{N_{J+1}}\left(\sum_{m=1}^{J}\frac{N_m}{N_{J}}\beta_m\right)+\beta_{J+1}\\
=& \frac{N_0}{N_{J+1}}(L_{N_0}-L_{2N_0})+\sum_{m=1}^{J+1}\frac{N_m}{N_{J+1}}\beta_{m},
\end{aligned}
\eeq
this proves \eqref{eq:LNjL2Nj2} for $j=J+1$.

Plugging \eqref{eq:LNJ} and \eqref{eq:LNJL2NJ} with $j=J+1$ into \eqref{eq:LNJ1}, we have
\beq\label{eq:J+1lb}
\begin{aligned}
L_{N_{J+1}}\geq &L_{N_0}
-\left(2-\frac{2N_0}{N_J}\right)(L_{N_0}-L_{2N_0})-\sum_{m=1}^{J}\alpha_{m}-\sum_{m=1}^{J-1}\left(2-\frac{2N_{m}}{N_J}\right)\beta_{m}\\
&\ \ \ \ \ -\left(2-\frac{2N_J}{N_{J+1}}\right)\left(\frac{N_0}{N_J}(L_{N_0}-2L_{2N_0})+\sum_{m=1}^{J}\frac{N_m}{N_J} \beta_m\right)-\alpha_{J+1}\\
=&L_{N_0}-\left(2-\frac{2N_0}{N_{J+1}}\right)(L_{N_0}-L_{2N_0})-\sum_{m=1}^{J+1}\alpha_{m}-\sum_{m=1}^{J}\left(2-\frac{2N_{m}}{N_{J+1}}\right)\beta_{m}.
\end{aligned}
\eeq
This proves \eqref{eq:LNj} for $j=J+1$.

Combining \eqref{eq:J+1difference}, \eqref{eq:J+1lb} with the fact that $10\leq N_{j+1}/N_j$ for any $j\geq 0$, yields 
\beq\nn
\begin{aligned}
&8(L_{N_{J+1}}-L_{2N_{J+1}})-L_{N_{J+1}}\\
\leq &-L_{N_0}+\left(2+\frac{6N_0}{N_{J+1}}\right)(L_{N_0}-L_{2N_0})+\sum_{m=1}^{J+1}\alpha_{m}+\sum_{m=1}^{J+1}\left(2+\frac{6N_m}{N_{J+1}}\right)\beta_m\\
\leq &-L_{N_0}+\left(2+\frac{6}{10}\right)(L_{N_0}-L_{2N_0})+\sum_{m=1}^{J+1}\alpha_{m}+8\sum_{m=1}^{J+1}\beta_m.
\end{aligned}
\eeq
Using \eqref{eq:alphaupper} and \eqref{eq:betaupper},  and the fact that $L_{N_0}-L_{2N_0}\leq \frac{1}{8}L_{N_0}$, we conclude that 
\beq\label{eq:LJ+1L2J+1/8}
8(L_{N_{J+1}}-L_{2N_{J+1}})-L_{N_{J+1}}\leq -\frac{393}{2560}L_{N_0}<0.
\eeq
This proves \eqref{eq:LNjL2Nj} for $j=J+1$.

By \eqref{eq:J+1lb} and the fact that $N_{m+1}\geq 10N_m$  for any $m\geq 0$, we have
\beq\label{eq:LJ+1lower}
L_{N_{J+1}}\geq L_{N_0}-\left(2-\frac{1}{5}\right)(L_{N_0}-L_{2N_0})-\sum_{m=1}^{J+1}\alpha_m -\left(2-\frac{1}{5}\right)\sum_{m=1}^{J}\beta_m.
\eeq
Plugging \eqref{eq:alphaupper} and \eqref{eq:betaupper} with $j=J+1$ into \eqref{eq:LJ+1lower}, and using that $L_{N_0}-L_{2N_0}\leq \frac{1}{8}L_{N_0}$, yields
\beq\label{eq:LJ+1lower'}
L_{N_{J+1}}\geq \frac{8143}{12800}L_{N_0},
\eeq
which also implies $N_J L_{N_J}\geq 7$.
\eqref{eq:LNjLN0} with $j=J+1$ then follows from \eqref{eq:LJ+1L2J+1/8} and \eqref{eq:LJ+1lower'}, indeed,
\beq\nn
L_{2N_{J+1}}\geq \frac{7}{8}L_{N_{J+1}}\geq \frac{7}{8} \times \frac{8143}{12800}L_{N_0}\geq \frac{1}{2}L_{N_0},
\eeq
as desired.
$\hfill{} \Box$

\subsection*{Proof of Lemma \ref{lm:induction'}}
Let ${{\tilde{N}}}=N$ or $2N$.
Let us define
\beq\label{def:setfailsAP'}
\mathcal{B}^{({{\tilde{N}}})}:=\left(\bigcup_{j=0}^{{{\tilde{N}}}-1} S^{-j}\mathcal{B}_n\right) \bigcup \left(\bigcup_{j=0}^{{{\tilde{N}}}-n-1} S^{-j}\mathcal{B}_{2n}\right).
\eeq
We have the following measure estimates for $\mathcal{B}^{({{\tilde{N}}})}$ by condition~(c) of the statement of the lemma,
\beq\label{est:setfailsAP'}
\mu(\mathcal{B}^{(N)})\leq \frac{2N-n}{N^{\frac{12}{5}-\frac{4\delta}{5}}}\leq 2N^{-\frac{7}{5}+\frac{4\delta}{5}},\ \ \text{and}\ \ \mu(\mathcal{B}^{(2N)})\leq 4N^{-\frac{7}{5}+\frac{4\delta}{5}}.
\eeq
Taking any $x\notin \mathcal{B}^{({{\tilde{N}}})}$, by our definitions of $\mathcal{B}_n$ and $\mathcal{B}_{2n}$, we have
\beq\label{eq:induction1}
e^{\frac{11}{10} n L_n}\geq \|M_n(S^j x)\|\geq e^{\frac{9}{10} n L_n}=:\kappa^{-\frac{1}{2}},\ \ \text{for any }0\leq j\leq {{\tilde{N}}}-1, 
\eeq
and 
\beq\label{eq:induction2}
\|M_{2n}(S^j x)\|\geq e^{\frac{9}{5} n L_{2n}},\ \ \text{for any }0\leq j\leq {{\tilde{N}}}-n-1.
\eeq
Hence for any $0\leq j\leq {{\tilde{N}}}-n-1$, \eqref{eq:induction1} and \eqref{eq:induction2} imply that 
\beq\nn
\frac{\|M_{2n}(S^j x)\|}{\|M_n(S^{j+n} x)\|\ \|M_n(S^j x)\|}\geq \exp{\left(2n(L_{2n}-L_n)-\frac{1}{5} n(L_n+L_{2n})\right)}=:\epsilon.
\eeq
We now need to verify the assumptions of the Avalanche Principle, Theorem \ref{thm:AP}.
First, by sub-additivity of $\log{\|M_n(x)\|}$, we have $L_{2n}\leq L_n$. 
This together with our assumptions~(a) and (b) yield
\beq
\epsilon= \exp{\left(\frac{9}{5} n L_{2n}- \frac{11}{5} nL_n\right)}\leq e^{-\frac{2}{5}nL_n}\leq e^{-\frac{14}{5}}<\frac{1}{10},
\eeq
and 
\beq\nn
\kappa\epsilon^{-2}=\exp{\left(-2nL_n+\frac{3}{5} nL_n+\frac{2}{5} nL_{2n}+4n(L_n-L_{2n})\right)}\leq  e^{-\frac{1}{2} nL_n}\leq e^{-\frac{7}{2}}<\frac{1}{10}.
\eeq
Applying Theorem \ref{thm:AP} to $x\notin \mathcal{B}^{({{\tilde{N}}})}$, we conclude that for each $0\leq k\leq n-1$, 
\beq\nn
\begin{aligned}
&\frac{1}{{{\tilde{N}}}} \l| \log{\|M_{{\tilde{N}}}(S^{k} x)\|}+
\sum_{j=1}^{\frac{{{\tilde{N}}}-2n}{n}} \log{\|M_n(S^{jn+k} x)\|} \right.\\
&\qquad\qquad\qquad \left. - \sum_{j=0}^{\frac{{{\tilde{N}}}-2n}{n}}\log{\|M_{2n}(S^{jn+k} x)\|} \r| \leq \frac{11}{n}\kappa \epsilon^{-2}\leq \frac{11}{n} e^{-\frac{1}{2} nL_n}.
\end{aligned}
\eeq
Summing over $k\in [0,n-1]$ and dividing by $n$, and finally applying  the  triangle inequality yields
\beq\label{eq:applyAP2'}
\begin{aligned}
\l| \frac{1}{n} \sum_{k=0}^{n-1} \frac{1}{{{\tilde{N}}}} \log{\|M_{{\tilde{N}}}(S^k x)\|}+\frac{1}{{{\tilde{N}}}}\sum_{j=n}^{{{\tilde{N}}}-n-1} \frac{1}{n}\log{\|M_n(S^j x)\|} \right.\\
\left. -\frac{2}{{{\tilde{N}}}}\sum_{j=0}^{{{\tilde{N}}}-n-1}\frac{1}{2n} \log{\|M_{2n}(S^j x)\|} \r|\leq \frac{11}{n} e^{-\frac{1}{2} nL_n}.
\end{aligned}
\eeq
Integrating over $x\in X$, and using our definition of $C_3$ \eqref{def:C1Linfty}, we infer the following due to \eqref{est:setfailsAP'}:
\beq\label{eq:LNLnL2n'} 
\begin{aligned}
\left| L_N+\frac{N-2n}{N} L_n-2\frac{N-n}{N} L_{2n} \right|
\leq &\frac{11}{n}e^{-\frac{1}{2} nL_n}+4C_3  \mu(\mathcal{B}^{(N)})\\
<& \frac{11}{n}e^{-\frac{1}{2} nL_n}+ 8C_3 N^{-\frac{7}{5}+\frac{4\delta}{5}},
\end{aligned}
\eeq
and
\beq\label{eq:L2NLnL2n'}
\begin{aligned}
\left| L_{2N}+\frac{N-n}{N} L_n-\frac{2N-n}{N} L_{2n} \right|\leq &\frac{11}{n}e^{-\frac{1}{2} nL_n}+4C_3 \mu(\mathcal{B}^{(2N)})\\
<&\frac{11}{n}e^{-\frac{1}{2} nL_n}+ 16C_3 N^{-\frac{7}{5}+\frac{4\delta}{5}}.
\end{aligned}
\eeq
From \eqref{eq:LNLnL2n'}, we conclude that 
\begin{align*}
L_N\geq L_n-\left(2-\frac{2n}{N}\right)(L_n-L_{2n}) -\frac{11}{n}e^{-\frac{1}{2} nL_n}- 8C_3 N^{-\frac{7}{5}+\frac{4\delta}{5}}.
\end{align*}
This proves \eqref{conclusion:LN'}.

Taking the difference between \eqref{eq:LNLnL2n'} and \eqref{eq:L2NLnL2n'}, we obtain
\begin{align*}
L_N-L_{2N}\leq \frac{22}{n}e^{-\frac{1}{2} nL_n}+ 24C_3 N^{-\frac{7}{5}+\frac{4\delta}{5}} +\frac{n}{N} (L_n-L_{2n}).
\end{align*}
This proves \eqref{conclusion:LNL2N'}.

\subsection*{Proof of Proposition \ref{prop:main8}}
This will be a continuation of the proof of Lemma \ref{lm:induction'}.
Note that all the constants $C_3$'s will be replaced by $U(\lambda, 1)$.
{{Let ${{\tilde{N}}}$ be either $N$ or $2N$.}}
Let us consider the first term in \eqref{eq:applyAP2'}: 
\beq
\begin{aligned}
       &\l| \frac{1}{n} \sum_{k=0}^{n-1} \frac{1}{{{\tilde{N}}}} \log{\|M_{{{\tilde{N}}}}(T^k_{\omega}(x,y))\|}-\frac{1}{{{\tilde{N}}}}\log{\|M_{{{\tilde{N}}}}(x,y)\|}\r| \\
\leq &\frac{1}{n{{\tilde{N}}}} \sum_{k=0}^{n-1} \l| \log{\|M_{{{\tilde{N}}}}(T^k_{\omega}(x,y))\|}-\log{\|M_{{{\tilde{N}}}}(x,y)\|}\r|\\
\leq &\frac{1}{n{{\tilde{N}}}} \sum_{k=0}^{n-1} \left( \log{\|M_k(x,y)}+\log{\|M_k(T_{\omega}^{{{\tilde{N}}}}(x,y))\|} \right)\\
\leq &\frac{1}{n{{\tilde{N}}}} \sum_{k=0}^{n-1} 2k U(\lambda,1)\\
\leq &\frac{U(\lambda, 1) n}{{{\tilde{N}}}}.
\end{aligned}
\eeq
Hence \eqref{eq:applyAP2'} leads to
\beq\label{eq:applyAP3'}
\begin{aligned}
\l| \frac{1}{{{\tilde{N}}}}\log{\|M_{{{\tilde{N}}}}(x,y)\|}+\frac{1}{{{\tilde{N}}}}\sum_{j=n}^{{{\tilde{N}}}-n-1} \frac{1}{n}\log{\|M_n(T^j_{\omega}(x,y))\|} \right.\\
\left. -\frac{2}{{{\tilde{N}}}}\sum_{j=0}^{{{\tilde{N}}}-n-1}\frac{1}{2n} \log{\|M_{2n}(T^j_{\omega}(x,y))\|} \r|\leq \frac{11}{n}e^{-\frac{1}{2} nL_n}+\frac{U(\lambda, 1)n}{{{\tilde{N}}}},
\end{aligned}
\eeq
holds for $(x,y)\notin \mathcal{B}^{({{\tilde{N}}})}$.  This implies
\beq\label{eq:applyAP4'}
\begin{aligned}
\l| \frac{1}{{{\tilde{N}}}}\log{\|M_{{{\tilde{N}}}}(x,y)\|}+\frac{1}{{{\tilde{N}}}}\sum_{j=0}^{{{\tilde{N}}}-1} \frac{1}{n}\log{\|M_n(T^j_{\omega}(x,y))\|} \right.\\
\left. -\frac{2}{{{\tilde{N}}}}\sum_{j=0}^{{{\tilde{N}}}-1}\frac{1}{2n} \log{\|M_{2n}(T^j_{\omega}(x,y))\|} \r|\leq \frac{11}{n}e^{-\frac{1}{2} nL_n}+\frac{5U(\lambda, 1)n}{{{\tilde{N}}}}.
\end{aligned}
\eeq
Now we will apply Proposition \ref{lm:longsumwithtrivialbdd} to $v_n$ and $v_{2n}$ {{with $K={{\tilde{N}}}$}}. 
Note that conditions ($\mathrm{I}$)-($\mathrm{IV}$) ensure the applicability of that proposition.
Therefore following \eqref{def1:epsilon45}, we define
\beq\label{def:epsilon45}
\begin{aligned}
\varepsilon_4&=C_2  {{{\tilde{N}}}^{-\frac{1}{10}+\frac{\delta}{5}}}{(\ln{{{\tilde{N}}}})^{\frac{1}{10}+\frac{\delta_2}{5}+\delta_3}},\\
\varepsilon_5&=C_2 \left(472.5 +3.2 B_3(B_4-m_4) \sqrt{C} \right)^{-1} (\log{{{\tilde{N}}}})^{\delta_3}.
\end{aligned}
\eeq
For $\tilde{n}=n$ or $2n$, denote
\begin{align*}
\mathcal{C}_{\tilde{n}}:=\left\lbrace (x,y)\in \T^2:\ \l|\frac{1}{{{\tilde{N}}}}\sum_{j=0}^{{{\tilde{N}}}-1} v_{\tilde{n}}\circ T_{\omega}^j(x,y)-L_{\tilde{n}} \r|> \varepsilon_4 \right\rbrace .
\end{align*}
Then Proposition \ref{lm:longsumwithtrivialbdd} implies that 
\beq\label{meas:Cn'}
\begin{aligned}
\max{(|\mathcal{C}_n|, |\mathcal{C}_{2n}|)} \leq &2\sqrt{2} \, \exp \Big(\frac{\pi}{4} \big [\frac{17}{36}+\frac{B_1}{4B_3^2} -\varepsilon_5  \big]  \Big)\\
&\qquad +\sqrt{2} (C(B_4-m_4))^{-1} {{{\tilde{N}}}^{\frac{1}{5}-\frac{2\delta}{5}}}{(\ln{{{\tilde{N}}}})^{-\frac{1}{5}-\frac{2\delta_2}{5}}} \exp{\left(-2(\ln{{{\tilde{N}}}})^{\delta_2}\right)}.
\end{aligned}
\eeq
Let \[\mathcal{E}:=\mathcal{C}_n\cup \mathcal{C}_{2n}\cup\mathcal{B}^{({{\tilde{N}}})}.\]
For $(x,y)\notin \mathcal{E}$, by \eqref{eq:applyAP4'} we have that,
\beq\label{eq:applyAP5'}
\begin{aligned}
\l| \frac{1}{{{\tilde{N}}}}\log{\|M_{{{\tilde{N}}}}(x,y)\|}+L_n-2L_{2n}\r| \leq \frac{11}{n} e^{-\frac{1}{2} nL_n}+\frac{5U(\lambda, 1)n}{{{\tilde{N}}}}+2\varepsilon_4.
\end{aligned}
\eeq
Together with \eqref{eq:LNLnL2n'}, this implies  that  for any $(x,y)\notin \mathcal{E}$,
\beq\label{eq:applyA6'}
\begin{aligned}
       &\big\| \ind_{\mathcal{E}^c}(x,y) \big(\frac{1}{{{\tilde{N}}}}\log{\|M_{{{\tilde{N}}}}(x,y)\|}-L_{{{\tilde{N}}}} \big) \big\|_{L^{\infty}(\T^2)}\\
\leq &|L_{{{\tilde{N}}}}+L_n-2L_{2n}|+\frac{11}{n} e^{-\frac{1}{2} nL_n}+\frac{5U(\lambda, 1)n}{{{\tilde{N}}}}+2\varepsilon_4\\
\leq &\frac{2n}{{{\tilde{N}}}}(L_n-L_{2n})+\frac{22}{n}e^{-\frac{1}{2} nL_n}+8U(\lambda,1){{{\tilde{N}}}}^{-\frac{7}{5}+\frac{4\delta}{5}}+\frac{5U(\lambda, 1)n}{{{\tilde{N}}}}+2\varepsilon_4\\
=:&\varepsilon_0.
\end{aligned}
\eeq
{{Recall \eqref{est:setfailsAP'} states that}
\beq\nn
|\mathcal{B}^{(N)}|<2N^{-\frac{7}{5}+\frac{4\delta}{5}},\ \ \text{and}\ \ |\mathcal{B}^{(2N)}|\leq 4N^{-\frac{7}{5}+\frac{4\delta}{5}}.
\eeq
Since $\delta>0$, this clearly leads to 
\beq\label{est:setfailsAP}
|\mathcal{B}^{(\tilde{N})}|\leq 2^{\frac{17}{5}-\frac{4\delta}{5}} \tilde{N}^{-\frac{7}{5}+\frac{4\delta}{5}}<16 \tilde{N}^{-\frac{7}{5}+\frac{4\delta}{5}}.
\eeq
}
Combining \eqref{est:setfailsAP} with \eqref{meas:Cn'}, we obtain
\beq\label{meas:BCnC2n}
\begin{aligned}
&\big\| \ind_{\mathcal{E}}(x,y) \big(\frac{1}{{{\tilde{N}}}}\log{\|M_{{{\tilde{N}}}}(x,y)\|}-L_{{{\tilde{N}}}} \big) \big\|_{L^1(\T^2)}\\
\leq &U(\lambda, 1) |\mathcal{C}_n\cup \mathcal{C}_{2n}\cup \mathcal{B}^{({{\tilde{N}}})}|\\
\leq &U(\lambda, 1)\Bigg\{ 4\sqrt{2} \, \exp \Big(\frac{\pi}{4} \big [\frac{17}{36}+\frac{B_1}{4B_3^2}-\varepsilon_5  \big]  \Big)+{{16}}{{\tilde{N}}}^{-\frac{7}{5}+\frac{4\delta}{5}}\\
&\qquad\qquad+2\sqrt{2} (C(B_4-m_4))^{-1} {{{\tilde{N}}}^{\frac{1}{5}-\frac{2\delta}{5}}}{(\ln{{{\tilde{N}}}})^{-\frac{1}{5}-\frac{2\delta_2}{5}}} \exp{\left(-2(\ln{{{\tilde{N}}}})^{\delta_2}\right)}\Bigg\}=:\varepsilon_1.
\end{aligned}
\eeq
By our conditions ($\mathrm{V}$)-($\mathrm{VIII}$), we have
\beq\label{eq:epsilon01upper}
\begin{aligned}
\varepsilon_0& \leq 4\varepsilon_4,\\
\varepsilon_1& \leq 18 U(\lambda, 1) {{\tilde{N}}}^{-\frac{7}{5}+\frac{4\delta}{5}}=:18U(\lambda,1) {{\tilde{N}}}^{\eta},
\end{aligned}
\eeq
Indeed, note that the right-hand-sides of ($\mathrm{V}$) and ($\mathrm{VI}$) are precisely $\varepsilon_4$, which allow us to bound $\varepsilon_0$ by $4\varepsilon_4$. On the other hand,  ($\mathrm{VII}$) and ($\mathrm{VIII}$) simply state that the sum of the first two terms in the braces defining $\varepsilon_1$
are  bounded by the third term, viz.~$2{{\tilde{N}}}^{-\frac{7}{5}+\frac{4\delta}{5}}$. 

Let
\beq\label{def:repsilon23}
\begin{aligned}
\varepsilon_3:&=C_4 \varepsilon_4 {(\ln{{{\tilde{N}}}})^{\delta_4}},\\
\varepsilon_2:&=C_5 \varepsilon_4 {(\ln{{{\tilde{N}}}})^{2\delta_4}},\\
r&=\frac{1-2\delta}{7-4\delta}\ \in (0,1).
\end{aligned}
\eeq
Our condition ($\mathrm{X}$) and ($\mathrm{XI}$) ensure that $\varepsilon_0\leq 4\varepsilon_4<\varepsilon_3<\varepsilon_2$.
Recall that $B_4-m_4$ and $B_5^{({{\tilde{N}}})}-m_5^{({{\tilde{N}}})}$ are as in \eqref{eq:B56m56}. Therefore,  by our condition ($\mathrm{IX}$) we have
\beq\label{eq:B6m6upper}
\left(B_5^{({{\tilde{N}}})}-m_5^{({{\tilde{N}}})}\right){{\tilde{N}}}^{-1}<2\log{R}.
\eeq
Applying Lemma \ref{lm:T2Splitting} to $v_{{{\tilde{N}}}}$, and 
taking \eqref{eq:B6m6upper} into account,
we obtain
\beq\label{eq:vNdev}
\begin{aligned}
&\l| \left\lbrace (x,y)\in \T^2:\ |v_{{{\tilde{N}}}}(x,y)-L_{{{\tilde{N}}}} |>\varepsilon_2\right\rbrace \r|\\
\leq &2(2C_0)^{\frac{1}{2}} \exp{\left(-\pi\left(36 \varepsilon_0+16 B_3\sqrt{\varepsilon_1^r (B_4-m_4)}\right)^{-1}\varepsilon_3\right)}\\
&+C_0\exp{\left(-\pi\left(18 \varepsilon_3+16 B_3\sqrt{U(\lambda, 1)} \sqrt{\varepsilon_1^{1-r} (B_5^{({{\tilde{N}}})}-m_5^{({{\tilde{N}}})})}\right)^{-1}\varepsilon_2\right)}\\
\leq &2(2C_0)^{\frac{1}{2}} \exp{\left(-\pi\left(36 \varepsilon_0+16 B_3\sqrt{\varepsilon_1^r (B_4-m_4)}\right)^{-1}\varepsilon_3\right)}\\
&+C_0\exp{\left(-\pi\left(18 \varepsilon_3+16 B_3\sqrt{U(\lambda, 1)} \sqrt{2 \varepsilon_1^{1-r} {{\tilde{N}}} \log{R} }\right)^{-1}\varepsilon_2\right)}.
\end{aligned}
\eeq
Note that we changed $B_6$ into $U(\lambda,1)$ in this expression.
Inserting  our estimates of $\varepsilon_0, \varepsilon_1$, see \eqref{eq:epsilon01upper}, and choices of $\varepsilon_2, \varepsilon_3, r$, see \eqref{def:repsilon23}, into \eqref{eq:vNdev}, we arrive at
\beq\nn
\begin{aligned}
&\l| \left\lbrace (x,y)\in \T^2:\ |v_{{{\tilde{N}}}}(x,y)-L_{{{\tilde{N}}}} |>C_5 \varepsilon_4 {(\ln{{{\tilde{N}}}})^{2\delta_4}} \right\rbrace \r|\\
\leq &2(2C_0)^{\frac{1}{2}} \exp{\left(\frac{-\pi C_4 \varepsilon_4 {(\ln{{{\tilde{N}}}})^{\delta_4}}}{144 \varepsilon_4+16 B_3\sqrt{(18U(\lambda, 1))^r {{{\tilde{N}}}}^{\eta r} (B_4-m_4)}}\right)}\\
&+C_0\exp{\left(\frac{-\pi C_5 \varepsilon_4 {(\ln{{{\tilde{N}}}})^{2\delta_4}}}{18 C_4 \varepsilon_4 {(\ln{{{\tilde{N}}}})^{\delta_4}}+16 B_3\sqrt{U(\lambda, 1)} \sqrt{2 (18U(\lambda,1))^{1-r} {{{\tilde{N}}}}^{\eta(1-r)+1} \log{R}}}\right)}.
\end{aligned}
\eeq
Using \eqref{eq:4U>1}, and $0\leq r\leq 1$, we  estimate
\[(18U(\lambda,1))^r\leq 18U(\lambda,1),\ \ \text{and}\ \ (18U(\lambda,1))^{1-r}\leq 18U(\lambda,1),\]
respectively. 
Hence we have
\beq\nn
\begin{aligned}
&\l| \left\lbrace (x,y)\in \T^2:\ |v_{{{\tilde{N}}}}(x,y)-L_{{{\tilde{N}}}} |>C_5 \varepsilon_4 {(\ln{{{\tilde{N}}}})^{2\delta_4}} \right\rbrace \r|\\
\leq &2(2C_0)^{\frac{1}{2}} \exp{\left(\frac{-\pi C_4 \varepsilon_4 {(\ln{{{\tilde{N}}}})^{\delta_4}}}{144 \varepsilon_4+ 48 B_3 \sqrt{2 U(\lambda, 1) {{{\tilde{N}}}}^{\eta r} (B_4-m_4)}}\right)}\\
&+C_0\exp{\left(\frac{-\pi C_5 \varepsilon_4 {(\ln{{{\tilde{N}}}})^{2\delta_4}}}{18 C_4 \varepsilon_4 {(\ln{{{\tilde{N}}}})^{\delta_4}}+96 B_3 U(\lambda, 1) \sqrt{{{{\tilde{N}}}}^{\eta(1-r)+1} \log{R}}}\right)}.
\end{aligned}
\eeq

Plugging in our choice of $\varepsilon_4$, see \eqref{def:epsilon45}, and noting that the powers of ${{\tilde{N}}}$ in numerators and 
denominators  cancel out due to our choice of $r$, we infer that 
\beq\nn
\begin{aligned}
&\l| \left\lbrace (x,y)\in \T^2:\ |v_{{{\tilde{N}}}}(x,y)-L_{{{\tilde{N}}}} |>C_5 \varepsilon_4 {(\ln{{{\tilde{N}}}})^{2\delta_4}} \right\rbrace \r|\\
\leq &2(2C_0)^{\frac{1}{2}} \exp{\left(\frac{-\pi C_2 C_4 (\ln{{{\tilde{N}}}})^{\frac{1}{10}+\frac{\delta_2}{5}+\delta_3+\delta_4}}{144 C_2 (\ln{{{\tilde{N}}}})^{\frac{1}{10}+\frac{\delta_2}{5}+\delta_3}+48 B_3\sqrt{2U(\lambda, 1)(B_4-m_4)}}\right)}\\
&+C_0\exp{\left(\frac{-\pi C_2 C_5 {(\ln{{{\tilde{N}}}})^{\frac{1}{10}+\frac{\delta_2}{5}+\delta_3+2\delta_4}}}{18 C_2 C_4  {(\ln{{{\tilde{N}}}})^{\frac{1}{10}+\frac{\delta_2}{5}+\delta_3+\delta_4}}+96 B_3 U(\lambda, 1)\sqrt{\log{R}}} \right)}\\
=&2(2C_0)^{\frac{1}{2}} \exp{\left(\frac{-\pi C_2 C_4 (\ln{{{\tilde{N}}}})^{\delta_4}}{144 C_2 +48 B_3\sqrt{2U(\lambda,1) (B_4-m_4)} (\log{{{\tilde{N}}}})^{-\frac{1}{10}-\frac{\delta_2}{5}-\delta_3}}\right)}\\
&+C_0\exp{\left(\frac{-\pi C_2 C_5  {(\ln{{{\tilde{N}}}})^{\delta_4}}}{18 C_2 C_4 +96 B_3 U(\lambda, 1)\sqrt{\log{R}} {(\ln{{{\tilde{N}}}})^{-\frac{1}{10}-\frac{\delta_2}{5}-\delta_3-\delta_4}}} \right)},
\end{aligned}
\eeq
as desired.
\qed

\bigskip

The readers will note that the constants were chosen in such a way that in the final steps of the proof only powers of $\log N$ remained inside of the exponential. We have found this to be more efficient over intermediate scales. The following,  final,  section of this paper will show how our work up to this point allows for such concrete estimates with specific numbers.

\section{Explicit numbers and proof of Theorem \ref{thm:main}}
\label{sec:numbers}

Our goal here  is to make concrete choices of our parameters so as to arrive at an actual multi-scale scheme for the skew-shift operator from Section~\ref{sec:Herman}.   Let $\mathcal{B}_n$ be as in Definition~\ref{def:Bn}.  The values below were found to be convenient ones, but clearly many other choices could have been made.  

\begin{defn}
\label{def:values} 
Set
\[
R:=4,\; R_{1}:=3,\; R_{2}:= 2,
\]
in Definition~\ref{def:Poisson}. The coupling constant in \eqref{def:transfer} is required to obey $\lambda\in [\frac12,1]$.  Further, in Proposition~\ref{prop:main8}  set  $$\delta:=\frac18,\;  \delta_{2}:=1, \; \delta_{3}:=2,\; \delta_{4}:=\frac32.$$
as well as
\[
C_{2}:= 203, \; C_{4}:=\frac{145}{\pi},\; C_{5}:= \frac{850}{\pi}.
\]
\end{defn}

By an explicit computation, the condition in Definition~\ref{def:epssmall} is satisfied. In fact, one has
\[
B_3^{2}-{289}\big(B_{0}+\frac{13}{20\log(R/R_{1})}\big) > {61}>0
\]

\be{prop}\label{prop:main9}
Let $\omega=\frac{\sqrt{5}-1}{2}$ be the golden ratio, and consider the model~\eqref{def:transfer} with  $\lambda\in [\frac12,1]$ arbitrary but fixed. Let $a\geq 7$ and let $n,N$ be positive integers such that $N\geq 10^{12}$, $n$ divides $N$,  and 
\beq\label{eq:nN}
10^{13} (n+1)^{8} \le N,
\qquad 
\frac{N}{(\log N)^{\frac{92}{3}}}< \frac{1}{2}\l(\frac{203}{22} e^{a/2} n\r)^{\frac{40}{3}}.
\eeq
Impose the conditions 
\begin{enumerate}
\item[(a).]
\beq\nn
nL_n\geq a,
\eeq
\item[(b).]
\beq\nn
L_n-L_{2n}\leq \frac{1}{8}L_n,
\eeq
\item[(c).] 
\beq\nn
\begin{aligned}
\max( |\mathcal{B}_n|, |\mathcal{B}_{2n}|) \leq N^{-\frac{23}{10}}.
\end{aligned}
\eeq
\end{enumerate}
Then we have
\beq\label{eq:LDTN9}
\l| \Big\{  (x,y)\in \T^2:\ |v_{{\tilde{N}}}(x,y)-L_{{\tilde{N}}} |> {5.5\times 10^4} {{\tilde{N}}}^{-\frac{3}{40}} (\log {{\tilde{N}}})^{\frac{53}{10}} \Big\}  \r|\\
\leq  10\exp\left(-(\log {{\tilde{N}}})^{\frac32}\right), 
\eeq
holds for ${{\tilde{N}}}=N$ and $2N$.
\e{prop}

\be{rmk}
We will choose the constant $a=7$ along the inductive multiscale procedure. The only exception is the first step of the induction, which goes from the scale $N_0$ to $N_1$, where for some of our main results we use a larger value of $a$. This is made possible by assumption (i) on the Lyapunov exponent at the initial scale and it is the reason behind the relatively small values of $N_0$ in Theorems \ref{thm:main2} and \ref{thm:main3}.
\e{rmk}

\begin{proof}
We need to check the hypotheses of Proposition~\ref{prop:main8}. We already verified~\eqref{eq:epsklein}, and the conditions of Lemma~\ref{lm:induction'} hold by assumption. 
Let ${{\tilde{N}}}=N$ or $2N$.
The function $$[0.5,1]\to\R:\lambda\mapsto U(\lambda, 4)-\log\lambda$$ is decreasing and positive. 
Hence 
\beq\label{eq:Ulambda4}
0.5<U(1,4)\leq U(\lambda, 4)-\log\lambda\leq U(\frac{1}{2}, 4)-\log{\frac{1}{2}}<1.
\eeq
Further, the constant $C$ in $(\mathrm{I})$ satisfies $C<11.97$. So that condition is implied by the stronger one
\[
12^{8}(4\log(2)n+2\log(2)+1)^{8}\le {{\tilde{N}}}
\] 
which we may further strengthen to 
\[
36^{8}(n+1)^{8}<10^{13}(n+1)^{8}\le N,
\]
which is the left-hand side of~\eqref{eq:nN}. Condition~$(\mathrm{II})$  holds, as does~$(\mathrm{III})$ since $\exp(4(\log {{\tilde{N}}})^{\delta_{2}})={{\tilde{N}}}^{4}\ge {{\tilde{N}}}+1$. Condition~$(\mathrm{IX})$ is implied by the stronger one
\[
{{\tilde{N}}}> \frac{1+\log R}{\log R} = \frac{2\log(2)+1}{2\log(2)} \sim 1.721
\]
which clearly holds. In view of~\eqref{eq:B56m56} and \eqref{eq:Ulambda4}, we have
\beq\label{eq:B5-m5number}
4\log(2)+0.5\leq B_4-m_4\le 4\log(2)+1.
\eeq
Condition~$(\mathrm{IV})$ will therefore hold provided 
\[
{{\tilde{N}}}^{\frac18} \log({{\tilde{N}}}) - 21 {{\tilde{N}}}^{-\frac{27}{40}} (\log({{\tilde{N}}}))^{\frac{27}{10}} - 181>0
\]
The left-hand side is increasing in ${{\tilde{N}}}$, and one checks by explicit computation that it is positive if $N\ge 10^{8}$. So this condition holds as well.  Condition~$(\mathrm{VIII})$ is implied by the following  one
\[
{{\tilde{N}}}^{-\frac{13}{10}} > \frac{2\sqrt{2}}{C(4\log(2)+0.5)} {{\tilde{N}}}^{-\frac{37}{20}} (\log({{\tilde{N}}}))^{-\frac35}
\]
Simplifying this, one obtains the stronger condition
\[
{{\tilde{N}}}^{\frac15}(\log({{\tilde{N}}}))^{\frac35}>0.08
\]
which holds provided $N\ge 2$. So condition~$(\mathrm{VIII})$ holds. 

Next, we look at condition~$(\mathrm{VI})$. Using the assumed lower bound $nL_{n}\ge a$ we find the condition 
\[
22 e^{-a/2}< 203 \, n\, {{\tilde{N}}}^{-\frac{3}{40}}  (\log({{\tilde{N}}}))^{\frac{23}{10}.}
\]
We recall that $\tilde N\in \{N,2N\}$ and estimate $\log (\tilde N)\geq \log (N)$. This inequality follows from the upper bound in \eqref{eq:nN}. 
For condition~$(\mathrm{VII})$, one checks that it follows from the slightly stronger
\[
{{\tilde{N}}}^{-\frac{13}{10}} - 5.66\exp\big( 0.374 - 0.05(\log({{\tilde{N}}}))^{2}\big)>0
\]
which holds for ${{\tilde{N}}}\ge 10^{12}$ (but it fails for $10^{11}$). Hence we impose the second lower bound in~\eqref{eq:nN}.
For condition~$(\mathrm{V})$,  we use
\[
L_{n}-L_{2n}\le \frac18L_{n}\le \frac{U(\lambda,1)}{8} \le \frac14
\]
 and so it suffices to check that
 \[
 \frac{21n}{2{{\tilde{N}}}} + 16 {{\tilde{N}}}^{-\frac{13}{10}} < 203 {{\tilde{N}}}^{-\frac{3}{40}} (\log({{\tilde{N}}}))^{\frac{23}{10}}
 \]
Bounding $n$ in terms of $N$ via~\eqref{eq:nN} and discarding the $\log {{\tilde{N}}}$ on the right-hand side reduces us to 
\[
 \frac{21}{2\cdot 10^{\frac{13}{8}}}\cdot N^{-\frac78} + 16 {{\tilde{N}}}^{-\frac{13}{10}} < 203 {{\tilde{N}}}^{-\frac{3}{40}} 
\]
This holds for all $N\ge1$ so we are done with~$(\mathrm{V})$.  Finally, we turn to~$(\mathrm{X})$ and~$(\mathrm{XI})$. Using $N\ge 10^{12}$ they hold provided 
\[
C_{4}\ge 0.03,\qquad C_{5}\ge 0.007\cdot C_{4}
\]
Our actual values assigned to these constants satisfy 
\[
{46 < C_{4} < 47,\qquad C_{5}>270}
\]
and so all conditions of Proposition~\ref{prop:main8} hold. 

As for the conclusion of that proposition, we first compute $C_{2}C_{5}<{5.5\times 10^4}$. Thus, the size of the deviations satisfy
\[
C_2 C_5  {{{\tilde{N}}}^{-\frac{1}{10}+\frac{\delta}{5}}}{(\ln{{{\tilde{N}}}})^{\frac{1}{10}+\frac{\delta_2}{5}+\delta_3+2\delta_4}}< 
{5.5\times 10^4} {{\tilde{N}}}^{-\frac{3}{40}} (\log {{\tilde{N}}})^{\frac{53}{10}}
\]
as stated in~\eqref{eq:LDTN9}.  As for the measure bound, we calculate that 
\begin{align*}
&2(2C_0)^{\frac{1}{2}} + C_{0}<10,\\
&U(\lambda,1)=\frac{1}{2}\log{\left(\left(4\lambda+2 \right)^2+2\right)}\leq \frac{1}{2}\log{38}.
\end{align*}
Thus, in view of \eqref{eq:B5-m5number}, one has, $48B_3\sqrt{2U(\lambda, 1)(B_4-m_4)}<11518$, and 
\[
{144}+11518 C_{2}^{-1}(\log {{\tilde{N}}})^{-\frac{23}{10}}\le {144}+\frac{11518}{203}(12\log 10)^{-\frac{23}{10}}<{145}.
\]
Hence the first exponential in the measure bound of Proposition~\ref{prop:main8} contributes less than 
\[
\exp\left(-\pi C_{4}(\log {{\tilde{N}}})^{\frac32}/{{145}}\right) < \exp\left(-(\log {{\tilde{N}}})^{\frac32}\right).
\]
For the second exponential, we have $18 C_4<831$, $96B_3 U(\lambda,1)\sqrt{2\log{2}}<13317$,
and 
\[
{831}+13317 C_{2}^{-1}(\log {{\tilde{N}}})^{-\frac{38}{10}}\le {831}+\frac{13317}{203} (12\log 10)^{-\frac{38}{10}}<{850}.
\]
Hence, the second exponential contributes less than
\[
\exp\left(-\pi C_{5}(\log {{\tilde{N}}})^{\frac32}/850\right)<\exp\left(-(\log {{\tilde{N}}})^{\frac32}\right),
\]
and we are done. 
\end{proof}

\subsection{Proof of Theorem \ref{thm:main}}
Let $N_0:=2\times 10^{37}$. We define a sequence of scales $N_j:=N_{j-1}^9$ for $j\geq 1$. In particular, $N_1> 5\times 10^{335}$. The proof is based on an induction on scales, where at every step we first apply Lemma \ref{lm:sequencescales} to control the Lyapunov exponent at the next scale. Afterwards, we apply Proposition \ref{prop:main9} to obtain the large deviation estimate at the next scale and then we continue the induction.

For later purposes, we note some properties of this choice of scales. The last inequality is the main reason why we need to choose the scale so that $N_1$ is large.

\be{lm}\label{lm:scales}
Recall that we defined $N_{j+1}:=N_j^9$ with $N_0=2\times 10^{37}$. For all $j\geq 1$, we have the following bounds:
\beq\label{eq:NjNj+1'}
10^{13} (N_{j-1}+1)^8\leq N_{j},\qquad \frac{N_{j}}{(\log N_j)^{\frac{92}{3}}}< \frac{1}{2}\l(\frac{203}{22} e^{7/2} N_{j-1}\r)^{\frac{40}{3}}
\eeq
as well as
\beq\label{eq:NjNj+1}
10 \exp\left(-(\log {{N_j}})^{\frac32}\right)\leq  (N_j^9)^{-2.3}=N_{j+1}^{-2.3},
\eeq
and 
\beq\label{eq:NjNj+1''}
5.5\times 10^4 {N_j}^{-\frac{3}{40}} (\log {N_j})^{\frac{53}{10}}\leq \frac{1}{20} L_{N_0}.
\eeq
\e{lm}

\be{proof}
From the definition of the $N_j$, we have
\beq\label{eq:followsfrom0}
10^{13} (N_{j-1}+1)^{8} \le N_j< N_{j-1}^{13},
\eeq
and this implies \eqref{eq:NjNj+1'}. Notice that we have $N_j\geq N_1>5\times 10^{335}$ for all $j\geq 1$. Then \eqref{eq:NjNj+1} follows from the inequality
\beq\label{eq:followsfrom}
10 \exp\left(-(\log {{x}})^{\frac32}\right)\leq (x^9)^{-2.3},
\eeq
which holds for all $x\geq 2.06\times 10^{186}$. For \eqref{eq:NjNj+1''}, we note 
$$
5.5\times 10^4 {x}^{-\frac{3}{40}} (\log {x})^{\frac{53}{10}}\leq 10^{-5}=\frac{1}{20}\times 2\times 10^{-4}\leq \frac{1}{20} L_{N_0},
$$
where the first inequality holds for all $x\geq 10^{334}$. This proves the lemma.
\e{proof}

We will inductively apply Lemma \ref{lm:sequencescales} to $j=1,2,3,...$. We begin with $j=1$. Condition (1) of Lemma \ref{lm:sequencescales} follows from our assumptions (i) and (ii) and the fact that $N_0=2\times 10^{37}$. Condition (2) with $j=1$ is fulfilled since
\beq\nn
\frac{1}{N_0}\exp{\left(-\frac{1}{2}N_0L_{N_0}\right)}\leq \frac{1}{2\times 10^{37}} <\frac{2}{512} 10^{-4}\leq \frac{1}{512}L_{N_0}.
\eeq
For condition (3), recall that $N_1>5\times 10^{335}$, $\delta=\frac{1}{8}$ and $C_3=U(\lambda, 1)\leq \frac{1}{2}\log{38}$.
Then we have
\beq\nn
\begin{aligned}
N_1^{-\frac{13}{10}}<(5\times 10^{335})^{-1.3}<\frac{1}{320 \log{38}} 10^{-4}\leq \frac{1}{1280 U(\lambda,1)} L_{N_0}.
\end{aligned}
\eeq
For condition (4),
\beq\label{eq:BN0}
\max{(|\mathcal{B}_{N_0}|, |\mathcal{B}_{2N_0}|)}\leq N_0^{-21}=(N_1)^{-7/3}\leq N_1^{-2.3}.
\eeq
Hence Lemma \ref{lm:sequencescales} applies to $j=1$, and yields
\beq\label{eq:LN1}
N_1 L_{N_1}\geq 7,\qquad  L_{N_1}-L_{2N_1}\leq \frac{1}{8}L_{N_1},
\eeq
and
\beq\label{eq:LN1'}
L_{2N_1}\geq \frac{1}{2}L_{N_0}.
\eeq
We would like to apply Lemma \ref{lm:sequencescales} for $j=2$. This requires measure estimates for $\mathcal{B}_{N_1}$ and $\mathcal{B}_{2N_1}$. To this end, we invoke Proposition \ref{prop:main9} with $n=N_0$, $N=N_1$ and $a=7$. Let us check that its conditions are satisfied. First, we have \eqref{eq:nN} by applying \eqref{eq:NjNj+1'} with $j=0$. Moreover, conditions (a)-(c) hold by assumptions (i)-(iii) and \eqref{eq:BN0}. Hence, we can apply Proposition \ref{prop:main9} and obtain that, for $\tilde{N}=N_1$ and $2N_1$,
$$
\begin{aligned}
&\l| \Big\{  (x,y)\in \T^2:\ |v_{{\tilde{N}}}(x,y)-L_{{\tilde{N}}} |>  5.5 \times 10^4 {{\tilde{N}}}^{-\frac{3}{40}} (\log {{\tilde{N}}})^{\frac{53}{10}} \Big\}  \r|
\\
\leq& 10\exp{\left(-(\log{N_1})^{2.3}\right)}
\leq N_2^{-2.3}.
\end{aligned}
$$
In the second step, we used \eqref{eq:NjNj+1} with $j=1$. To turn this into measure estimates for $\mathcal{B}_{\tilde N}$, notice that \eqref{eq:NjNj+1''} and \eqref{eq:LN1'} imply 
\beq\nn
5.5 \times 10^4 {{\tilde{N}}}^{-\frac{3}{40}} (\log {{\tilde{N}}})^{\frac{53}{10}}\leq \frac{1}{20}L_{N_0}\leq \frac{1}{10}L_{\tilde{N}}.
\eeq
Therefore,
\beq\label{eq:BN1'}
\max{(|\mathcal{B}_{N_1}|, |\mathcal{B}_{2N_1}|)}\leq N_2^{-2.3}.
\eeq
We have shown how to pass from scale $N_0$ to $N_1$ via Lemma \ref{lm:sequencescales} and Proposition \ref{prop:main9}, by using the properties \eqref{eq:NjNj+1'}-\eqref{eq:NjNj+1''}.

 We can now iterate this procedure: We apply Lemma \ref{lm:sequencescales} with $j=2$. The main input is the measure estimate \eqref{eq:BN1'}, which verifies condition (4). The remaining conditions hold by our choice of scales, \eqref{eq:LN1}, \eqref{eq:LN1'} and assumptions (i) and (ii). (Notice that the sums in conditions (2) and (3) are rapidly convergent.) From Lemma \ref{lm:sequencescales}, we obtain estimates of $L_{N_2}$ and $L_{2N_2}$, in particular $L_{2N_2}\geq\frac{1}{2}L_{N_0}$. Then Proposition \ref{prop:main9} yields the measure estimates for $\mathcal{B}_{N_2}$ and $\mathcal{B}_{2N_2}$, which is the key input for Lemma \ref{lm:sequencescales} with $j=3$, etc. We conclude that, after $k$ steps of this procedure, we have
\beq\nn
L_{2N_k}\geq \frac{1}{2}L_{N_0}.
\eeq
This yields
\beq\nn
L\geq \frac{1}{2}L_{N_0},
\eeq
by taking $k\rightarrow\infty$, and we have proved Theorem \ref{thm:main}.
\qed

\subsection{Proof of Theorems \ref{thm:main2}}
We follow the general line of argumentation of Theorem \ref{thm:main}. The only difference is that in the sequence of scales $N_j$, we take the first step to be very large. Namely, while $N_0=3\times 10^5$, we define
\beq\label{eq:samescales}
N_1:=3\times 10^{334},\qquad N_{j+1}:=N_{j}^9,\,\,\forall j\geq 1.
\eeq
Notice that for $j\geq 1$, the scales $N_j$ are essentially the ones used in the proof of Theorem \ref{thm:main} above. Therefore we have the following analog of Lemma \ref{lm:scales}, in which \eqref{eq:NjNj+1} for $j=1$ is replaced 

\be{lm}\label{lm:scales'}
We have
\beq\label{eq:newNjNj+1'}
10^{13} (N_{0}+1)^8\leq N_{1},\qquad \frac{N_{1}}{(\log N_1)^{\frac{92}{3}}}< \frac{1}{2}\l(\frac{203}{22} e^{30} N_{0}\r)^{\frac{40}{3}}.
\eeq
Moreover, for all $j\geq 1$, we have the bounds
\beq\label{eq:NjNj+1'analog}
10^{13} (N_{j}+1)^8\leq N_{j+1},\qquad \frac{N_{j+1}}{(\log N_{j+1})^{\frac{92}{3}}}< \frac{1}{2}\l(\frac{203}{22} e^{7/2} N_{j}\r)^{\frac{40}{3}},
\eeq
as well as \eqref{eq:NjNj+1} and \eqref{eq:NjNj+1''}.
\e{lm}

Except for \eqref{eq:newNjNj+1'}, the bounds are only concerned with $N_j$, $j\geq 1$ and therefore follow in the same way as for Lemma \ref{lm:scales'}. The new bound \eqref{eq:newNjNj+1'} follows from
$$
\l(2\frac{N_{1}}{(\log N_1)^{\frac{92}{3}}}\r)^{\frac{3}{40}} \frac{1}{\frac{203}{22} e^{30}}<29974<N_0.
$$
This establishes Lemma \ref{lm:scales'}. As before, we will successively apply Lemma \ref{lm:sequencescales} and Proposition \ref{prop:main9} and iterate. We begin by applying Lemma \ref{lm:sequencescales} with $j=1$. Condition (1) is immediate from assumption (i) and $N_0=3\times 10^5$; indeed:
\beq\label{eq:initialnLn}
N_0L_{N_0}\geq 2N_0 10^{-4}=60.
\eeq
We can use this inequality to verify condition (2) as well:
\beq\label{eq:firstterm}
\frac{1}{N_0} e^{-\frac{N_0L_{N_0}}{2}}\leq \frac{1}{3}10^{-5} e^{-30}<10^{-18}< \frac{2}{512} 10^{-4}< \frac{L_{N_0}}{512}.
\eeq
Condition (3) holds by our choice of $N_1$. Finally, condition (4) holds by assumption (iii):
\beq\label{eq:waschecked}
\max{(|\mathcal{B}_{N_0}|, |\mathcal{B}_{2N_0}|)}\leq N_0^{-141}< (3\times 10^{334})^{-2.3}=N_1^{-2.3}.
\eeq
Hence, Lemma \ref{lm:sequencescales} applies and yields \eqref{eq:LN1} and \eqref{eq:LN1'} as before. Next, we verify the assumption of Proposition \ref{prop:main9} with $n=N_0$ and $N=N_1$. The key difference is that we now take $a=60$. 
This is made possible by \eqref{eq:initialnLn}, since it verifies condition (a) of Proposition \ref{prop:main9}. Condition (b) is immediate from assumption (ii) and condition (c) was checked in \eqref{eq:waschecked}. The bounds \eqref{eq:nN} hold by \eqref{eq:newNjNj+1'}. Therefore, we can apply Proposition \ref{prop:main9}. Combining the resulting estimate with \eqref{eq:NjNj+1''} for $j=1$,  \eqref{eq:LN1'} and \eqref{eq:NjNj+1} for $j=1$, we obtain the measure estimate
$$
\max{(|\mathcal{B}_{N_1}|, |\mathcal{B}_{2N_1}|)}\leq N_2^{-2.3}.
$$
At this point, we have moved completely from scale $N_0$ to scale $N_1$ and can follow the argument from Theorem \ref{thm:main} verbatim. In particular, we take $a=7$ in every subsequent application of Proposition \ref{prop:main9}. The only difference is the $m=0$ term in condition (2) of Lemma \ref{lm:sequencescales}, which now involves $N_0=3\times 10^{-5}$. By \eqref{eq:firstterm}, we can replace condition (2) by the stronger bound
$$
\sum_{m=1}^{j-1} N_m^{-1} e^{-\frac{1}{2} N_mL_{N_m}}<\frac{10^{-4}}{256}-10^{-18}
$$
and this holds by our choice of scales and the estimates \eqref{eq:LN1'} along the induction (notice again the rapid convergence of the series). We conclude that
$$
L\geq \frac{1}{2}L_{N_0}
$$
and this proves Theorem \ref{thm:main2}.
\qed

\subsection{Proof of Theorem \ref{thm:main3}}
Again, we follow the same steps for a different sequence of scales. We have $N_0=3\times 10^4$. We define the sequence of scales $N_j$, $j\geq 1$ by 
$$
N_1:=3\times 10^{320}, \qquad N_{j+1}:=N_{j}^9,\,\,\forall j\geq 1.
$$
We still have Lemma \ref{lm:scales'} for this choice of scales. Indeed, \eqref{eq:NjNj+1'analog} and \eqref{eq:NjNj+1} still follow from the inequalities  \eqref{eq:followsfrom0} and \eqref{eq:followsfrom} given in the proof of Lemma \ref{lm:scales}. For \eqref{eq:NjNj+1''}, we now use assumption (i) to find
$$
5.5\times 10^4 {x}^{-\frac{3}{40}} (\log {x})^{\frac{53}{10}}\leq 10^{-4}=\frac{1}{20}\times 2\times 10^{-3}\leq \frac{1}{20} L_{N_0},
$$
where the first inequality holds for all $x\geq 10^{320}$, so in particular for all $N_j$ with $j\geq 1$. Finally, \eqref{eq:newNjNj+1'} follows from
$$
\l(2\frac{N_{1}}{(\log N_1)^{\frac{92}{3}}}\r)^{\frac{3}{40}} \frac{1}{\frac{203}{22} e^{30}}<2938<N_0.
$$
This establishes Lemma \ref{lm:scales'} for the new choice of scales.

Next we check the hypotheses for Lemma \ref{lm:sequencescales} with $j=1$. Condition (1) is immediate from assumption (i) and $N_0=3\times 10^4$:
\beq\label{eq:initialLn'}
N_0L_{N_0}\geq 2N_0 10^{-3}=60
\eeq
(Compare this to \eqref{eq:initialnLn}.) Condition (2) holds by
\beq\label{eq:firstterm'}
\frac{1}{N_0} e^{-\frac{N_0L_{N_0}}{2}}\leq \frac{1}{3}10^{-4} e^{-30}<10^{-17}< \frac{2}{512} 10^{-3}< \frac{L_{N_0}}{512},
\eeq
where we used \eqref{eq:initialLn'}. Condition (3) holds by our choice of scales $N_j$, $j\geq 1$, and condition (4) holds by assumption (iii):
\beq\label{eq:waschecked'}
\max{(|\mathcal{B}_{N_0}|, |\mathcal{B}_{2N_0}|)}\leq N_0^{-165}< (3\times 10^{320})^{-2.3}=N_1^{-2.3}.
\eeq
Therefore we can apply Lemma \ref{lm:sequencescales} and obtain \eqref{eq:LN1} and \eqref{eq:LN1'}. As in the proof of Theorem \ref{thm:main2}, the first application of Proposition \ref{prop:main9} utilizes $a=60$. This is made possible by \eqref{eq:initialnLn}, since it verifies condition (a) of Proposition \ref{prop:main9}. Now we iterate the argument in the same way as was done for Theorem \ref{thm:main} and \ref{thm:main2}. (Notice that the series in conditions (2) and (3) of Lemma \ref{lm:sequencescales} are still rapidly convergent.) The end result is the lower bound
$$
L\geq \frac{1}{2}L_{N_0}
$$
and Theorem \ref{thm:main3} is proved.
\qed

\end{document}